\documentclass{lmcs}

\keywords{Two-Player Games, omega-Regular Objectives, Coverage, Planning}

\usepackage{hyperref}
\usepackage{xspace}
\usepackage{makecell}
\usepackage{booktabs}
\newcommand{\G}{{\mathcal G}}
\newcommand{\F}{\mathcal{F}}

\newcommand{\Nat}{\mathbb{N}}
\newcommand{\bft}{{\bf true}\xspace}
\newcommand{\bff}{{\bf false}\xspace}
\newcommand{\buchi}{B\"uchi\xspace}
\newcommand{\zug}[1]{\langle #1 \rangle}

\newcommand{\pone}{\mbox{Player~$1$}\xspace}
\newcommand{\ptwo}{\mbox{Player~$2$}\xspace}
\newcommand{\pcov}{\mbox{Coverer}\xspace}
\newcommand{\pdis}{\mbox{Disruptor}\xspace}

\newcommand{\playeri}{Player~$i$\xspace}

\newcommand{\npconp}{\mbox{$\Sigma_2^{\text{\sc P}}$}\xspace}
\newcommand{\outcome}{{\sf outcome}}
\newcommand{\succesor}{{\sf succ}}
\newcommand{\sat}{{\sf sat}}

\newcommand{\tuple}[1]{\langle #1 \rangle}
\newcommand{\set}[1]{\{ #1 \}}

\begin{document}

\title{Coverage Games}
\titlecomment{{\lsuper*}A preliminary version appeared
in the proceedings of CONCUR 2025, the 36th International Conference on Concurrency Theory.}
\thanks{Research supported by European Research Council, Advanced Grant ADVANSYNT}

\author[O.~Kupferman]{Orna Kupferman\lmcsorcid{0000-0003-4699-6117}}
\author[N.~Shenwald]{Noam Shenwald\lmcsorcid{0000-0003-1994-6835}}

\address{The Hebrew University, School of Computer Science and Engineering, Jerusalem, Israel}
\email{orna@cs.huji.ac.il, noam.shenwald@mail.huji.ac.il}

\begin{abstract}
  \noindent We introduce and study coverage games -- a novel framework for multi-agent planning in settings in which a system operates several agents but do not have full control on them, or interacts with an environment that consists of several agents.

The game is played between a coverer, who has a set of objectives, and a disruptor. The coverer operates several agents that interact with the adversarial disruptor.
The coverer wins if every objective is satisfied by at least one agent. Otherwise, the disruptor wins.

Coverage games thus extend traditional two-player games with multiple objectives by allowing a (possibly dynamic) decomposition of the objectives among the different agents. They have many applications, both in settings where the system is the coverer (e.g., multi-robot surveillance, coverage in multi-threaded systems) and settings where it is the disruptor (e.g., prevention of resource exhaustion, ensuring non-congestion).

We first study the theoretical properties of coverage games, including determinacy, and the ability to a priori decompose the objectives among the agents. We then study the problems of deciding whether the coverer or the disruptor wins. Besides a comprehensive analysis of the tight complexity of the problems, we consider interesting special cases, such as the one-player cases and settings with a fixed number of agents or objectives.
\end{abstract}

\maketitle
\section*{Introduction}

{\em Synthesis\/} is the automated construction of a system from its specification \cite{PR89a}. 
A {\em reactive\/} system interacts with its environment and has to satisfy its specification in all environments \cite{HP85}. 
A useful way to approach synthesis of reactive systems is to consider the situation as a {\em two-player game} between the system and the environment \cite{BCJ18}. 
The game is played on a graph whose vertices are partitioned between the players. Starting from an initial vertex, the players jointly move a token and generate a {\em play}, namely a path in the graph, with each player deciding the successor vertex when the token reaches a vertex she owns. The system wins if it has a strategy to ensure that no matter how the environment moves the token, the generated play satisfies an objective induced by the specification. 
For example, in {\em \buchi\/} games, the objective is given by a set $\alpha$ of vertices, and a play satisfies the objective if it visits $\alpha$ infinitely often.  

Synthesis is closely related to {\em automated planning\/} in AI \cite{TMA21}. Indeed, in  both settings, the goal is to realize a sequence of actions in order to accomplish a desired task, often under uncertainty or adversarial conditions. For example, in {\em pursuit-evasion} \cite{KA18}, a robot aims to complete a task while avoiding collisions with an adversarial robot. This setting naturally corresponds to a game between the robot and its environment, and extensive work from the synthesis community has been lifted to planning, for both finite- and infinite-horizon tasks \cite{KFP07,FJK10,ESK14,RPFK15,dGV15, JK15,LMFKKV16,CBMM18,MK20,DDTVZ21,GK21}. 
The game setting enables the modeling of uncertainty or adversarial environments, dealing with unknown terrains, partial observability, or limited sensing capabilities. 

Both synthesis and planning often involve {\em multiple objectives}. For example, in synthesis, generalized objectives such as {\em generalized \buchi} \cite{CDHL16}, {\em generalized parity} \cite{CHP07b}, {\em generalized reachability} \cite{FH10}, and {\em generalized reactivity} (GR(1)) \cite{PPS06,CDHL16} specify conjunctions of underlying objectives that the system must satisfy. Research on multi-objective games studies the impact of this conjunction on memory requirements, algorithmic complexity, and strategy construction. Such objectives have also been studied in a variety of game models, including concurrent, stochastic, energy, and weighted games \cite{BBMU12,CKK15,CDHR10,VCDHRR15,KS24}.

In planning, multiple objectives also naturally arise, especially in multi-agent systems \cite{TMA21}. The agents typically represent robots, drones, or autonomous vehicles that perform tasks such as terrain exploration or handling of requests at various locations \cite{Cho01,TMA21,CPLSK21,SB12,TPKR21}. For example, in {\em adversarial patrolling} \cite{LAK19}, a team of robots must detect intrusions into a guarded area, and in the {\em offline-coverage} problem \cite{AAK19}, robots must visit every point in a work area — typically within minimal time \cite{CKK15,TMA21,CPLSK21}.
In the context of planning from temporal specifications, researchers have studied settings in which several robots or components act cooperatively toward a shared objective \cite{MK20,GK21,ESK14}. For example, in \cite{MK20}, decentralized synthesis methods are used to coordinate a robot swarm so as to satisfy a global LTL specification, assuming full control and a static environment. Similarly, \cite{GK21} considers multi-robot synthesis from {\em Signal Temporal Logic} specifications using event-triggered control. Each robot is assigned its own task, and thus there is no strategic cooperation among the robots.

We introduce and study {\em coverage games} -- a framework for reasoning about planning tasks in which the system does not have full control over the agents. Formally, a coverage game is a two-player game with a set $\beta = \set{\alpha_1,\ldots,\alpha_m}$ of objectives, in which one player -- the covering player (named {\em \pcov}) operates a number $k$ of agents that together aim to cover all the objectives in $\beta$. In the beginning of the game, $k$ tokens are placed on the initial vertex. Each agent is responsible for one token, and as in standard two-player games, the interaction of each agent with the second player -- the disrupting player (named {\em \pdis}) generates a play in the graph. 
The {\em outcome} of the game is then a set of $k$ plays — one for each agent. The goal of \pcov is for these plays to cover all the objectives in $\beta$. That is, for all $1 \leq l \leq m$, there is an agent $1 \leq i \leq k$ such that $\alpha_l$ is satisfied in the play generated by the interaction of Agent~$i$ with \pdis. 

A {\em covering strategy} for \pcov is a vector of strategies, one for each agent, that ensures that no matter how \pdis behaves, each objective is satisfied in at least one play in the outcome. 
In the {\em coverage problem}, we are given a game graph $G$, a set $\beta$ of objectives, and a number $k \geq 1$ of agents, and we have to decide whether \pcov has a covering strategy. 

Note that when $k=1$, a coverage game coincides with a standard two-player game with multiple objectives. Also, it is not hard to see that when $k \geq m$, namely there are more agents than objectives, then a dominant strategy (in the game-theoretic sense) for \pcov allocates each objective in $\beta$ to a different agent. 
The interesting cases are when $1 < k < m$, in which case the objectives in $\beta$ need to be partitioned, possibly dynamically, among the agents. 

Coverage games significantly extend the settings addressed in current studies of planning. 
Indeed, in all examples discussed above, either there is a single agents, or the agents are under the full control of the designer (that is, no adversarial environment), or the specifications are pre-assigned to the different agents. Coverage games enable reasoning about multi-agent systems with a shared set of objectives, no a-priori assignment of objectives to agents, and no full control on their behavior.

For example, using coverage games with \buchi underlying objectives, one can reason about {\em multi-robot surveillance}, where we need to ensure that each set of critical locations is visited infinitely often in at least one robot's patrol route. There, vertices owned by \pdis correspond to locations in which we cannot control the movement of the robots, as well as positions in which the robots may encounter obstacles or experience a change in the landscape. Then, in {\em cyber-security systems}, the agents represent defense mechanisms, and the goal is to ensure that each set of potential attack vectors is mitigated infinitely often by at least one defense, countering an adversarial hacker. As another example, in {\em multi-threaded systems}, the agents are the processes, and we may want to ensure that each set of critical resources is accessed infinitely often, in all environments. 
Likewise, in {\em testing}, we want to activate and cover a set of functionalities of a software in all input sequences \cite{CWH21}.   

Unlike traditional two-player games, where the environment can be viewed as a system that aims to realize the negation of the specification, in coverage games \pdis's objective is not to cover the negation of the objectives in $\beta$, but rather to prevent \pcov from covering all the objectives in $\beta$. 
Formally, a {\em disrupting strategy} for \pdis is a strategy such that for all strategies of \pcov, there is at least one objective in $\beta$ that is not satisfied in each of the $k$ generated plays. Note that while \pcov activates several agents, \pdis follows a single strategy. 
This corresponds to environments that model the physical conditions of a landscape or servers whose response to users depends only on the interaction and not on the identity of the user. 
Then, in the {\em disruption problem}, we are given $G$, $\beta$, and $k$, and we have to decide whether \pdis has a disrupting strategy. 
While two-player games are determined, in the sense that in all games, one of the two players has a winning strategy, we are going to see that coverage games need not be determined, thus the coverage and disruption problems do not complement each other.

Viewing \pdis as the system and the agents as operated by the environment further motivates the study of coverage games. For example, in {\em intrusion detection\/} systems, the agents of the environment are potential intruders, and the system must ensure that at least one set of critical access points remains blocked. In {\em cloud computing}, the environment consists of client processes competing for resources, and the system's goal is to prevent resource exhaustion. Finally, in {\em traffic management}, the environment agents are the vehicles, and the system ensures that at least one route remains uncongested. 
Note that now, \pdis following a single strategy corresponds to systems like traffic controllers or vending machines, where the same policy is applied to all cars or customers that exhibit identical behavior.

A model related to coverage games from the point of view of the disruptor is 
{\em population games\/} \cite{BDGGG18}. These games focus on controlling a homogeneous population of agents that have the same non-deterministic behavior (see also \cite{CFO21,GMT24}). The edges of a two-player population game graph are labeled by actions. The game is nondeterministic, thus different edges that leave the same vertex may be labeled by the same action. As in the disruption problem, each agent moves a token along the graph. In each turn, a controller chooses the same action for all agents, and each agent decides how to resolve nondeterminism and proceeds to a successor vertex along an edge labeled with the action. 
The goal of the controller is to synchronize all the agents to reach a target state. 

The technical details of population games are very different from those of coverage games. Indeed, there, the controller takes the same action in all of its interactions with the agents, and the type of objectives are different from the coverage and disruption objectives in coverage games.  

We study the coverage and disruption problems in coverage games with {\em \buchi} and {\em co-\buchi} objectives. We first study the theoretical aspects of coverage games. We warm up with games in which the number of agents enables an easy reduction to usual two-player games (in particular, when $k \geq |\beta|$, the setting corresponds to $|\beta|$ two-player games with a single objective), and continue to {\em one-player coverage games}, where all vertices are owned by one player. 
In particular, when all vertices are owned by \pcov, the coverage problem boils down to finding $k$ paths that cover all the objectives in $\beta$, and our complexity results are aligned with the NP-hardness of different variants of the {\em coverage path planning} problem for robot swarms \cite{AAK19}.

We then continue to the general setting, show that it is undetermined (that is, possibly neither \pcov nor \pdis has a coverage or disruption strategy), and examine the problem of an {\em a-priori decomposition\/} of the objectives in $\beta$ among the agents. 
We show that the objectives cannot be decomposed in advance, and characterize covering strategies for \pcov as ones in which all agents satisfy all objectives as long as such a decomposition is impossible. The characterization is the key to our upper bounds for the complexity of the coverage and disruption problems. 

The coverage and disruption problems have three parameters: the game graph $G$, the set $\beta$ of objectives, and the number $k$ of agents. 
In typical applications, the game graph $G$ is much bigger than $k$ and $\beta$. It is thus interesting to examine the complexity of CGs in terms of the size of $G$, assuming the other parameters are fixed. We provide a comprehensive {\em parameterized-complexity\/} analysis of the problems. Note that fixing $G$ also fixes $\beta$. Also, as cases with $k \geq |\beta|$ are easy, fixing $\beta$ essentially fixes $k$. 
Accordingly, we study settings in which only $k$ or $|\beta|$ are fixed. 
Our results are summarized in Table~\ref{table results}. All the complexities in the table are tight. 

\begin{table}[htb]
\centering
\begin{tabular}{c ccc ccc}
\toprule
 & \multicolumn{3}{c}{B\"uchi} & \multicolumn{3}{c}{co-B\"uchi} \\
\cmidrule(lr){2-4} \cmidrule(lr){5-7}
fixed: & - & $k$ & $|\beta|$ & - & $k$ & $|\beta|$ \\
\midrule
Coverage
 & PSPACE & NP & PTIME
 & PSPACE & NP & PTIME \\
 & {\footnotesize Ths.~\ref{decide cover buchi upper},~\ref{decide cover lower}}
 & {\footnotesize Th.~\ref{2agent cover buchi np hard}}
 & {\footnotesize Th.~\ref{fixed beta coverage}}
 & {\footnotesize Ths.~\ref{decide cover buchi upper},~\ref{decide cover lower}}
 & {\footnotesize Th.~\ref{2agent cover buchi np hard}}
 & {\footnotesize Th.~\ref{fixed beta coverage}} \\
\midrule
Disruption
 & $\npconp$ & NP & PTIME
 & $\npconp$ & $\npconp$ & PTIME \\
 & {\footnotesize Ths.~\ref{decide disrupt buchi upper},~\ref{decide disrupt lower}}
 & {\footnotesize Th.~\ref{2agent disrupt buchi np-c}}
 & {\footnotesize Th.~\ref{fixed beta disruption}}
 & {\footnotesize Th.~\ref{decide disrupt CGC upper},~\ref{decide disrupt lower}}
 & {\footnotesize Th.~\ref{disrupt co-buchi fixed k}}
 & {\footnotesize Th.~\ref{fixed beta disruption}} \\
\bottomrule
\end{tabular}
\caption{The complexity of the coverage and disruption problems. All complexities are tight.}
\label{table results}
\end{table}


For both Büchi and co-Büchi objectives, the general problem is PSPACE-complete for coverage and $\npconp$-complete for disruption, in contrast to the PTIME complexity of standard two-player games with AllB (and the dual ExistsC) and AllC (and the dual ExistsB) objectives. For coverage, fixing the number $k$ of agents reduces the complexity to NP, and fixing the number $|\beta|$ of objectives results in PTIME complexity. Essentially, a fixed $k$ bounds the recursion depth of the general PSPACE algorithm (Theorem~\ref{decide cover buchi upper}) by a constant, whereas a fixed $|\beta|$ enables the elimination of the nondeterminism in the algorithm. 

Note that for the coverage problem, the complexities for games with \buchi and co-\buchi objectives coincide. For the disruption problem, the picture is different. While the joint complexity as well as the complexity for a fixed $|\beta|$ coincide for \buchi and co-\buchi, fixing $k$ only leaves the complexity $\npconp$ for co-\buchi objectives and takes it down to NP in the \buchi case. This is due to the NP-hardness of one-player coverage game with underlying co-\buchi objectives, which already holds for a fixed number of agents.  

From a technical point of view, for the upper bounds, the main challenging results are the characterization of the decomposability of objectives and the recursive algorithm it entails (Theorems~\ref{covering str characterization} and~\ref{decide cover buchi upper}), as well as the ability to work with a short symbolic representation of strategies that are not memoryless (Lemma~\ref{disrupting iff fair}). For the lower bounds, the main technical issue is utilizing the agents to model satisfying assignments in Boolean satisfaction problems, and the most technically-involved results are for the cases with a fixed number of agents (Theorems~\ref{2agent disrupt buchi np-c} and ~\ref{disrupt co-buchi fixed k}). In particular, for co-\buchi objectives, moving from a single agent to two agents increases the complexity from PTIME to $\npconp$.

We discuss possible variants and extensions of coverage games. Beyond classical extensions of the underlying two-player games (e.g., concurrency, stochastic settings, partial visibility, etc.) and to the objectives (e.g., richer winning conditions, weighted objectives, etc.), as well as classical extensions from planning (optimality of agents and their resources), we focus on elements that have to do with the operation of several agents: communication among the agents (in our setting, the strategies of the agents are independent of each other), and the ability of \pdis to also use different agents.  
 
\section{Preliminaries}

\subsection{Two-player games}
A \emph{two-player game graph} is a tuple $G = \tuple{V_1,V_2, v_0, E}$, where $V_1$, $V_2$ are disjoint sets of vertices, owned by \pone and \ptwo, respectively, and we let $V=V_1\cup V_2$. Then, $v_0 \in V$ is an initial vertex and $E \subseteq\  V\times V$ is a total edge relation, thus for every $v\in V$, there is $u\in V$ such that $\zug{v,u}\in E$. The size of $G$, denoted $|G|$, is $|E|$, namely the number of edges in it. When we draw game graphs, the vertices in $V_1$ and $V_2$ are drawn as circles and squares, respectively.

In a beginning of a play in the game, a token is placed on $v_0$. Then, in each turn, the player that owns the vertex that hosts the token chooses a successor vertex and moves the token to it. Together, the players generate a \emph{play} $\rho=v_0,v_1,\ldots$ in $G$, namely an infinite path that starts in $v_0$ and respects $E$: for all $i \geq 0$, we have that $\tuple{v_i,v_{i+1}}\in E$. 

A {\em strategy} for \playeri is a function $f_i:V^*\cdot V_i\rightarrow V$ that maps prefixes of plays that end in a vertex owned by \playeri to possible extensions of the play. That is, for every $\rho \in V^*$ and $v \in V_i$, we have that $\tuple{v,f_i(\rho\cdot v)}\in E$. Intuitively, a strategy for \playeri directs her how to move the token, and the direction may depend on the history of the game so far. 
A strategy is {\em memoryless\/} if its choices depend only on the current vertex and are independent of the history of the play. Accordingly, we describe a memoryless strategy for \playeri by a function $f_i: V_i\rightarrow V$. 

A \textit{profile} is a tuple $\pi=\tuple{f_1,f_2}$ of strategies, one for each player. The \emph{outcome} of a profile $\pi=\tuple{f_1,f_2}$ is the play obtained when the players follow their strategies in $\pi$. Formally, $\outcome(\pi) = v_0, v_1, \ldots \in V^\omega$ is such that for all $ j \geq 0$, we have that $v_{j+1}=f_i(v_0,v_1,\ldots,v_j)$, where $i\in\set{1,2}$ is such that $v_j\in V_i$. 

A {\em two-player game\/} is a pair $\G=\zug{G,\psi}$, where $G= \tuple{V_1,V_2, v_0, E}$ is a two-player game graph, and $\psi$ is an objective for \pone, specifying a subset of $V^\omega$, namely the set of plays in which \pone wins. We discuss different types of objectives below. The game is zero-sum, thus \ptwo wins when the play does not satisfy $\psi$. A strategy $f_1$ is a {\em winning strategy} for \pone if for every strategy $f_2$ for \ptwo, we have that \pone wins in $\tuple{f_1,f_2}$, thus $\outcome(\zug{f_1,f_2})$ satisfies $\psi$. Dually, a strategy $f_2$ for \ptwo is a winning strategy for \ptwo if for every strategy $f_1$ for \pone, we have that \ptwo wins in $\tuple{f_1,f_2}$. 
We say that \playeri $ $\/{\em wins} in $\G$ if she has a winning strategy. A game is {\em determined} if \pone or \ptwo wins it. 

For a play $\rho=v_0,v_1,\ldots$, we denote by 
$\textit{inf}(\rho)$ the set of vertices that are visited infinitely often along $\rho$. That is, 
$\textit{inf}(\rho)=\set{v\in V : \text{there are infinitely many } i\geq 0 \text{ such that } v_i = v}$. For a set of vertices $\alpha\subseteq V$, a play $\rho$ satisfies the {\em \buchi} objective $\alpha$ iff $\textit{inf}(\rho)\cap \alpha\neq \emptyset$. The objective dual to 
\buchi is {\em co-\buchi}. 
Formally, a play $\rho$ satisfies 
a co-\buchi objective $\alpha$ iff $\textit{inf}(\rho)\cap\alpha = \emptyset$. 
We use $\gamma \in \set{\text{B,C}}$ to denote the different objective types. 

For a play $\rho$ and a set of objectives $\beta = \set{\alpha_1,\ldots,\alpha_m}$, we denote by $\sat(\rho, \beta)$ the set of objectives $\alpha_i \in \beta$ that are satisfied in $\rho$. An {\em All objective} is a set $\beta$ of objectives, all of the same type. A play $\rho$ satisfies an All objective $\beta$ iff $\rho$ satisfies every objective in $\beta$. That is, if $\sat(\rho, \beta) = \beta$. The objective dual to All is {\em Exists}.\footnote{All-\buchi objectives are traditionally referred to as {\em generalized \buchi}. Then, their dual Exists co-\buchi objectives are referred to as {\em generalized co-\buchi}
\cite{KPV06}. We opt to follow a terminology with All and Exists, making the quantification within the satisfaction requirement clearer.} Formally, a play $\rho$ satisfies an Exists objective $\beta$ iff $\rho$ satisfies at least one objective in $\beta$. That is, if $\sat(\rho, \beta) \cap \beta \neq \emptyset$. Note that an All-co-\buchi objective $\beta$ is equal to the co-\buchi objective $\cup \beta$.

\subsection{Two-player multi-agent coverage games}

In coverage games, \pone operates a number of agents. All agents play against \ptwo, and so the outcome of the game is a set of plays -- one for each agent. \pone has multiple objectives, and her goal is to cover all the objectives, in the sense that each objective is satisfied in at least one of the plays in the outcome. Accordingly, we refer to \pone 
as the covering player, named {\em \pcov}, and refer to \ptwo as the disrupting player, named {\em \pdis}. 

Formally, for $\gamma \in \set{\text{B,C}}$, a {\em two-player multi-agent $\gamma$-coverage game} ($\gamma$-CG, for short) is a tuple $\G = \tuple{G, k, \beta}$, where $G$ is a two-player game graph, $k\in \Nat$ is the number of agents \pcov operates, and $\beta = \set{\alpha_1,\ldots,\alpha_m}$ is a set of objectives of type $\gamma$. When $\gamma$ is not important or is clear from the context, we omit it and refer to $\G$ as a CG. 

For $k \geq 1$, let $[k]=\{1,\ldots,k\}$. A {\em strategy} for \pcov is a tuple $F_1 = \tuple{f_1^1,\ldots,f_1^k}$ of $k$ strategies for \pcov in the game graph $G$. We assume that \pdis uses the same strategy $f_2$ against all the agents of \pcov. 
A {\em profile} is a tuple $\pi = \tuple{F_1,f_2}$ of strategies $F_1 = \tuple{f_1^1,\ldots,f_1^k}$ for \pcov and $f_2$ for \pdis. The outcome of $\pi$ is the tuple of the $k$ plays generated when \pcov's agents and \pdis follow their strategies. Formally, $\outcome(\pi) = \tuple{\rho_1,\ldots,\rho_k}$, where $\rho_i = \outcome(\tuple{f_1^i,f_2})$, for every $i\in [k]$.   

We say that a profile $\pi$ {\em covers\/} $\beta$ iff every objective in $\beta$ is satisfied in at least one play in $\outcome(\pi)$. That is, if $\cup\set{\sat(\rho,\beta): \rho \in\outcome(\pi)} = \beta$. Equivalently, for every $j\in [m]$, there exists $i\in [k]$ such that $\alpha_j$ is satisfied in $\outcome(\tuple{f_1^i,f_2})$. A strategy $F_1$ for \pcov is a {\em covering} strategy in $\G$ iff for every strategy $f_2$ for \pdis, the profile $\tuple{F_1,f_2}$ covers $\beta$. 
Then, we say that \pcov wins $\G$ iff she has a covering strategy in $\G$. A strategy $f_2$ for \pdis is a {\em disrupting} strategy in $\G$ iff for every strategy $F_1$ for \pcov, the profile $\tuple{F_1,f_2}$ does not cover $\beta$. Then, we say that \pdis wins $\G$ iff she has a disrupting strategy in $\G$. 
Finally, we say that a $\gamma$-CG $\G$ is {\em determined\/} if \pcov has a covering strategy in $\G$ or \pdis has a disrupting strategy in $\G$. 

In the {\em coverage problem for CGs}, we have to decide whether \pcov has a covering strategy in a given CG. In the {\em disruption problem for CGs}, we have to decide whether \pdis has a disrupting strategy in a given CG.

\begin{exa}
\label{example}
Consider the \buchi CG $\G = \tuple{G,2,\beta}$, with $\beta = \set{\alpha_1,\alpha_2,\alpha_3}$, for the game graph $G$ appearing in Figure~\ref{fig G avoid}, and $\alpha_1 = \set{u_1,u_2,u_3}$ (red), $\alpha_2 = \set{m_1,u_2,d_3}$ (green), and 
$\alpha_3 = \set{d_1,d_2,d_3}$ (yellow). Thus, \pcov should direct two agents in a way that ensures that the three colors are visited infinitely often. 

  \begin{figure}[htb]
    \centering
    \includegraphics[width=7cm]{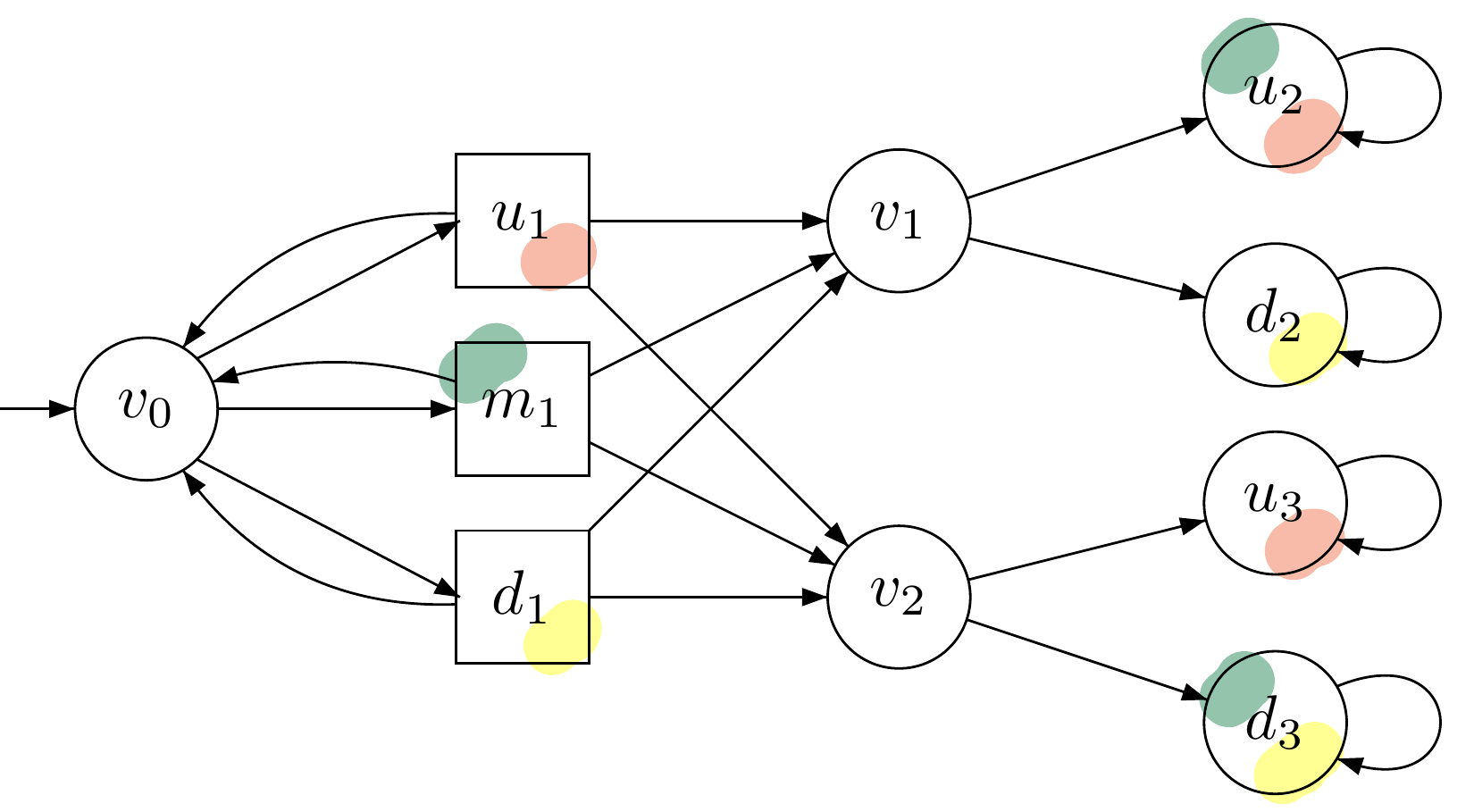}
    \caption{The \buchi CG $\G$.}
    \label{fig G avoid}
\end{figure}   

\pcov has a covering strategy in $\G$: In $v_0$, the tokens of both agents move together, visiting the three colors in a round-robin fashion.
If the play stays forever in the left sub-game, $\beta$ is covered. If \pdis decides to leave the left sub-game and directs the tokens to $v_1$ or $v_2$, the agents split $\beta$ between them. One agent covers two colors (green and red in $u_2$, in case the tokens are in $v_1$, or green and yellow in $d_3$, in case the tokens are in $v_2$), and the second agent covers one color (yellow or red, respectively). Thus, $\beta$ is covered in this case too. \hfill \qed
\end{exa}

\section{Special Cases of Coverage Games}
\label{sec easy}
As a starter, in this section we examine two special cases of CGs: games in which the number of agents enables an easy reduction to traditional two-player games, and one-player CGs, thus when one player owns all vertices.

\subsection{CGs with one or many agents}
\label{k easy}
Consider a CG $\G = \tuple{G, k,\beta}$.
When $k=1$, the CG $\G$ is equivalent to the two-player game with the All objective $\beta$. Indeed, the single play induced by the strategy of the single agent has to cover all the objectives in $\beta$. 
Thus, the case $k=1$ can be solved in PTIME, serving as a lower bound for the general case, or more precisely, as a reference point to the study of the complexity for the general case. 

When $k\geq |\beta|$, there is no need to assign more than one objective in $\beta$ to each of the agents. Formally, we have the following.   

\begin{lem}
\label{P1 k geq m wins iff she m wins}
Consider a $\gamma$-CG $\G = \tuple{G, k,\beta}$ with $\gamma\in\set{\text{B,C}}$ and $k\geq |\beta|$, and the following statements. 
\begin{description}
\item[(C1)]
\pcov has a covering strategy in $\G$.
\item[(C2)]
For every $\alpha_i \in \beta$, \pone wins the two-player $\gamma$-game $\G_i = \tuple{G,\alpha_i}$. 
\item[(C3)]
There is $\alpha_i \in \beta$ such that \ptwo wins the two-player $\gamma$-game $\G_i = \tuple{G,\alpha_i}$.
\item[(C4)]
\pdis has a disrupting strategy in $\G$.
\end{description}
Then, $(C1)$ iff $(C2)$, $(C3)$ iff $(C4)$, and $(C1)$ and $(C4)$ complement each other. 
\end{lem}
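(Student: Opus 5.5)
The plan is to prove the four implications $(C2)\Rightarrow(C1)$, $(C1)\Rightarrow(C2)$, $(C3)\Rightarrow(C4)$, and $(C4)\Rightarrow(C3)$, and then to use the determinacy of two-player \buchi and co-\buchi games. Determinacy says that, for each $i$, exactly one of \pone and \ptwo wins $\G_i$. Hence $(C2)$ and $(C3)$ complement each other. Combined with the two equivalences, this shows that $(C1)$ and $(C4)$ complement each other. In fact, $(C1)\Rightarrow(C2)$ and $(C4)\Rightarrow(C3)$ will follow almost for free once the other two directions and the trivial fact ``not both $(C1)$ and $(C4)$'' are in place.

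For $(C2)\Rightarrow(C1)$, write $\beta=\set{\alpha_1,\ldots,\alpha_m}$ with $m\leq k$, and let $g_i$ be a winning strategy for \pone in $\G_i$. Define $F_1=\tuple{f_1^1,\ldots,f_1^k}$ by $f_1^i=g_i$ for $i\leq m$, and let the remaining agents use an arbitrary strategy. Now fix any $f_2$. Since $g_i$ is winning against every strategy of \ptwo, and in particular against $f_2$, the play $\outcome(\tuple{f_1^i,f_2})$ satisfies $\alpha_i$. So every objective is satisfied by some play, and $F_1$ is covering. The fact that \pdis uses the same $f_2$ against all agents does no harm here, since each $g_i$ wins against everything.

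For $(C3)\Rightarrow(C4)$, let $h$ be a winning strategy for \ptwo in $\G_i$, and take $f_2=h$. For every $F_1$, each play $\outcome(\tuple{f_1^j,h})$ violates $\alpha_i$, so $\alpha_i$ is covered by no agent and $\beta$ is not covered. Thus $h$ is disrupting. Note that this direction does not need $k\geq|\beta|$.

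Next, $(C1)$ and $(C4)$ cannot both hold. If $F_1$ is covering and $f_2$ is disrupting, then the single profile $\tuple{F_1,f_2}$ would both cover and not cover $\beta$, a contradiction.

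Now assume $(C1)$. Then $(C4)$ fails, so by $(C3)\Rightarrow(C4)$, $(C3)$ fails as well. By determinacy of each $\G_i$, \pone wins every $\G_i$, which is $(C2)$. Symmetrically, assume $(C4)$. Then $(C1)$ fails, so $(C2)$ fails by $(C2)\Rightarrow(C1)$. Hence some $\G_i$ is not won by \pone, and by determinacy it is won by \ptwo, which is $(C3)$. Finally, $(C2)$ or $(C3)$ always holds, so $(C1)$ or $(C4)$ always holds, and exactly one of them does.

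The only nontrivial ingredient is the determinacy of \buchi and co-\buchi two-player games (Borel/Martin, or the standard attractor-based argument). The main point to handle with care is the asymmetry that \pdis commits to a single strategy against all agents, and the argument above respects it. The hypothesis $k\geq|\beta|$ is used only in $(C2)\Rightarrow(C1)$.
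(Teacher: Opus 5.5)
Your proposal is correct and follows the paper's proof. Both reduce the lemma to $(C2)\Rightarrow(C1)$ and $(C3)\Rightarrow(C4)$, using the determinacy of two-player $\gamma$-games and the fact that $(C1)$ and $(C4)$ cannot both hold, and both prove those two directions with the same constructions; you only spell out the contrapositive step that the paper leaves implicit.
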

\begin{proof}
First note that since two-player $\gamma$-games are determined, then $(C2)$ and $(C3)$ complement each other. Also, as it is impossible for both \pcov to have covering strategy and \pdis to have a disrupting strategy, $(C1)$ and $(C4)$ contradicts each other. Hence, it is sufficient to prove that $(C2)$ implies $(C1)$ and that $(C3)$ implies $(C4)$.

Assume that $(C2)$ holds. For every $\alpha_i \in \beta$, let $f_1^i$ be a winning strategy for \pcov in $\G_i = \tuple{G,\alpha_i}$. It is easy to see that $F_1 = \tuple{f_1^1,\ldots,f_1^{|\beta|}, f_1^{|\beta|+1},\ldots,f_1^{k}}$ is a covering strategy for \pcov, for arbitrary strategies $f_1^{|\beta|+1},\ldots,f_1^{k}$ for agents $|\beta|+1,\ldots,k$. Thus, \pcov has a covering strategy, and $(C1)$ holds. 

Assume now that $(C3)$ holds. Thus, there is $\alpha_i \in \beta$ such that \pdis wins the two-player $\gamma$-game $\G_i = \tuple{G,\alpha_i}$. Let $f_2$ be the winning strategy for \pdis in $\G_i$. Then, it must be that $\alpha_i\notin \sat(\rho, \beta)$ for every strategy $F_1$ for \pcov and every play $\rho \in \outcome(\tuple{F_1,f_2})$. Thus, \pdis has a disrupting strategy and $(C4)$ holds. 
\end{proof}

By Lemma~\ref{P1 k geq m wins iff she m wins}, deciding CGs with $k\geq |\beta|$ agents can be reduced to deciding $|\beta|$ two-player games, thus such CGs are determined and can be decided in PTIME \cite{Mar75,VW86a,Zie98}

When $1 < k < |\beta|$, \pcov has to partition the objectives in $\beta$ among the different agents. As we study in Section~\ref{prop cg}, this makes the setting much more challenging and interesting. 

\subsection{One-player CGs}

We proceed to one-player CGs, thus when one of the players owns all the vertices. 
Formally, $\G = \tuple{G, k, \beta}$, with $G=\zug{V_1,V_2,v_0,E}$ is such that $V_2 =\emptyset$ (equivalently, $V=V_1$) or $V_1 = \emptyset$ (equivalently, $V=V_2$). 
Note that one-player CGs are determined, in the sense that whenever \pcov does not have a covering strategy, \pdis has an (empty) disrupting strategy, and vice versa. 

We study the complexity of the coverage and disruption problems in one-player CGs. Our results are summarized in Table~\ref{one player comp}.

\begin{table}[htb]
\centering
\begin{tabular}{c ccc ccc}
\toprule
 & \multicolumn{3}{c}{B\"uchi} & \multicolumn{3}{c}{co-B\"uchi} \\
\cmidrule(lr){2-4} \cmidrule(lr){5-7}
fixed: & - & $k$ & $|\beta|$ & - & $k$ & $|\beta|$ \\
\midrule
$V=V1$
 & NP & NLOGSPACE & NLOGSPACE
 & NP & NP & NLOGSPACE \\
 & {\footnotesize Th.~\ref{V2 = emptyset}}
 & {\footnotesize Th.~\ref{V2 = emptyset and fixed k B}}
 & {\footnotesize Th.~\ref{V2 = empty fixed beta}}
 & {\footnotesize Th.~\ref{V2 = emptyset}}
 & {\footnotesize Th.~\ref{V2 = emptyset and fixed k C}}
 & {\footnotesize Th.~\ref{V2 = empty fixed beta}} \\
\midrule
$V=V2$
 & \multicolumn{6}{c}{NLOGSPACE} \\
 & \multicolumn{6}{c}{{\footnotesize Th.~\ref{V1 = emptyset}}} \\
\bottomrule
\end{tabular}
\caption{The complexity of deciding one-player CGs. All complexities are tight.}
\label{one player comp}
\end{table}


Consider first the case \pcov has full control. Thus, $G=\zug{V,\emptyset,v_0,E}$. It is easy to see that there, covering $\beta$ amounts to finding $k$ paths in $G$ such that each objective $\alpha_i \in \beta$ is satisfied in at least one path. A path in $G$ is {\em lasso-shaped} if it is of the form $p\cdot q^\omega$, with $p\in V^*$ and $q\in V^+$. The length of $p\cdot q^\omega$ is defined as $|p|+|q|$.

\begin{thm}
\label{V2 = emptyset}
The coverage problem for \buchi or co-\buchi CGs with $V = V_1$ is NP-complete. 
\end{thm}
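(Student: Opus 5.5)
We split the proof into membership in NP and NP-hardness, and treat \buchi and co-\buchi objectives uniformly where possible. Since $V=V_1$, a covering strategy is just a choice of $k$ infinite paths from $v_0$, and there is no interaction with \pdis. We may also assume $k<|\beta|$; otherwise we can use $|\beta|$ agents, one per objective, so $k$ is bounded by $|\beta|$.

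For the upper bound, the NP witness is $k$ lasso-shaped paths $p_i\cdot q_i^\omega$ of polynomial length, together with an assignment of each objective to a path that satisfies it. The main step is to show that if some path $\rho$ satisfies a set $S\subseteq\beta$ of objectives, then some lasso of length $O(|V|^2)$ also satisfies $S$. We prove this separately for the two objective types.

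In the \buchi case, $\textit{inf}(\rho)$ is contained in a strongly connected component $C$ reachable from $v_0$. For each $\alpha\in S$, $C$ contains a vertex of $\alpha$. We take a simple path from $v_0$ into $C$, followed by a cycle inside $C$ that visits one chosen vertex of each $\alpha\in S$. This cycle is a concatenation of at most $|S|+1$ simple paths inside $C$, so its length is polynomial.

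In the co-\buchi case, let $U=\textit{inf}(\rho)$. Then $U$ is strongly connected and avoids $\bigcup S$. We take a simple path from $v_0$ to $U$ and any cycle inside $U$. The cycle visits only vertices of $U$, so it avoids $\bigcup S$, and again its length is polynomial.

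With this bound, the verifier checks in polynomial time that each lasso is a path in $G$ and computes $\sat(p_i q_i^\omega,\beta)$ from the cycle $q_i$ alone. It then accepts iff the union of these sets is $\beta$.

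For the lower bound, we reduce from \textsc{Set Cover}, or equivalently from SAT; the choice of $k$ is the key design decision. We would try \textsc{Set Cover} first: given a universe $U=\{e_1,\ldots,e_m\}$, sets $S_1,\ldots,S_n$, and a budget $k$, we build a graph in which $v_0$ branches to vertices $s_1,\ldots,s_n$, each equipped with a self-loop. For \buchi objectives we let $\alpha_j=\{s_i : e_j\in S_i\}$. Each path eventually loops on some $s_i$ and satisfies exactly the objectives of elements covered by $S_i$. Hence $k$ agents cover $\beta$ iff $k$ sets cover $U$.

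For co-\buchi objectives we instead let $\alpha_j=\{s_i : e_j\notin S_i\}$. Looping forever on $s_i$ then satisfies $\alpha_j$ iff $e_j\in S_i$, so the same equivalence holds.

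The main obstacle is the short-lasso argument for \buchi, where one must check that $\textit{inf}(\rho)$ lies in a single SCC and that a cycle through all the chosen vertices exists inside it. Both facts follow from strong connectivity. The hardness reduction is routine once the gadget is fixed, and since $k$ is part of the input it leaves no gap with the fixed-$k$ results.
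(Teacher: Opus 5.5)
Your proposal is correct and follows essentially the paper's proof: membership in NP by guessing short lassos, where your SCC/inf-set argument spells out the paper's $|V|\cdot|\beta|$ length bound, and your remark that only $\min(k,|\beta|)$ lassos are needed makes explicit a detail the paper leaves implicit. For hardness you use the same gadget of self-looped sinks branching from $v_0$, with complemented objectives for co-\buchi; the paper reduces from Vertex Cover, which is just the special case of your Set Cover reduction in which each set consists of the edges incident to a vertex.
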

\begin{proof}
We start with the upper bounds. Consider a CG $\G = \tuple{G,k,\beta}$ with $V=V_1$. 
An NP algorithm guesses $k$ lasso-shaped paths in $G$ of length of at most $|V|\cdot |\beta|$, and checks that every objective $\alpha_i \in \beta$ is satisfied in at least one of them. 

Clearly, if such $k$ paths exist, then \pcov has a covering strategy. Also, checking lasso-shaped paths of length $|V|\cdot |\beta|$ is sufficient. 
Indeed, for both $\gamma\in \set{\text{B,C}}$, if there exists a path in $G$ that satisfies a subset $\beta'\subseteq \beta$ of $\gamma$ objectives, then there also exists a cycle in $G$ of length of at most $|V|\cdot|\beta'|$ that visits or avoids all the sets in $\beta'$. Since $|\beta'|\leq |\beta|$, the bound follows. 

For the lower bound, we describe (easy) reductions from the vertex-cover problem. Consider an undirected graph $G = \tuple{V,E}$, and $k\leq |V|$. Recall that a vertex cover for $G$ is a set $U \subseteq V$ such that for all edges $\set{u,v} \in E$, we have that $\set{u,v} \cap U \neq \emptyset$. For both $\gamma\in \set{\text{B,C}}$, we construct a $\gamma$-CG $\G = \tuple{G',k,\beta}$ with $V=V_1$ such that \pcov has a covering strategy in $\G$ iff there exists a vertex cover of size at most $k$ in $G$. 

The game graph $G'$ is independent of $E$ and consists of the vertices in $V$ and a new initial vertex $v_0$. The only edges are self-loops in the vertices in $V$, and edges from $v_0$ to all vertices in $V$. Accordingly, the $k$ agents of \pcov essentially choose $k$ vertices in $V$. The objectives in $\beta$ then require these $k$ vertices to be a vertex cover. For $\gamma = \text{B}$, we define $\beta = E$. That is, for every edge $e = \set{u,v}\in E$, the objective $e$ is to visit one of the vertices that correspond to $u$ or $v$ infinitely often. Accordingly, a covering strategy for \pcov induces a vertex cover, and vice versa.
For $\gamma = \text{C}$, we define $\beta = \set{V \setminus e : e\in E}$. Since each play $\rho$ is eventually trapped in a self-loop in a vertex in $V$, the obtained game coincides with the one for $\gamma = \text{B}$.

We describe the reduction formally and prove their correctness. 
We start with the case $\gamma = \text{B}$. There, $\G = \tuple{G',k,E}$, where the game graph $G' = \tuple{V',\emptyset,v_0,E'}$ has the following components.
    \begin{enumerate}
        \item $V' = V\cup \set{v_0}$, for $v_0 \notin V$.
        \item The set $E'$ of edges contains, for every $v\in V$, the edges $\tuple{v_0,v}$ and $\tuple{v,v}$. That is, from the initial vertex, \pcov chooses a vertex $v\in V$ by proceeding to the vertex that correspond to $v$ in $V'$, and then the play loops forever in $v$.  
     \end{enumerate}

We prove the correctness of the reduction. Assume first that there exists a vertex-cover $U\subseteq V$ of size $k$ in $G$. Then, the strategy $F_1$ for \pcov in $\G$ that sends a different agent to every vertex $v\in U$ is a covering strategy. Indeed, since $U$ is a vertex cover, for every edge $e=\set{u,v}\in E$ we have that $u\in U$ or $v\in U$, thus the \buchi objective $e$ is satisfied in the play in $\outcome(F_1)$ that reaches the self-looped sink $v$ or the self-looped sink $u$.

For the second direction, assume there exists a covering strategy $F_1$ in $\G$. Let $U$ be the set of $k$ self-looped sinks that the plays in $\outcome(F_1)$ reach. Then, $U$ is a vertex-cover in $G$ of size $k$. Indeed, for every edge $e = \set{u,v}\in E$, if every play in $\outcome(F_1)$ does not reach $u$ and does not reach $v$, then there does not exist a play in $\outcome(F_1)$ that satisfies the \buchi objective $e$, which is not possible since $F_1$ is a covering strategy.

For $\gamma = \text{C}$, we define the same graph $G$, with $\beta = \set{V \setminus e : e\in E}$. Since each play $\rho$ is eventually trapped in a self-loop in a vertex in $V$, the co-\buchi objective $V\setminus e$ is satisfied in $\rho$ iff the \buchi objective $e$ is satisfied in $\rho$, thus the game coincides with the one for $\gamma = \text{B}$.
\end{proof}

When \pcov owns all the vertices and the number of agents is fixed, the problem is easier for $\gamma = B$, but retains its NP-hardness for $\gamma=C$. 

Intuitively, the difference in the complexities has to do with the difficulty of finding the maximal sets of objectives that one agent can satisfy on her own. In \buchi CGs, finding these maximal sets is easy, and thus when the number of agents is fixed, finding a combination of them that covers $\beta$ can be done in NLOGSPACE. On the other hand, in co-\buchi CGs, the hardness of the problem stems from the hardness of finding those maximal sets, and so the problem if NP-hard already for a fixed number of agents.  
Formally, we have the following. 

\begin{thm}
\label{V2 = emptyset and fixed k B}
The coverage problem for \buchi CGs with $V=V_1$ and a fixed number of agents is NLOGSPACE-complete.
\end{thm}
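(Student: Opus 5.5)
The plan is to prove the upper bound and the lower bound separately.

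\emph{Upper bound.} Fix $k$ and let $G=\zug{V,\emptyset,v_0,E}$. A single agent's play $\rho$ satisfies a B\"uchi objective $\alpha$ iff $\textit{inf}(\rho)\cap\alpha\neq\emptyset$. The set $\textit{inf}(\rho)$ lies within one strongly connected component of $G$ that is reachable from $v_0$ and nontrivial, meaning it contains a cycle. Conversely, for any such component $C$, a path can reach $C$ and then visit every vertex of $C$ infinitely often. Hence the maximal sets of objectives one agent can satisfy are exactly the sets $\set{\alpha\in\beta : \alpha\cap C\neq\emptyset}$, for $C$ ranging over the reachable nontrivial SCCs.

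Consequently, \pcov wins iff there exist vertices $c_1,\ldots,c_k$ such that:
\begin{itemize}
\item each $c_i$ is reachable from $v_0$ and lies on a cycle, i.e., $c_i$ reaches itself by a path of length at least $1$;
\item every $\alpha\in\beta$ contains some vertex $u$ that lies in the same SCC as some $c_i$, i.e., $u$ and $c_i$ reach each other.
\end{itemize}
The $i$-th agent then goes to $c_i$ and tours its SCC forever.

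An NLOGSPACE procedure guesses $c_1,\ldots,c_k$, which takes $k\log|V|$ bits and hence logarithmic space since $k$ is fixed. It verifies each reachability condition with the standard nondeterministic path-guessing procedure. It then iterates over the objectives $\alpha\in\beta$ one at a time, keeping only an index. For each $\alpha$ it guesses $i\in[k]$ and $u\in\alpha$ and verifies mutual reachability of $u$ and $c_i$ nondeterministically. All these checks are conjunctions of NLOGSPACE reachability queries performed sequentially with logarithmic memory. Since NLOGSPACE handles such existential guesses followed by sequential reachability checks, the whole procedure is in NLOGSPACE.

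The one point to handle carefully is that the checks only require positive reachability certificates, never non-reachability. This holds: we only need that $u$ and $c_i$ are in the same SCC, not that some vertex is outside it, so no appeal to Immerman--Szelepcs\'enyi is needed, although it would be available anyway.

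\emph{Lower bound.} I will reduce from directed $s$--$t$ reachability, which is NLOGSPACE-hard, already with $k=1$ and a single objective. Given $G$, $s$, $t$, I make all vertices owned by \pcov and set $v_0=s$. I add a self-loop on $t$, set $\beta=\set{\set{t}}$, and ensure totality by adding a fresh sink $\bot$ with a self-loop and an edge from every vertex to $\bot$, with $\bot\notin\set{t}$. Then some play visits $t$ infinitely often iff $t$ is reachable from $s$. This reduction is clearly computable in logspace.

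The main obstacle is the correctness of the SCC characterization in the upper bound, specifically showing that one agent can satisfy any set of objectives that all intersect a common reachable nontrivial SCC. This follows because a strongly connected component with a cycle admits a closed walk through all its vertices.
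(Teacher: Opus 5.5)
Your proposal is correct and follows essentially the same route as the paper: the characterization via $k$ reachable nontrivial SCCs, guessing $k$ vertices in logarithmic space, and then, for each objective, guessing an agent index and checking a cycle through that agent's vertex that meets the objective. The differences are minor. You check explicitly that each $c_i$ lies on a cycle, which slightly tightens the paper's ``path from $v_i$ to $\alpha$ and back'' test. For hardness you reduce directly from $s$--$t$ reachability, whereas the paper cites nonemptiness of B\"uchi word automata; both are routine.
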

\begin{proof}
Consider a CG $\tuple{G,k,\beta}$ with $V=V_1$. We first argue that \pcov has a covering strategy iff $G$ has $k$ reachable, non-trivial, and strongly-connected components (SCCs) $C_1,\ldots,C_k$ such that $\bigcup_{i\in[k]}\set{\alpha_j\in \beta: \alpha_j \cap C_i \neq \emptyset} = \beta$.
Indeed, each outcome of the game can reach and stay in an SCC $C$ of $G$, where it can traverse infinitely often all the vertices in $C$. 
Since $k$ is fixed, an NLOGSPACE algorithm can thus guess $k$ non-trivial SCCs, and check that every objective $\alpha \in \beta$ intersects with one of the SCCs. 

In more details, the NLOGSPACE algorithm guesses $k$ vertices $v_1,\ldots,v_k \in V$, check that they are all reachable from $v_0$, and writes them on the tape. Since $k$ is fixed, the space required is logarithmic in $|V|$.
Then, for each objective $\alpha \in \beta$, the algorithm guesses $1\leq i\leq k$ and checks whether there exists a path from $v_i$ to $\alpha$ and back to $v_i$, which can be done in NLOGSPACE. If such a path does not exist, the algorithm rejects. If the algorithm successfully completes the check for all the objectives in $\beta$ it accepts.
 
Hardness in NLOGSPACE follows from an easy reduction from deciding one-player \buchi games, a.k.a. nonemptiness of B\"uchi word automata \cite{VW94}.
\end{proof}

While fixing the number of agents reduces the complexity of the coverage problem for $\gamma = B$, it does not have the same effect for $\gamma = C$. This marks the first difference between the two settings. Formally, we have the following.

\begin{thm}
\label{V2 = emptyset and fixed k C}
The coverage problem for co-\buchi CGs with $V=V_1$ is NP-hard, already for CGs with two agents. 
\end{thm}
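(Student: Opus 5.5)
We reduce from SAT. The reduction has to use the two agents in an essential way: by the remark before Theorem~\ref{V2 = emptyset and fixed k B}, a single agent satisfies a set $\beta'$ of co-\buchi objectives iff it can reach a cycle avoiding $\bigcup\beta'$, which is easy to check. So the hardness must come from how the objectives are split between two agents. The plan is to force one agent to encode a truth assignment and the other to choose one literal in each clause. The objectives then force every chosen literal to be true under the assignment.

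Let $\varphi$ have variables $x_1,\ldots,x_n$ and clauses $c_1,\ldots,c_m$. The game graph $G$, with $V=V_1$, has an initial vertex $v_0$ with edges into two disjoint regions $A$ and $B$.
\begin{enumerate}
\item Region $B$ (assignment) is a ring of $n$ diamond gadgets. Gadget $i$ branches from an entry vertex to $b_{x_i}$ or $b_{\neg x_i}$ and rejoins at the entry of gadget $i+1$; gadget $n$ returns to gadget $1$.
\item Region $A$ (literal selection) is a ring of $m$ gadgets. Gadget $j$ branches to a vertex $a_{j,\ell}$ for each literal $\ell\in c_j$ and rejoins at the entry of gadget $j+1$.
\end{enumerate}
Since each region is a ring, every infinite path in $A$ visits infinitely often some $a_{j,\ell}$ for every $j$, and every infinite path in $B$ visits infinitely often $b_{x_i}$ or $b_{\neg x_i}$ for every $i$. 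We take $k=2$ and the following co-\buchi objectives.
\begin{enumerate}
\item $\alpha_A = B$ and $\alpha_B = A$.
\item For each variable $x$, $\alpha_x = \set{b_x}\cup\set{a_{j,\neg x} : \neg x\in c_j}$ and $\alpha_{\neg x} = \set{b_{\neg x}}\cup\set{a_{j,x} : x\in c_j}$.
\end{enumerate}

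Correctness. The objectives $\alpha_A,\alpha_B$ force one agent to end up in $A$ and the other in $B$. Indeed, any agent entering $B$ violates $\alpha_B$, and any agent entering $A$ violates $\alpha_A$, so two agents in the same region leave one of these objectives uncovered. For the B-agent, read $x$ as true iff it visits $b_x$ infinitely often; inside a gadget the agent may alternate, but we can fix it to always take one branch without losing any objective it satisfies.
\begin{enumerate}
\item If $x$ is true, the B-agent violates $\alpha_x$, so $\alpha_x$ must be satisfied by the A-agent. This holds iff the A-agent never selects $\neg x$ infinitely often.
\item If $x$ is false, the B-agent visits $b_{\neg x}$ and hence avoids $b_x$, so it satisfies $\alpha_x$. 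Symmetrically, $\alpha_{\neg x}$ must then be covered by the A-agent, which therefore must never select $x$.
\end{enumerate}
So all objectives are covered iff every literal the A-agent selects infinitely often is true. Since the A-agent selects a literal in every clause, this means the assignment satisfies $\varphi$.

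Conversely, given a satisfying assignment, let the B-agent loop along the branches of the assignment. Let the A-agent loop selecting, in each clause, a fixed true literal. Then every objective is covered.

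The main obstacle is making the two agents genuinely interact rather than letting them split trivially. Two points need care. First, $\alpha_A,\alpha_B$ are what prevent both agents from sitting in $B$ with complementary assignments and covering all $\alpha_x,\alpha_{\neg x}$ trivially. Second, the ring structure is what prevents the A-agent from dodging clauses. I would check both points carefully, including paths that alternate branches. Membership in NP is given by Theorem~\ref{V2 = emptyset}, so this reduction completes the argument.
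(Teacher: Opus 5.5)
Your reduction is correct and is essentially the paper's: you use the same two rings (an assignment ring and a clause ring), the same pair of objectives forcing one agent into each ring, and the same literal objectives $\{b_\ell\}\cup\{a_{j,\overline{\ell}}\}$. Your ``fix one branch'' step plays the role of the paper's choice of an assignment among the literal vertices visited infinitely often. One small slip: an agent in $B$ violates $\alpha_A=B$, not $\alpha_B$, and an agent in $A$ violates $\alpha_B=A$. With the labels corrected, your conclusion that both agents cannot sit in the same region still holds.
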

\begin{proof}
We describe a reduction from $3$SAT. Consider a set of variables $X = \set{x_1,\ldots,x_n}$, and let $\overline{X} = \set{\overline{x_1},\ldots,\overline{x_n}}$. Also, let $\phi = C_1 \wedge \cdots \wedge C_m$ be a propositional Boolean formula given in $3$CNF. That is, $C_i = (l_i^1\vee l_i^2\vee l_i^3)$ with $l_i^1,l_i^2,l_i^3\in X\cup \overline{X}$, for every $i\in [m]$. We construct from $\phi$ a co-\buchi CG $\G = \tuple{G,2,\beta}$, such that $V=V_1$ and \pcov has a covering strategy in $\G$ iff $\phi$ is satisfiable.

The game graph $G$ is defined as follows (see Figure~\ref{fig sat 2 agents V=V1}). From the initial vertex, \pcov chooses between proceeding to an {\em assignment sub-graph}, and to a {\em clause sub-graph}. In the assignment sub-graph, \pcov chooses an assignment to the variables in $X$, where choosing an assignment to a variable $x$ involves choosing between a vertex that corresponds to the literal $x$, and a vertex that corresponds to the literal $\overline{x}$. \pcov then repeats this process infinitely often.

\begin{figure}[htp]
\centering
\includegraphics[width=16cm]{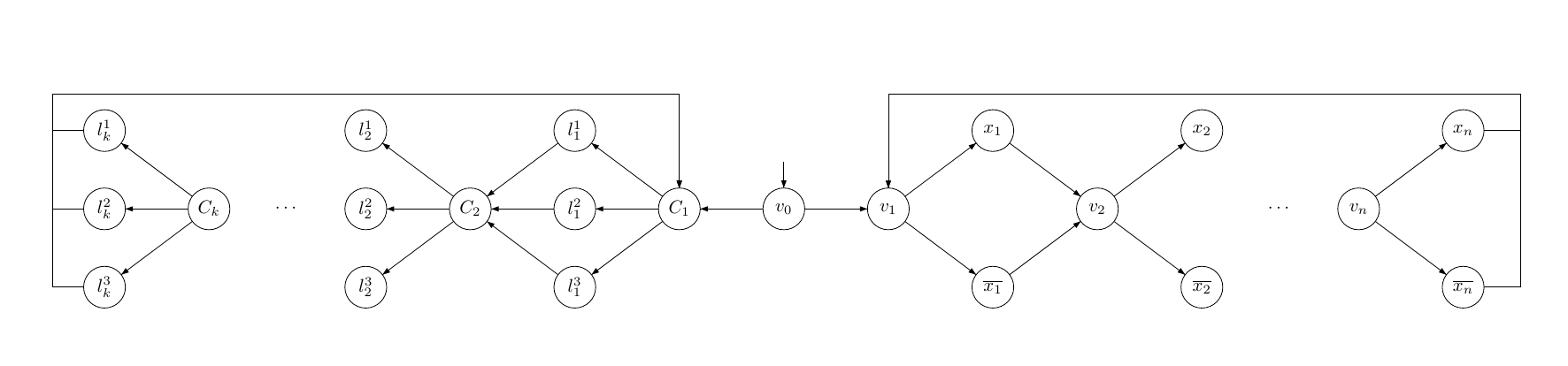}
\caption{The game graph $G$.}
\label{fig sat 2 agents V=V1}
\end{figure}

In the clause sub-graph, \pcov chooses a literal for every clause of $\phi$, which involves choosing between vertices that correspond to the literals $l_i^1, l_i^2$ and $l_i^3$, for every $i\in [m]$. As in the assignment sub-graph, \pcov then repeats this process. 

\pcov has to cover the following objectives. The first two objectives $\alpha_1$ and $\alpha_2$ are the vertices of the assignment sub-graph and the vertices of the clause sub-graph, respectively. Note that in order to satisfy both objectives, \pcov must send one agent to the assignment sub-graph, and one agent to the clause sub-graph. We call those agents the {\em assignment agent} and the {\em clause agent}, respectively. Then, for every literal $l\in X\cup \overline{X}$, we define the objective $\alpha_l$ as the set that consists of the vertex that corresponds to $l$ in the assignment sub-graph, and every vertex that corresponds to $\overline{l}$ in the clause sub-graph. 

Accordingly, the objective $\alpha_l$ is satisfied iff the assignment agent visits the vertex that corresponds to $l$ only finitely often, or the clause agent visits vertices that correspond to $\overline{l}$ only finitely often.

Intuitively, if the assignment agent chooses a satisfying assignment and the clause agent chooses literals that are evaluated to $\bft$ in the chosen assignment, then every objective $\alpha_l$ is satisfied. Hence, a satisfying assignment to $\phi$ induces a covering strategy for \pcov.
Also, if $\phi$ is not satisfiable, then for every assignment induced by the assignment agent, there is a clause in $\phi$ that is not satisfied. Then, the clause agent causes the objective of at least one of the literals of this clause to be not satisfied.  

Formally, we define $\G = \tuple{G, 2, \beta}$ as follows.
First, the game graph $G = \tuple{V_1,\emptyset,v_0,E}$ contains the following components. (see Fig~\ref{fig sat 2 agents V=V1}).
    \begin{enumerate}
        \item $V_1 = \set{v_0} \cup \set{v_i: i\in [n]} \cup X\cup \overline{X} \cup \set{C_i : i\in [m]} \cup \set{l_i^1,l_i^2,l_i^3 : i\in [m]}$. The vertices in $\set{v_i: i\in [n]}$ are {\em variable vertices}, and the vertices in $\set{C_i : i\in [m]}$ are {\em clause vertices}.

        Also, the vertices in $X\cup \overline{X}$ are {\em literal vertices}, and the vertices in $\set{l_i^1,l_i^2,l_i^3 : i\in [m]}$ are {\em clause-literal vertices}.
        \item The set of edges $E$ contains the following edges.
        \begin{enumerate}
            \item $\tuple{v_0,v_1}$.
            \item $\tuple{v_i, x_i}$ and $\tuple{v_i,\overline{x_i}}$, for every $i\in [n]$.
            \item $\tuple{x_i,v_{(i+1)\mod n}}$ and $\tuple{\overline{x_i},v_{(i+1)\mod n}}$, for every $i\in [n]$.
            \item $\tuple{v_0,C_1}$.
            \item $\tuple{C_i,l_i^j}$, for every $i\in [m]$ and $j\in \set{1,2,3}$.
            \item $\tuple{l_i^j,C_{(i+1)\mod m}}$, for every $i \in [m]$ and $j\in \set{1,2,3}$.            
        \end{enumerate}
    \end{enumerate}
Then, the set of objectives is $\beta = \set{\alpha_1,\alpha_2}\cup \set{\alpha_l : l\in X\cup \overline{X}}$, where $\alpha_1 = X\cup \overline{X}$, $\alpha_2 = \set{l_i^j : i\in [m], j\in \set{1,2,3}}$, and $\alpha_l = \set{l}\cup \set{l_i^j : i\in [m], j\in \set{1,2,3}, \text{ and }l_i^j= \overline{l}}$, for every $l\in X\cup \overline{X}$.

We prove the correctness of the construction. Note that when \pcov allocates both agents to the assignment sub-graph or both agents to the clause sub-graph, one of the objectives $\alpha_1$ and $\alpha_2$ is not satisfies in both plays, thus we only consider strategies in which \pcov allocates one agent (the assignment agent) to the assignment sub-graph, and one agent (the clause agent) to the clause sub-graph.

For the first direction, assume $\phi$ is satisfiable. Let $\xi: X\rightarrow \set{\bft,\bff}$ be a satisfying assignment, and let $F_1 = \tuple{f_1^1,f_1^2}$ be the strategy for \pcov such that $f_1^1$ is a strategy for the assignment agent that chooses from every variable vertex $v_i$ the literal vertex $x_i$ iff $\xi(x_i) = \bft$, and the literal vertex $\overline{x_i}$ iff $\xi(x_i) = \bff$; The strategy $f_1^2$ is a strategy for the clause agent that chooses from every clause vertex $C_i$ a clause-literal vertex $l_i^j$ such that $l_i^j$ is evaluated to $\bft$ in $\xi$. It is easy to see that $F_1$ is a covering strategy. Indeed, for every literal $l\in X\cup \overline{X}$, if $l$ is evaluated to $\bff$ in $\xi$, then the play in the assignment sub-graph does not visit the literal vertex $l$, thus $\alpha_l$ is satisfied in the play; and if $l$ is evaluated to $\bft$, then the play in the clause sub-graph does not visit clause-literal vertices that correspond to $\overline{l}$, thus $\alpha_l$ is satisfied in the play.

For the second direction, assume $\phi$ is not satisfiable. Consider a strategy $F_1 = \tuple{f_1^1,f_1^2}$ for \pcov, and assume WLOG that $f_1^1$ is for the assignment agent and $f_1^2$ is for the clause agent. Let $\xi:X\rightarrow\set{\bft,\bff}$ be an assignment to the variables in $X$ such that the assignment agent visits every literal vertex $l$ with $\xi(l) = \bft$ infinitely often. It is easy to see that there exists a literal $l\in X\cup \overline{X}$ with $\xi(l) = \bft$ such that the clause agent visits infinitely often clause-literal vertices that corresponds to $\overline{l}$. Indeed, otherwise $\xi$ satisfies $\phi$. Thus, $\alpha_l$ is not satisfied in both plays, and so  $F_1$ is not a covering strategy.
\end{proof}

When $|\beta|$ is fixed, the problem becomes easier also for $\gamma = C$. Intuitively, it follows from the fact that now, there is a fixed number of combinations of $k$ sets of objectives, one for each agent to satisfy on her own. Thus, finding a combination of them that cover $\beta$ can be done in NLOGSPACE. Formally, we have the following.  

\begin{thm}
\label{V2 = empty fixed beta}    
The coverage problem for \buchi or co-\buchi CGs with $V=V_1$ and fixed $|\beta|$ is NLOGSPACE-complete.
\end{thm}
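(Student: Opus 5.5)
With $|\beta|$ fixed, there are only constantly many subsets of $\beta$, so the plan is to work with them directly. I would first reduce coverage to a combinatorial condition on these subsets. Then I would decide, in NLOGSPACE, which subsets a single path can satisfy. Finally I would combine the answers.

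\emph{Characterization.} Since $V=V_1$, \pcov has a covering strategy iff there are $k$ paths from $v_0$ whose satisfied sets together cover $\beta$. Call $\beta'\subseteq\beta$ \emph{realizable} if some single path from $v_0$ satisfies every objective in $\beta'$. Then \pcov wins iff there are realizable sets $\beta_1,\ldots,\beta_{k'}$ with $k'\leq\min(k,|\beta|)$ and $\bigcup_i\beta_i=\beta$. Using at most $|\beta|$ agents suffices, since each objective needs only one witness; the remaining agents can play arbitrarily. The number of candidate tuples of subsets is bounded by a constant depending only on $|\beta|$. The algorithm therefore guesses such a tuple, writing it in constant space, and verifies each $\beta_i$ for realizability one after another, reusing space.

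\emph{Realizability in NLOGSPACE.} For \buchi, $\beta'$ is realizable iff some reachable vertex $u$ lies on a cycle through $u$ that visits every $\alpha\in\beta'$. The algorithm guesses $u$, checks that $u$ is reachable from $v_0$, and then, for each $\alpha\in\beta'$ in a fixed order, guesses a path from $u$ to some vertex of $\alpha$ and back to $u$. Concatenating these loops and repeating forever visits every $\alpha\in\beta'$ infinitely often, and only $O(\log|V|)$ space is needed for $u$, the current vertex, and the index into $\beta'$. For co-\buchi, $\beta'$ is realizable iff there is a reachable cycle avoiding $\bigcup\beta'$. This is a standard NLOGSPACE check: guess $u\notin\bigcup\beta'$ reachable from $v_0$, and guess a nonempty path from $u$ back to $u$ inside $V\setminus\bigcup\beta'$. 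Membership of a vertex in $\bigcup\beta'$ can be tested on the input. The converse of each characterization is routine: a lasso-shaped witness exists, or an infinite path eventually stays in a set of infinitely visited vertices that form a strongly connected cycle structure.

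\emph{Hardness.} NLOGSPACE-hardness already holds with $|\beta|=1$ and $k=1$. Nonemptiness of a \buchi automaton (one-player \buchi games) is NLOGSPACE-hard, as invoked in Theorem~\ref{V2 = emptyset and fixed k B}. For co-\buchi, I would reduce from graph reachability: make the target $t$ a self-loop, and keep all other vertices outside a non-accepting cycle by letting $\alpha=V\setminus\{t\}$, after making non-target dead ends into sinks that lie in $\alpha$. The main obstacle I expect is ensuring that the guessed-tuple approach does not need any bound involving $k$ beyond $\min(k,|\beta|)$, together with getting the co-\buchi hardness gadget right so that totality of $E$ is preserved. Both should be manageable with the observations above.
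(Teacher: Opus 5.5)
Your proposal is correct and takes essentially the paper's route: for co-B\"uchi the paper also guesses a constant-size partition of $\beta$, checks each part with an NLOGSPACE lasso search, and gets hardness from nonemptiness of co-B\"uchi automata. The only difference is that for B\"uchi the paper just appeals to the fixed-$k$ result (Theorem~\ref{V2 = emptyset and fixed k B}), whereas you treat both objective types uniformly and make the bound $k'\leq\min(k,|\beta|)$ explicit, which cleanly covers the case where $k$ is large.
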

 \begin{proof}
The results for \buchi follow from Theorem~\ref{V2 = emptyset and fixed k B}.
We continue to co-\buchi and start with the upper bound. Consider a co-\buchi CG $\G = \tuple{G,k,\beta}$ with $V = V_1$ and fixed $|\beta|$. An NLOGSPACE algorithm guesses a partition $\beta_1,\ldots,\beta_k$ of $\beta$ and writes it on the tape. Note that since $|\beta|$ is fixed, writing the partition can be done in logarithmic space. Then, for every $i\in [k]$, the algorithm checks that there exists a lasso path that satisfies the co-\buchi objective $\cup \beta_i$, which can be done in NLOGSPACE. 

Hardness in NLOGSPACE follows from an easy reduction from nonemptiness of co-\buchi word automata \cite{VW94}.
 \end{proof}
We continue to the case all the vertices in the game are owned by \pdis. 
It is easy to see that there, the fact \pcov has several agents plays no role, as all tokens are going to traverse the same play, chosen by \pdis. Consequently, searching for a disrupting strategy for \pdis has the flavor of checking the checking the non-emptiness of nondeterministic automata with acceptance conditions that dualize generalized \buchi or co-\buchi conditions, and is easy.
 
\begin{thm}
\label{V1 = emptyset}
The disruption problem for \buchi or co-\buchi CGs with $V = V_2$ is NLOGSPACE-complete.
\end{thm}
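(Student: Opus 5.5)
The plan is to first observe that when $V=V_2$, every agent of \pcov is powerless: the strategy $f_2$ of \pdis alone determines a single play $\rho$, and all $k$ plays in $\outcome(\tuple{F_1,f_2})$ coincide with $\rho$, whatever $F_1$ is. Hence \pdis has a disrupting strategy iff there is an infinite path $\rho$ from $v_0$ in $G$ with $\sat(\rho,\beta)\neq\beta$, that is, iff some $\alpha\in\beta$ is violated along $\rho$. The number $k$ plays no role, and it suffices to check whether $G$ has a path from $v_0$ violating some single objective. (Conversely, if every path satisfies every $\alpha$, then \pcov trivially covers, consistent with the determinacy of one-player CGs.)

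For the upper bound, the NLOGSPACE algorithm guesses $\alpha\in\beta$ and then checks for a lasso path violating it. For $\gamma=\text{B}$, violating the \buchi objective $\alpha$ means visiting $\alpha$ only finitely often. So the algorithm guesses a vertex $v$ reachable from $v_0$ and a nonempty cycle through $v$ that avoids $\alpha$. For $\gamma=\text{C}$, violating the co-\buchi objective $\alpha$ means visiting $\alpha$ infinitely often. So the algorithm guesses $v\in\alpha$ reachable from $v_0$ and a cycle through $v$. Both checks are graph reachability, with vertices stored in logarithmic space, and the guess of $\alpha$ needs only an index. Correctness follows from the standard fact that if an infinite path with such an $\inf$ property exists, then a lasso with the same property exists.

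For the lower bound, I would reduce from directed-graph reachability, which is NLOGSPACE-hard, using a single agent and a single objective. Given a graph with source $s$ and target $t$, add self-loops to all vertices (or a sink) to make $E$ total, make the target $t$ a self-looped sink, and give all vertices to \pdis. For $\gamma=\text{C}$, take $\beta=\set{\set{t}}$: \pdis disrupts iff she can visit $t$ infinitely often, iff $t$ is reachable from $s$. For $\gamma=\text{B}$, take $\beta=\set{V\setminus\set{t}}$ and replace the self-loops on non-target vertices by edges into a dead self-looped vertex $d\in\alpha$ (so the graph stays total). Then a play violates $\alpha$ iff it eventually stays in $t$, iff $t$ is reachable. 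This is just nonemptiness of \buchi/co-\buchi word automata, as cited in the earlier proofs.

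The only delicate point is making the hardness gadget total without creating spurious violating paths. The sink construction handles this: every non-target vertex can only escape to $d$, which satisfies $\alpha$ forever. Everything else is routine.
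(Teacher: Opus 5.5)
Your proposal is correct and follows essentially the same route as the paper. Both rest on the observation that all tokens follow the single play chosen by \pdis, so disruption amounts to finding a path from $v_0$ that violates some $\alpha\in\beta$. Both use the NLOGSPACE algorithm that guesses $\alpha$ and a lasso whose loop avoids $\alpha$ (B\"uchi) or passes through a vertex of $\alpha$ (co-B\"uchi). Your hardness gadgets from directed reachability are an explicit version of the paper's cited reduction from nonemptiness of nondeterministic ExistsB/ExistsC word automata. They check out, including the totality fix via the dead vertex $d\in\alpha$.
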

\begin{proof}
Consider a CG $\G = \tuple{G, k, \beta}$ with $G=\zug{\emptyset,V,v_0,E}$.
Since \pdis directs all the tokens together, she has a disrupting strategy iff there is a path in $G$ that satisfies the Exists-$\tilde{\gamma}$ objective dual to $\beta$, namely a path that does not satisfy one of the objectives in $\beta$. 
We argue that the latter can be done in NLOGSPACE, for all $\gamma\in \set{\text{B,C}}$. 
Indeed, an NLOGSPACE algorithm can guess a set $\alpha \in \beta$ and then search for a lasso-shaped path that does not satisfy $\alpha$. 
In case $\gamma = \text{B}$, the vertices of the lasso loop should be disjoint from $\alpha$. In case $\gamma = \text{C}$, the algorithm guesses a vertex $v\in \alpha$ and checks that $v$ appears in the lasso loop. In both cases, it is possible to search for paths of length at most $|V|$.

This can be done in NLOGSPACE by guessing the vertices along the path one by one.

Hardness in NLOGSPACE follows from an easy reduction from the non-emptiness problem for nondeterministic ExistsC and ExistsB word automata \cite{VW94}. 
\end{proof}

\section{Properties of Coverage Games}
\label{prop cg}
In this section we study theoretical properties of CGs. We start with determinacy and show that CGs need not be determined. We then define and study the {\em decomposability} of objectives in CGs, namely the ability to a-priori partition the objectives among the agents. We argue that decomposability plays a key role in reasoning about CGs. 

\subsection{CGs need not be determined}
Recall that a CG is determined if \pcov has a covering strategy or \pdis has a disrupting strategy. We show that unlike two-player games, CGs need not be determined. Moreover, undeterminacy holds already for a CG with only two agents and three objectives. Note that this is tight, as CGs with a single agent or only two objectives belong to the $k=1$ or $k\geq |\beta|$ cases studied in Section~\ref{k easy}, and are thus determined. 

Intuitively, the undeterminacy of CGs follows from the incomplete information of \pcov's agents about all the plays in the outcome, as well as the inability of \pdis to make use of this incomplete information.

Formally, we have the following. 

\begin{thm}
\label{k < m win undetermined} 
For all $\gamma \in \set{\text{B,C}}$, we have that $\gamma$-CGs need not be determined. Moreover, undeterminacy holds already for a $\gamma$-CG with only two agents and three objectives.
\end{thm}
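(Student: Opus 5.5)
The plan is to exhibit, for each $\gamma\in\set{\text{B,C}}$, one explicit CG $\G=\tuple{G,2,\beta}$ with $\beta=\set{\alpha_1,\alpha_2,\alpha_3}$ in which neither player wins. The driving idea is the one stated before the theorem. \pdis must commit to a single strategy against both agents, and each agent's strategy sees only its own play.

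\textbf{The gadget.} Start at $v_0\in V_2$, from which \pdis moves all tokens either to $a$ or to $b$. Both $a,b\in V_1$. From $a$, \pcov can enter one of three self-looped sinks $s_{12},s_{13},s_{23}$; from $b$, she can enter one of three sinks $t_{12},t_{13},t_{23}$. For the \buchi case, let $\alpha_j$ contain the sinks whose index set contains $j$. Each sink then satisfies exactly two of the three objectives, and two agents cover $\beta$ iff they end in sinks with different index sets. Since \pcov knows which branch was taken, she can always choose two distinct sinks, so this naive gadget makes \pcov win. I will therefore refine it by making the branch choice invisible to the second decision.

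\textbf{The refinement.} Concretely, \pdis chooses at $v_0$ a ``mode'' $x\in\set{1,2,3}$, and the play then continues to a common \pcov vertex $c$, where the agent picks which pair to aim for. After that, \pdis chooses again at a vertex $d_x$ (remembering $x$ through the graph), moving to a sink that kills the pair if it ``matches'' $x$ in some way. The goal is a design with two properties:
\begin{enumerate}
\item For every \pdis strategy, there are agent choices, tailored to that fixed strategy, that cover $\beta$. So \pdis has no disrupting strategy. A deterministic $f_2$ is a fixed response, and \pcov's two agents can jointly exploit it.
\item For every pair of agent strategies $f_1^1,f_1^2$, some response of \pdis fails to cover $\beta$. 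So \pcov has no covering strategy. The quantifier order lets \pdis tailor one strategy to both agents at once.
\end{enumerate}
The key check is that the sequence ``\pcov commits at $c$, then \pdis responds'' makes the covering direction fail. The disruption direction, in turn, fails because if $f_2$ is fixed, the agents can choose at $c$ knowing $f_2$'s responses, and the two agents may pick different branches.

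\textbf{The main obstacle.} The difficulty is designing this so that both quantifier orders genuinely differ. I expect to use the asymmetry that \pdis's single strategy must answer both agents consistently on identical histories, while agents with different histories get different answers. Concretely, I will try letting the agents choose first, at $v_0\in V_1$, between two vertices $p,q$. \pdis owns $p$ and $q$ and, at each, chooses between sinks covering two of the objectives each. For \pcov, the uncoverable situation is both agents going to the same vertex. For \pdis, the risk is that agents split between $p$ and $q$. I then tune the sink labels so that every pair of \pdis choices leaves some split that covers, while every agent profile has a bad response. I will verify the two directions by enumerating the finitely many memoryless choices, which suffices here since the graph is a DAG into sinks. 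For co-\buchi, I take the same graph with objectives $\alpha'_j$ defined as the complement of $\alpha_j$ among the sinks. Every play ends in a self-loop, as in the proof of Theorem~\ref{V2 = emptyset}, so the co-\buchi game coincides with the \buchi one.
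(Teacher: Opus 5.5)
Your proposal has a genuine gap. It never exhibits an undetermined game, and the one concrete design you commit to is provably determined.

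\textbf{Why the $p,q$ gadget cannot work.} Take $v_0\in V_1$, successors $p,q\in V_2$, and sinks below $p$ and $q$. Every history reaching $p$ is $v_0p$, so a strategy of \pdis is just a pair $(s_p,s_q)$ of sinks. The only decision of \pcov is which of $p,q$ each agent enters. Write $\sat(s)$ for the set of objectives satisfied by staying in sink $s$. The split profile dominates both $(p,p)$ and $(q,q)$, since $\sat(s_p)\subseteq\sat(s_p)\cup\sat(s_q)$. Hence, if the split is not covering, some pair $(s^*_p,s^*_q)$ has $\sat(s^*_p)\cup\sat(s^*_q)\neq\beta$. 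That same pair defeats every profile of \pcov, so it is a disrupting strategy. Your two tuning requirements, ``every pair of \pdis choices leaves some split that covers'' and ``every agent profile has a bad response'', contradict each other exactly at $(s^*_p,s^*_q)$.

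\textbf{Why the mode refinement cannot work.} An agent's strategy depends on the full history of its own play. So a graph that ``remembers $x$'' also reveals $x$ to the agent at $c$. Moreover, since $v_0\in V_2$, both tokens receive the same mode. In a CG, the only information hidden from an agent is what happens in the other agents' plays.

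\textbf{The missing idea.} Place a \pcov decision in one branch and a \pdis decision in a sibling branch. The agent deciding in the first branch then cannot see what \pdis does to the other agent. The paper uses $v_0\in V_1$ with successors $v_1\in V_1$ and $v_2\in V_2$. Each of $v_1,v_2$ chooses between two self-looped sinks: $u_1,d_1$ from $v_1$, and $u_2,d_2$ from $v_2$. The objectives are $\alpha_1=\set{u_1,u_2}$, $\alpha_2=\set{d_1,d_2}$, and $\alpha_3=\set{u_2,d_2}$.
\begin{itemize}
\item Objective $\alpha_3$ forces some agent into $v_2$.
\item Sending both agents to $v_2$ yields a single sink, which misses $\alpha_1$ or $\alpha_2$.
\item So one agent must go to $v_1$ and pick the sink complementing \pdis's choice at $v_2$.
\end{itemize}
This is matching pennies, and whoever commits second wins in both quantifier orders. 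If the agent's choice at $v_1$ is fixed to $u_1$ (resp.\ $d_1$), \pdis answers $u_2$ (resp.\ $d_2$), and $\alpha_2$ (resp.\ $\alpha_1$) is missed. If $f_2(v_2)=u_2$ (resp.\ $d_2$) is fixed, \pcov sends the other agent to $d_1$ (resp.\ $u_1$) and covers $\beta$.

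Your co-\buchi step, complementing the objectives among the sinks, is sound and matches what the paper does. It only needs a working \buchi example to apply to.
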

\begin{proof}
For all $\gamma \in \set{\text{B,C}}$, we describe a $\gamma$-CG $\G = \tuple{G,2, \set{\alpha_1,\alpha_2,\alpha_3}}$ such that neither \pcov has a covering strategy, nor \pdis has a disrupting strategy. The game graph $G$ appears in Figure~\ref{fig undetermined}. 
\begin{figure}[htp]
\centering
\includegraphics[width=7cm]{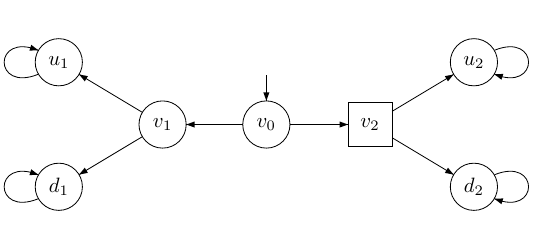}
\caption{An undetermined coverage game.}
\label{fig undetermined}
\end{figure}

The objectives $\alpha_1,\alpha_2$, and $\alpha_3$ depend on $\gamma$. 
We start with $\gamma = B$. 
There, we define $\alpha_1 = \set{u_1, u_2}$, $\alpha_2 = \set{d_1, d_2}$, and $\alpha_3 = \set{u_2, d_{2}}$.
Intuitively \pcov has no covering strategy as $\alpha_3$ requires one agent to move to $v_2$, and no matter how \pcov directs the other agent, \pdis can move the tokens that reach $v_2$ so that only one of $\alpha_1$ and $\alpha_2$ is satisfied. Also, \pdis does not have a disrupting strategy, as no matter how the strategy moves token that reach $v_2$, \pcov can direct the agents so that all objectives are covered. 

We first prove that every strategy $F_1 = \tuple{f_1^1,f_1^2}$ for \pcov is not a covering strategy in $\G$. If $f_1^1(v_0) = f_1^2(v_0) = v_1$, then $F_1$ satisfies at most $\alpha_1$ and $\alpha_2$; If $f_1^1(v_0) = f_1^2(v_0) = v_2$, then $F_1$ guarantees the satisfaction of the two objectives $\alpha_1$ and $\alpha_3$, when \pdis proceeds from $v_2$ to $u_2$, and the two objectives $\alpha_2$ and $\alpha_3$, when \pdis proceeds from $v_2$ to $d_2$; Otherwise, without loss of generality, we have that $f_1^1(v_0) = v_1$ and $f_1^2(v_0) =v_2$. If $f_1^1(v_1) = u_{1}$, then when \pdis proceeds from $v_2$ to $u_{2}$, only $\alpha_1$ and $\alpha_3$ are satisfied, and in a similar way, if $f_1^1(v_1) = d_{1}$, then when \pdis proceeds from $v_2$ to $d_{2}$, only $\alpha_2$ and $\alpha_3$ are satisfied. Therefore, $F_1$ is not a covering strategy.

We now prove that every strategy $f_2$ for \pdis is not a disrupting strategy. If $f_2(v_2) = u_{2}$, then when \pcov uses the strategy in which one agent goes to $d_{1}$ and one agent goes to $v_2$, all three objectives are satisfied; in a similar way, if $f_2(v_2) = d_{2}$, then when \pcov uses the strategy in which one agent goes to $u_{1}$ and one agent goes to $v_2$, all three objectives are satisfied. Therefore, $f_2$ is not a disrupting strategy.

When $\gamma = C$, we define $\alpha_1 = \set{u_1, u_2}$, $\alpha_2 = \set{d_1, d_{2}}$, and $\alpha_3 = \set{u_{1}, d_{1}}$. 
Since the only edges from the vertices $u_1,d_1,u_2$, and $d_2$ are self-loops, the obtained co-\buchi game coincides with the \buchi game analyzed above, and we are done.
\end{proof}

\subsection{Decomposability of objectives}
A key challenge for \pcov in winning a CG $\G = \tuple{G,k, \beta}$ with $2 \leq k < |\beta|$ is the need to partition the objectives in $\beta$ among her $k$ agents. In this section we show that in general, \pcov cannot {\em a-priori} partition $\beta$ among the $k$ agents. Moreover, decompositions of $\beta$ are related to decompositions of $G$, which are the key to our algorithms for reasoning about CGs. 

Consider a game graph $G=\zug{V_1,V_2,v_0,E}$. For a vertex $v \in V_1$, let $G^v=\zug{V_1,V_2,v,E}$ be $G$ with initial vertex $v$. Consider a CG $\G = \tuple{G,k,\beta}$, with $\beta = \set{\alpha_1,\ldots,\alpha_m}$. Let $A=[k]$. For $1\leq l \leq k$ and a vertex $v \in V_1$, we say that $\beta$ is {\em $(k,l)$-decomposable} in $v$ if \pcov can decompose the task of covering $\beta$ in $G^v$ by the agents in $A$ to $l$ sub-tasks, each assigned to a different subset of $A$. Formally, $\beta$ is $(k,l)$-decomposable in $v$ if there exists a partition $\beta_1,\ldots,\beta_l$ of $\beta$ and a partition $A_1,\ldots,A_l$ of $A$ to nonempty sets, such that for every $i\in [l]$, \pcov wins the CG $\G^v_i = \tuple{G^v,|A_i|,\beta_i}$, namely the game that starts in $v$ and in which the agents in $A_i$ have to cover the objectives in $\beta_i$. When $v=v_0$, we say that $\beta$ is {\em $(k,l)$-decomposable} in $G$.

We start with some easy observations. First, note that, by definition, $\beta$ is {\em $(k,1)$-decomposable} in $G$ iff \pcov wins $\G$. Indeed, when $l=1$, we have that $A_1=A$ and $\beta_1=\beta$, thus we do not commit on a decomposition of $\beta$, and the definitions coincide. Then, $\beta$ is {\em $(k,k)$-decomposable} in $G$ iff
there exists a partition of $\beta$ to $k$ sets $\beta_1,\ldots,\beta_k$ such that for every $1 \leq i\leq k$, \pcov has a strategy that satisfies the All objective $\beta_i$. Indeed, when $l=k$, each of the sets $A_i$ is a singleton, thus the single agent in $A_i$ has to satisfy all the objectives in $\beta_i$. Finally, the bigger $l$ is, the more refined is the partition of $\beta$ to which \pcov commits, making her task harder. Formally, we have the following (see Example~\ref{example} for a CG with no a-priori decomposition): 

\begin{thm}
\label{a priori is hard}
For all $\gamma \in \set{\text{B,C}}$, we have the following.
\begin{itemize}
\item
Consider a $\gamma$-CG $\G = \tuple{G,k,\beta}$. For every $l\in [k]$ and vertex $v$ of $G$, if $\beta$ is $(k,l)$-decomposable in $v$, then $\beta$ is $(k,l')$-decomposable in $v$, for all $l' \leq l$. 
\item For every $k\geq 2$, there exists a $\gamma$-CG $\G = \tuple{G,k,\beta}$ with $|\beta| = k+1$ such that \pcov wins $\G$ but $\beta$ is not $(k,l)$-decomposable in $G$, for all $l>1$.
\item For every $k\geq 2$ and $1\leq l < k$, there exists a $\gamma$-CG $\G = \tuple{G,k,\beta}$ such that $\beta$ is $(k,l)$-decomposable in $G$ but is not $(k,l+1)$-decomposable in $G$.
\end{itemize}
\end{thm}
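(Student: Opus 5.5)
The first item follows by merging parts of a decomposition. Suppose $\beta$ is $(k,l)$-decomposable in $v$, witnessed by $\beta_1,\ldots,\beta_l$ and $A_1,\ldots,A_l$, and fix $l'<l$. Merge $\beta_{l-1},\beta_l$ into one part and $A_{l-1},A_l$ into one agent set, and repeat until $l'$ parts remain; by induction it suffices to handle one merge. Take covering strategies $F^{l-1}$ for $\G^v_{l-1}$ and $F^{l}$ for $\G^v_l$. In the merged game the agents of $A_{l-1}$ play $F^{l-1}$ and the agents of $A_l$ play $F^l$. \pdis uses one strategy $f_2$ against all agents. Hence, for every $f_2$, the plays of $A_{l-1}$ form an outcome of $\zug{F^{l-1},f_2}$ and cover $\beta_{l-1}$, and the plays of $A_l$ cover $\beta_l$. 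So the union is covered, and the game $\zug{G^v,|A_{l-1}|+|A_l|,\beta_{l-1}\cup\beta_l}$ is won by \pcov.

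For the second item I would build a ``cyclic gadget'' game. Let $\beta=\set{\alpha_1,\ldots,\alpha_{k+1}}$ and write $\alpha_{k+2}=\alpha_1$. From $v_0\in V_2$, \pdis chooses a branch $j\in[k+1]$ and moves all tokens to a \pcov vertex $c_j$. From $c_j$, \pcov may move to a self-looped sink $w_{j,i}$ for each $i\in[k+1]$, which lies only in $\alpha_i$. She may also move to a self-looped sink $u_j$, which lies in exactly $\alpha_j$ and $\alpha_{j+1}$. For co-\buchi we use complements of these sets, as in Theorems~\ref{V2 = emptyset} and~\ref{k < m win undetermined}; since every play is trapped in a self-loop, the two variants coincide.

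\pcov wins $\G$: in branch $j$ one agent goes to $u_j$ and the remaining $k-1$ agents go to the singleton sinks of the other $k-1$ objectives. The key observation is that in branch $j$, $|A|$ agents can cover a set $\beta'$ iff $|\beta'|\le|A|$, or $|\beta'|=|A|+1$ and $\set{\alpha_j,\alpha_{j+1}}\subseteq\beta'$.

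Now take a decomposition with $l\ge 2$. Since $\sum|\beta_i|=k+1>k=\sum|A_i|$, some part satisfies $|\beta_i|\ge|A_i|+1$. By the observation, this part must contain every consecutive pair $\set{\alpha_j,\alpha_{j+1}}$, or else \pdis picks a branch $j$ whose pair is not contained in $\beta_i$. Containing every pair forces $\beta_i=\beta$. Then all other parts have empty objective sets, which contradicts the partition having $l\ge2$ parts, assuming partitions consist of nonempty sets.

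For the third item I would reuse the gadget with $k'=k-l+1\ge2$ agents and the objectives $\beta^0=\set{\alpha_1,\ldots,\alpha_{k'+1}}$. I add $l-1$ fresh objectives $\delta_1,\ldots,\delta_{l-1}$, and in every branch give \pcov an additional singleton sink for each $\delta_i$. The objectives now number $k+1$ against $k$ agents. The sets $\beta^0,\set{\delta_1},\ldots,\set{\delta_{l-1}}$ with agent sets of sizes $k',1,\ldots,1$ give an $l$-decomposition. For an $(l+1)$-decomposition, the same counting gives a part with $|\beta_i|\ge|A_i|+1$, and the cyclic argument forces $\beta^0\subseteq\beta_i$. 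The remaining $l$ nonempty parts must then be drawn from at most $l-1$ objectives $\delta$, which is impossible.

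The main obstacle is the precise statement and proof of the observation on which sets are coverable in branch $j$, together with the convention that parts of the partition of $\beta$ are nonempty. I expect the observation to follow by a direct count of which sinks lie in at least two objectives, and I would check it first.
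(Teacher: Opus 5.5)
Your proof is correct. Item~1 is the paper's merging argument. Item~2 uses the same mechanism as the paper: \pdis picks one of $k+1$ branches at the root, and each branch lets exactly one pair of objectives share a sink. Every objective lies in the doubled pair of some branch, so a proper subset of $\beta$ cannot exploit a doubling in every branch. The paper's gadget differs only in detail. It has exactly $k$ sinks per branch: one for each objective other than $\alpha_i$, with $\alpha_i$ added to one of them. It proves only $l=2$, showing that any proper $\beta'\subset\beta$ needs $|\beta'|$ agents, so summing over the two parts gives $k\ge k+1$, and then lifts the result to all $l>1$ via item~1. Your cyclic pairs, redundant singleton sinks and pigeonhole over all $l$ parts are a cosmetic variant of this counting.

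The genuine difference is in item~3. The paper places, at the root, $l-1$ singleton sinks $\{v_i\}$ and one branch into the item-2 gadget for $k'=k-l+1$ agents. It then argues that an $(l+1)$-decomposition would induce an a-priori split of $\beta'$. You instead put the $l-1$ fresh singleton objectives inside every branch and rerun the same pigeonhole.

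The paper's route is modular: any non-$2$-decomposable game can be plugged in. However, it leaves the agent bookkeeping of that reduction implicit. Also, its root must be owned by \pcov for a single agent to win $\{v_i\}$; the text says \pdis. Your route needs nothing beyond your counting observation, which holds because $u_j$ is the only sink lying in two objectives.

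Two small remarks.
\begin{itemize}
\item You do not need the convention that objective parts are nonempty. In item~2, $\beta_i=\beta$ together with $|\beta_i|=|A_i|+1$ already forces $A_i=A$. In item~3, no other part can have a surplus, since it would also have to contain $\beta^0$. Equality of the sums then forces $|\beta_t|=|A_t|\ge 1$ for every other part.
\item Your complement-based co-\buchi translation is the right choice. A translation of the form $\alpha_i'=\cup(\beta\setminus\{\alpha_i\})$, as used in the paper, fails as soon as some sink lies in two objectives.
\end{itemize}
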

\begin{proof}
\label{app a priori is hard}
For the first claim, consider a CG $\G = \tuple{G,k,\beta}$, and consider $1 <  l \leq k$ and a vertex $v$ of $G$ such that $\beta$ is $(k,l)$-decomposable in $v$. We show that $\beta$ is also $(k,l-1)$-decomposable in $v$. The claim for all $l' \leq l$ then follows by repeated (possibly zero) applications of this claim. 
Since $\beta$ is $(k,l)$-decomposable in $v$, there exists a partition $\beta_1,\ldots,\beta_l$ of $\beta$ and a partition $A_1,\ldots,A_l$ of $A$ to nonempty sets, such that for every $1\leq i\leq l$, \pcov wins the game $\tuple{G^v,|A_i|,\beta_i}$. It follows that \pcov also wins the CG $\tuple{G^v, |A_{l-1}\cup A_l|, \beta_{l-1}\cup \beta_l}$. Indeed, the strategy in which the agents in $A_{l-1}$ use their winning strategy for $\beta_{l-1}$ and the agents in $A_l$ use their winning strategy for $\beta_l$ covers $\beta_{l-1}\cup \beta_l$. Hence, $\beta_1,\ldots,\beta_{l-2},\beta_{l-1}\cup \beta_l$ and $A_1,\ldots, A_{l-2}, A_{l-1}\cup A_l$ are partitions of $\beta$ and $A$, respectively, to $l-1$ nonempty sets, witnessing that $\beta$ is $(k,l-1)$-decomposable from $v$.

For the second item, consider $k\geq 2$. We prove the claim for $l=2$. By the first item, the result then follows for all $l > 1$.
First, for $\gamma=B$, consider the $\gamma$-CG $\G = \tuple{G,k,\beta}$, where $G$ is as follows (see Figure~\ref{fig k 2 decomp} for an example for $k=2$). Starting from the initial vertex $v_0$, \pdis chooses between $k+1$ vertices $v_1,\ldots,v_{k+1}$. From $v_i$, \pcov chooses among $k$ self-looped sinks $s_i^1,\ldots,s_i^k$, for every $i\in [k+1]$. Then, for every $i\in [k+1]$ and $j \in [i-1]$, the sink $s_i^j$ satisfies the objective $\alpha_j$, and for every $j \in [k+1]\setminus [i]$, the sink $s_i^{j-1}$ satisfies the objective $\alpha_j$. The sink $s_i^1$ also satisfies $\alpha_i$. That is, every objective $\alpha_j \in \beta \setminus \set{\alpha_i}$ is satisfied in a separate sink reachable from $v_i$. Formally, $\beta = \set{\alpha_i: i\in [k+1]}$, with $\alpha_i = \set{s_j^i : i+1\leq j\leq  k+1}\cup \set{s_j^{i-1} : 1\leq j\leq i-1} \cup \set{s_i^1}$, for every $i\in [k+1]$. 

\begin{figure}[htp]
\centering
\includegraphics[width=8cm]{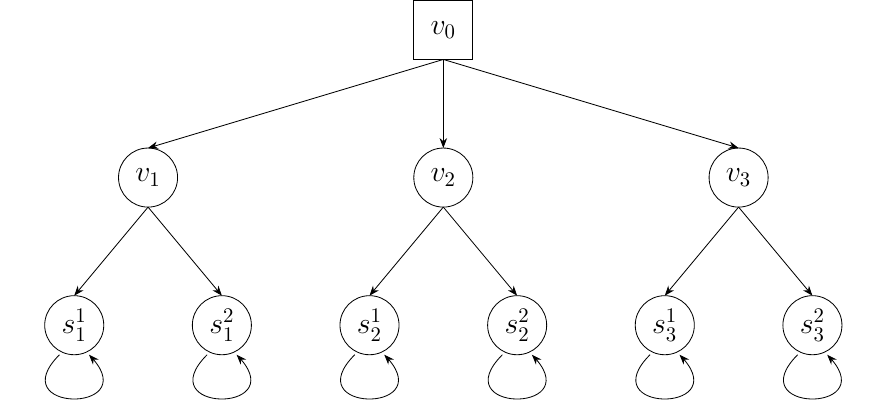}
\caption{The game graph $G$ for $k=2$. Here, $\beta = \set{\alpha_1,\alpha_2,\alpha_3}$, with $\alpha_1 = \set{s_2^1,s_3^1,s_1^1}$, $\alpha_2 = \set{s_3^2,s_1^1,s_2^1}$, and $\alpha_3 = \set{s_1^2,s_2^2,s_3^1}$.
}
\label{fig k 2 decomp}
\end{figure}

It is not hard to see that \pcov wins $\G$. Indeed, $\beta$ can be satisfied in $G^{v_{i}}$, for every successor $v_i$ of $v_0$.
We show that $\beta$ is not $(k,2)$-decomposable in $G$. Assume by contradiction otherwise. Thus, there exist $\beta'\subset \beta$ and $A'\subset A$ such that \pcov wins $\tuple{G,|A'|,\beta'}$ and $\tuple{G,|A\setminus A'|, \beta\setminus \beta'}$. Note that every winning strategy for \pcov in $\tuple{G,|A'|,\beta'}$ requires $|\beta'|$ agents. Indeed, since $|\beta'| \leq  k$, \pdis can choose from the initial vertex a successor $v_i$ such that $\alpha_i\notin \beta'$, thus $\beta'$ is covered from $v_i$ iff $|\beta'|$ agents are sent to $|\beta'|$ different successors of $v_i$. In a similar way, every winning strategy for \pcov in $\tuple{G,|A\setminus A'|,\beta\setminus\beta'}$ requires $|\beta\setminus \beta'|$ agents. Therefore, $k = |A| = |A'|+|A\setminus A'| \geq |\beta'| + |\beta\setminus \beta'| = |\beta| = k+1$, contradiction.

Next, for $\gamma = C$, we define $\G' = \tuple{G, k, \beta'}$ over the same game graph, now with $\beta' = \set{\alpha_1',\ldots,\alpha_{k+1}'}$, where $\alpha_i' = \cup(\beta \setminus \set{\alpha_i})$, for every $i\in [k+1]$. Since the vertices $s_i^j$ are sinks, every play in $G$ satisfies the $\gamma$ objective $\alpha_i'$ iff it satisfies the $\Tilde{\gamma}$ objective $\alpha_i$, for every $i\in [k+1]$, we have that \pcov wins $\G'$ but $\beta'$ is not $(k,2)$-decomposable in $G$.

We continue to the last claim. 
Consider $k\geq 2$ and $1\leq l < k$. First, for $\gamma = B$, we describe a $\gamma$-CG $\G = \tuple{G,k,\beta}$ with $\beta = \set{\alpha_1,\ldots,\alpha_{k+1}}$ such that $\beta$ is $(k,l)$-decomposable in $G$, but $\beta$ is not $(k,l+1)$-decomposable in $G$. The game is as follows. Let $\G' = \tuple{G',k-(l-1),\beta'}$ be a CG with $\beta' = \set{\alpha'_1,\ldots,\alpha'_{k-(l-1)+1}}$ such that \pcov wins $\G'$, but $\beta'$ is not $(k-(l-1),2)$-decomposable in $G'$. By item 2, such a CG exists. Starting from the initial vertex in $G$, \pdis chooses between $l$ successors $v_1,\ldots,v_l$. The vertices $v_1,\ldots,v_{l-1}$ are self-looped sinks, and from $v_l$, \pcov proceeds to the sub-graph $G'$. 
Then, we define $\alpha_i = \set{v_i}$, for every $i\in [l-1]$, and $\alpha_i = \alpha'_{i - (l-1)}$, for every $l\leq i\leq k+1$. 
It is not hard to see that $\beta$ is $(k,l)$-decomposable in $G$. Indeed, for every $i\in [l-1]$, \pcov wins $\tuple{G,1,\set{\alpha_i}}$, and also \pcov wins $\tuple{G, k-(l-1),\beta'}$, since \pcov wins $\G'$. 
We continue to show that $\beta$ is not $(k,l+1)$-decomposable. Assume by contradiction otherwise, and let $\beta_1,\ldots,\beta_{l+1}$ and $A_1,\ldots,A_{l+1}$ be partitions of $\beta$ and $A$ to $l+1$ nonempty sets, respectively, such that \pcov wins $\tuple{G,|A_i|,\beta_i}$, for every $1\leq i\leq l+1$. Since there are only $l-1$ objectives in $\beta \setminus \beta'$, we have that $\beta'$ is divided among at least two different sets in the partition, implying an a-priori partition of $\beta'$, contradicting the fact $\beta'$ is not $(k-(l-1),2)$-decomposable in $G'$. The same results hold for $\gamma = C$ and the $\gamma$-CG $\G' = \tuple{G, k, \beta''}$ with $\beta'' = \set{\alpha_1'',\ldots,\alpha_{k+1}''}$ such that $\alpha_i'' = \cup(\beta \setminus \set{\alpha_i})$, for every $i\in [k+1]$. 
\end{proof}

Consider a vertex $v\in V_1$.
For $l\in[k]$, we say that $v$ is a {\em $(k,l)$-fork} for $\beta$ if \pcov has a covering strategy $F_1 = \tuple{f_1^1,\ldots,f_1^k}$ in $G^v$ such that different agents are sent to $l$ different successors of $v$. Thus, there exist $l$ successors $v_1,\ldots,v_l$ of $v$, and for all $i\in[l]$, there is at least one agent $j \in [k]$ such that $f_1^j(v)=v_i$. Note that we do not require $f_1^i$ to be memoryless and refer here to the strategy in the first round in $G^v$. The following lemma then follows from the definitions. 

\begin{lem}
\label{fork implies decomposable}
Consider a CG $\G = \tuple{G,k,\beta}$. For every vertex $v \in V_1$ of $G$ and $l\in[k]$, if $v$ is a $(k,l)$-fork for $\beta$, then $\beta$ is $(k,l)$-decomposable in $v$.
\end{lem}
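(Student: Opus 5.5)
The plan is to build the two partitions required by the definition of $(k,l)$-decomposability directly from the first move of a covering strategy that witnesses the fork. Since $v$ is a $(k,l)$-fork for $\beta$, fix a covering strategy $F_1=\tuple{f_1^1,\ldots,f_1^k}$ in $G^v$ and successors $v_1,\ldots,v_l$ of $v$ such that every agent is sent to one of them and each $v_i$ receives at least one agent. I read the definition so that exactly $l$ distinct successors are used; if more than $l$ were used, we would get a decomposition with more parts and then apply the first item of Theorem~\ref{a priori is hard}. For $i\in[l]$, let $A_i=\set{j\in[k] : f_1^j(v)=v_i}$. These sets are nonempty by the fork property, pairwise disjoint because $f_1^j(v)$ is a single vertex, and cover $A$ because every agent moves to some $v_i$. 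So $A_1,\ldots,A_l$ is a partition of $A$ into nonempty sets.

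Next I partition $\beta$. Fix any strategy $f_2$ of \pdis. Since $v\in V_1$, \pdis plays no part in the first move, so the play of every agent in $A_i$ passes through $v_i$. I will use the fact that $F_1$ is covering to assign each objective to a group that can satisfy it against every \pdis strategy, arguing via $\beta_i=\set{\alpha\in\beta : \text{the agents in } A_i \text{ cover } \alpha \text{ against every } f_2}$.

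The main obstacle is that an objective may be covered by different groups against different strategies of \pdis, so the sets $\beta_i$ defined this way need not cover $\beta$. This cannot be fixed in general: \pdis may respond adaptively on the part of the graph below one successor, and that response can shift the burden between groups. Consequently the natural argument, splitting the agents by their first move, does not by itself establish decomposability.

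So I would first check how the paper intends the fork definition. A reading under which the lemma is immediate ("follows from the definitions") is that the agents sent to $v_i$ cover, within $G^{v_i}$, a fixed sub-collection $\beta_i$. In that case each group $A_i$, playing its component strategies shifted by one step through the edge $\tuple{v,v_i}$, wins $\tuple{G^v,|A_i|,\beta_i}$, since \pdis's behaviour on one branch does not restrict anything on the others. If the definition is the one stated, without this assumption, I would not try to push the naive proof through. Instead I would construct a candidate counterexample: from $v$, \pcov branches to $v_1$ and $v_2$; below $v_1$ a single \pdis vertex chooses between two sinks; below $v_2$ there are sinks reached by \pdis-dependent paths, coupled so that which of two objectives each group satisfies depends on the same \pdis choice. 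If such an example works, it would show that the lemma needs the stronger reading of the fork definition, or an extra condition on it.
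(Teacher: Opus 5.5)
\textbf{There is a genuine gap.} You set up the right objects: the groups $A_i$ from the first move, and $\beta_i$ as the objectives that the agents of $A_i$ cover against every strategy of \pdis. But you then give up on the one step that is actually needed, $\bigcup_{i\in[l]}\beta_i=\beta$, and suspect the lemma is false. It is true.

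The missing idea is that a strategy of \pdis splits into independent components on the $l$ branches. Every history in the play of an agent in $A_i$ has the form $v\,v_i\cdots$, and the $v_i$ are distinct. So agents of different groups never share a history, even if their branches later pass through the same vertices. The only common history is $v$ itself, and there \pdis does not move because $v\in V_1$.

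Now suppose some $\alpha\in\beta$ lies in no $\beta_i$. For each $i$, pick a strategy $f_2^i$ of \pdis against which no agent of $A_i$ satisfies $\alpha$. Glue these into one strategy $f_2$ that follows $f_2^i$ on histories beginning with $v\,v_i$. Against $f_2$, no agent at all satisfies $\alpha$, which contradicts $F_1$ being covering. Logically, this is the quantifier swap $\forall(r_1,\ldots,r_l)\,\exists i\,P_i(r_i)\Rightarrow\exists i\,\forall r_i\,P_i(r_i)$, which holds because the strategy space of \pdis is a product over the branches. This is exactly the paper's argument.

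Once $\bigcup_i\beta_i=\beta$, assign each objective to one $\beta_i$ containing it to get a partition. Shrinking a group's target set keeps the agents in $A_i$, with their strategies from $F_1$, winning $\tuple{G^v,|A_i|,\beta_i}$.

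\textbf{Why your proposed counterexample cannot work.} You want to couple the two branches through ``the same \pdis choice.'' No such choice exists. A \pdis vertex reached via $v\,v_1\cdots$ and via $v\,v_2\cdots$ is reached with different histories, and \pdis may answer differently in each. Such coupling would only appear if \pdis were restricted to memoryless strategies, and the model imposes no such restriction.

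Likewise, the ``stronger reading'' of forks you consider, where each group covers a fixed sub-collection, is exactly what the gluing argument derives. No extra assumption is needed.

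Your handling of the case where more than $l$ successors are used is fine: decompose into more parts, then merge using the first item of Theorem~\ref{a priori is hard}.
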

\begin{proof}
Consider a vertex $v \in V_1$ and $1 \leq l \leq k$ such that $v$ is a $(k,l)$-fork for $\beta$. Let $F_1 = \tuple{f_1^1,\ldots,f_1^k}$ be a winning strategy for \pcov that sends the agents to $l$ different successors $v_1,\ldots,v_l$ of $v$. For every $1\leq i\leq l$, let $A_i$ be the set of agents $j\in[k]$ such that $f_1^j(v) = v_i$, and let $F_1^i = \set{f_1^j}_{j\in A_i}$ be the set of strategies of the agents sent to $v_i$. 
Let $\beta_i$ be the set objectives of $\beta$ that $F_1^i$ covers. That is, $\alpha_j \in \beta_i$ iff for all strategies $f_2$ of \pdis, there is a play in $\outcome(F_1^i,f_2)$ that satisfies $\alpha_j$. Equivalently, $\beta_i \subseteq \beta$ is the maximal subset of $\beta$ such that $F_1^i$ is a winning strategy for \pcov in $\tuple{G_{v_i}, |A_i|, \beta_i}$. It is then sufficient to show that $\bigcup_{i\in[l]}\beta_i = \beta$. Assume by contradiction that there exists an objective $\alpha_j \in \beta$ such that $\alpha_j\notin \beta_i$, for every $i\in [l]$, which implies there exists a strategy $f_2^i$ for \pdis in $G_{v_i}$ such that $\alpha_j$ is not covered in $\outcome(\tuple{F_1^i,f_2^i})$. Let $f_2$ be the strategy for \pdis from $v$ that follows $f_2^i$ from $v_i$, for every $i\in[l]$. It is easy to see that $\alpha_j$ is not covered in $\outcome(\tuple{F_1,f_2})$, contradicting the fact $F_1$ is a winning strategy. 
\end{proof}

We say that a vertex $v \in V_1$ is a {\em fork\/} if it is a $(k,l)$-fork for $\beta$, for some $2\leq l\leq k$. We denote by $F \subseteq V_1$ the set of vertices that are forks. 
Let $V_{avoid}\subseteq V$ be the set of vertices from which \pdis can avoid reaching forks. That is, $v\in V_{avoid}$ if there exists a strategy $f_2$ for \ptwo from $v$ such that for every strategy $f_1$ for \pone, the play $\outcome(\tuple{f_1,f_2})$ does not reach $F$. Finally, let $G_{avoid} = \tuple{V_1',V_2',v_0,E'}$ be the sub-graph of $G$ with vertices in $V_{avoid}$. That is, $V_1' = V_1 \cap V_{avoid}$, $V_2' = V_2 \cap V_{avoid}$, and $E' = E\cap (V_{avoid}\times V_{avoid})$.

\begin{exa}
\label{example2}
Consider the \buchi CG $\G$ from Example~\ref{example}. 
Recall that \pcov has a covering strategy in $\G$. We claim that the only forks for $\beta$ in $G$ are $v_1$ and $v_2$. Thus, a winning strategy for \pcov cannot a-priori decompose the objectives in $\beta$. Note that this is the case also for the covering strategy described in Example~\ref{example}. 

Indeed, in $v_1$, \pcov can assign both $\alpha_1$ and $\alpha_2$ to an agent that proceeds to $u_2$ and assign $\alpha_3$ to an agent that proceeds to $d_2$. Similarly, in $v_2$, \pcov can assign $\alpha_1$ to an agent that proceeds to $u_3$ and assign both $\alpha_2$ and $\alpha_3$ to an agent that proceeds to $d_3$. 
Also, as \pdis may direct tokens to either $v_1$ or $v_2$, and the above decompositions are different and are the only possible decompositions in $v_1$ and $v_2$, \pcov cannot a-priori decompose $\beta$. Hence, $v_0$ is not a fork. 
Finally, as $\beta$ is not covered in each of $u_2$, $d_2$, $u_3$, and $d_3$, they are not forks either. 

Note also that since \pdis can direct tokens from $u_1,m_1$, and $d_1$ back to $v_0$, the graph $G_{avoid}$ is the sub-graph of $G$ whose vertices are $v_0,u_1,m_1$ and $d_1$. 
\end{exa}

\begin{thm}
\label{covering str characterization}
Consider a CG $\G = \tuple{G,k,\beta}$, and a vertex $v\in V$. Then, \pcov has a covering strategy in $\tuple{G^v,k,\beta}$ iff $v\notin V_{avoid}$, or $v\in V_{avoid}$ and \pcov wins $\tuple{G_{avoid}^v,1,\beta}$.
\end{thm}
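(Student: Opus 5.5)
The plan is to prove both directions by treating the sub-game that starts at $v$ as a reachability game toward the set $F$ of forks. Since reachability games are determined, $v\notin V_{avoid}$ means that \pone has a strategy $f_{reach}$ from $v$ that forces every play to reach $F$. Three facts will be used repeatedly.
\begin{itemize}
\item[(a)] Since \pdis uses one strategy $f_2$ against all agents, agents that follow the same strategy produce the same play.
\item[(b)] Every vertex of $V_1\cap V_{avoid}$ has all its successors in $V_{avoid}$, and every vertex of $V_2\cap V_{avoid}$ has at least one successor in $V_{avoid}$. Hence $G_{avoid}$ is a legitimate game graph with a total edge relation.
\item[(c)] If all $k$ agents share a history $h$ ending in a vertex $u$, then every strategy of \pdis in $G^u$ can be prefixed by $h$ to give a strategy of \pdis from $v$. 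Consequently, the residual of a covering strategy after $h$ is a covering strategy in $G^u$.
\end{itemize}

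For the ``if'' direction there are two cases. If $v\notin V_{avoid}$, all agents play $f_{reach}$. By (a) their tokens move together until they reach some fork $u\in F$. Since $u$ is a fork, it is a $(k,l)$-fork for some $2\le l\le k$, so \pcov has a covering strategy in $G^u$, and the agents switch to it. By (c), the residual of \pdis's strategy after the common prefix is a strategy in $G^u$, so $\beta$ is covered. If $v\in V_{avoid}$ and $f$ is a winning strategy for the single agent in $\tuple{G^v_{avoid},1,\beta}$, all agents play $f$ as long as the token stays in $V_{avoid}$. By (b), only \pdis can move the token out of $V_{avoid}$. If she does so at some vertex $w\notin V_{avoid}$, all tokens are still together, and the agents switch to the covering strategy of the first case from $w$. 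If the token never leaves $V_{avoid}$, the common play is consistent with $G_{avoid}$, so it satisfies every objective in $\beta$.

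For the ``only if'' direction, assume $v\in V_{avoid}$ and that $F_1=\tuple{f_1^1,\ldots,f_1^k}$ covers $\beta$ in $G^v$. We only consider strategies of \pdis that never leave $V_{avoid}$, which exist by (b). I will prove by induction on the length of a history $h$ consistent with $G_{avoid}$ that all $k$ agents have the same history $h$. Take such an $h$ ending at $u\in V_1\cap V_{avoid}$. By (c), the residual of $F_1$ after $h$ is a covering strategy in $G^u$. If the agents chose at least two different successors of $u$, this residual would witness that $u$ is a $(k,l)$-fork with $l\ge 2$. Then $u\in F$, contradicting $u\in V_{avoid}$, since a player starting at a fork cannot avoid $F$. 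So all agents choose the same successor, and by (b) that successor lies in $V_{avoid}$. Now let $f$ be the strategy of agent~1 restricted to histories in $G_{avoid}$; it is a strategy in $G^v_{avoid}$. Against any strategy of \pdis in $G_{avoid}$, the $k$ plays of $F_1$ all equal $\outcome(\tuple{f,f_2})$. Since $F_1$ covers $\beta$, this single play satisfies every objective in $\beta$, so \pcov wins $\tuple{G^v_{avoid},1,\beta}$.

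I expect the main obstacle to be the residual-strategy step (c) inside the induction. It requires care that a covering strategy after a common prefix is again covering from $u$, which depends on \pdis being a single strategy, and that the definition of fork allows non-memoryless strategies in $G^u$. A second point to state explicitly is that forks lie outside $V_{avoid}$.
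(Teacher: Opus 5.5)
Your proof is correct and follows essentially the same route as the paper: when $v\notin V_{avoid}$ the agents move together to a fork and then cover from there, and when $v\in V_{avoid}$ a covering strategy cannot split the agents inside $G_{avoid}$, since a split would make the current vertex a fork, so agent~1's strategy wins the one-agent All game. Your version is somewhat more careful than the paper's. You use the fork's own covering strategy directly instead of going through Lemma~\ref{fork implies decomposable}, you check that $G_{avoid}$ has a total edge relation, and you restrict the agreement claim to histories that all agents actually share, whereas the paper states it for all $h$. You should also state explicitly that the residual-strategy step relies on \buchi and co-\buchi objectives being prefix-independent.
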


\begin{proof}
We first prove that for all vertices $v \notin V_{avoid}$, \pcov has a strategy to cover $\beta$ from $v$. Consider a vertex
 $v \notin V_{avoid}$. By the definition of $V_{avoid}$, \pcov has a strategy to reach a fork from $v$. Hence, \pcov can cover $\beta$ from $v$ by letting all agents follow the same strategy until some $(k,l)$-fork $u$ for $\beta$ is reached. By Lemma~\ref{fork implies decomposable}, $\beta$ is $(k,l)$-decomposable in $u$. Hence, once the $k$ tokens of the agents reach $u$, \pcov can cover $\beta$ by decomposing it. Let $F_1^v$ denote a covering strategy for \pcov from a vertex $v \notin V_{avoid}$. 

We argue that for every vertex $v\in V_{avoid}$, \pcov has a strategy to cover $\beta$ from $v$ iff \pcov wins $\tuple{G_{avoid}^v,1,\beta}$.
Consider a vertex $v\in V_{avoid}$, and assume first that \pcov has a covering strategy from $v$. Since \pdis can force the play to stay in $G_{avoid}$, every covering strategy for \pcov from $v$ in $G$ is also a covering strategy from $v$ in $G_{avoid}$. Also, for every covering strategy $F_1 = \tuple{f_1^1,\ldots,f_1^k}$ from $v$, the strategies for the different agents agree with each other as long as the play stays in $G_{avoid}$. That is, $f_1^1(h) = \cdots = f_1^k(h)$, for every $h\in {V^*_{avoid}}\cdot V_1'$. Indeed, vertices from which there exists a covering strategy for \pcov that sends different agents to different successors are forks, and by the definition of $V_{avoid}$, the sub-graph $G_{avoid}$ does not contain forks. 
Hence, the strategy $f_1: {V^*_{avoid}}\cdot V_1'\rightarrow V_{avoid}$ with $f_1(h) = f_1^1(h)$ for every $h\in {V^*_{avoid}}\cdot V_1'$ is a covering strategy for \pcov $\tuple{G_{avoid}^v,1,\beta}$.

For the second direction, assume that \pcov wins $\tuple{G_{avoid}^v,1,\beta}$. We show that \pcov has a covering strategy from $v$. Let $f_1$ be a covering strategy for \pcov in $\tuple{G_{avoid}^v,1,\beta}$. Consider the strategy $F_1 = \tuple{f_1^1,\ldots,f_1^k}$ in which, for all $1 \leq i \leq k$, the strategy $f_1^i$ agrees with $f_1$ as long as the play stays in $G_{avoid}$, and proceeds with $F_1^{u}$ once the play (that is, all tokens together, as this may happen only in a transition taken by \pdis), leaves $G_{avoid}$ and reaches a vertex $u \not \in V_{avoid}$. For every strategy $f_2$ for \pdis, either all the outcomes of $F_1$ and $f_2$ stay in $G_{avoid}$, in which case $\beta$ is covered by each of the agents, or all the outcomes of $F_1$ and $f_2$ leave $G_{avoid}$, in which case they follow strategies that cover $\beta$ by decomposing it. 
\end{proof}

\section{The Complexity of the Coverage Problem}
\label{sec comp cov}
In this section, we study the complexity of the coverage problem for \buchi and co-\buchi CGs, and show that it is PSPACE-complete. 

We study the joint complexity, namely in terms of $G, \beta$, and $k$. In Sections~\ref{CGs with a fixed number of agents} and~\ref{CGs with a fixed number of objectives}, we complete the picture with a parameterized-complexity analysis. 

We start with upper bounds. 
Consider a CG $\G = \tuple{G,k,\beta}$. 
Recall the set $F$ of forks for $\beta$, and the game graph $G_{avoid}$, in which \pdis can avoid reaching forks. By Theorem~\ref{covering str characterization}, \pcov has a covering strategy in $G$ iff \pone has a winning strategy in $G_{avoid}$ for the All objective $\beta$, and when the play leaves $G_{avoid}$ (or if the initial vertex is not in $G_{avoid}$), \pcov can cover $\beta$ by reaching a fork, from which $\beta$ can be decomposed.

The above characterization is the key to our algorithm for deciding whether \pcov has a covering strategy. Essentially, the algorithm guesses the set $U$ of forks, checks that \pone wins the game with the All objective $\beta$ in the sub-graph induced by $U$, and checks (recursively) that the vertices in $U$ are indeed forks. 

The checks and the maintenance of the recursion require polynomial space. 

Since NPSPACE=PSPACE, we have the following.\footnote{It is not hard to see that the proof of Theorem~\ref{decide cover buchi upper} applies to all prefix-independent objective types $\gamma$ such that All-$\gamma$ games can be decided in PSPACE.}
\begin{thm}
\label{decide cover buchi upper}
The coverage problem for \buchi or co-\buchi CGs can be solved in PSPACE.
\end{thm}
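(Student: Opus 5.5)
We decide coverage by a recursive nondeterministic procedure $\textsc{Cover}(v,k,\beta)$ built on Theorem~\ref{covering str characterization}, and show that it runs in polynomial space; Savitch's theorem (NPSPACE$=$PSPACE) then gives the bound. The recursion is on the number $k$ of agents. If $k=1$, or more generally $k\geq|\beta|$, the problem is solved directly in polynomial time. For $k=1$ it is the All-$\gamma$ game; for $k\geq|\beta|$ we use Lemma~\ref{P1 k geq m wins iff she m wins}. All-\buchi (generalized \buchi) and All-co-\buchi (a single co-\buchi game with $\cup\beta$) games are solvable in PTIME.

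For $2\leq k<|\beta|$ the procedure works in three steps.
\begin{enumerate}
\item It guesses a set $U\subseteq V_1$ of forks. Since $V_{avoid}$ is monotone in $F$, it suffices to guess some $U\subseteq F$ that may be incomplete. Vertices of $F\setminus U$ only help \pcov, and Theorem~\ref{covering str characterization} stays sound if ``fork'' is replaced by ``vertex of $U$''.
\item It computes $V^U_{avoid}$, the complement of \pone's attractor to $U$, in polynomial time. If $v_0\in V^U_{avoid}$, it checks in PTIME that \pone wins the All objective $\beta$ on $G^{v_0}$ restricted to $V^U_{avoid}$.
\item It verifies each $u\in U$ one at a time, reusing space. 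A vertex $u$ is a $(k,l)$-fork for some $l\geq 2$ iff there is a covering strategy that splits the agents at $u$ among $l$ distinct successors. By the argument of Lemma~\ref{fork implies decomposable}, and conversely by combining the partial strategies, this holds iff there are distinct successors $u_1,\ldots,u_l$ of $u$, a partition $A_1,\ldots,A_l$ of the agents into nonempty sets, and sets $\beta_1,\ldots,\beta_l$ with $\bigcup_i\beta_i=\beta$ such that \pcov wins $\tuple{G^{u_i},|A_i|,\beta_i}$ for every $i$. The procedure guesses these objects and calls $\textsc{Cover}(u_i,|A_i|,\beta_i)$ for each $i$; each call has strictly fewer than $k$ agents.
\end{enumerate}

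For soundness, argue by induction on $k$. Every vertex in $U$ is a genuine fork, since it carries a decomposition witnessed recursively, which combines into a splitting covering strategy. The strategy from the proof of Theorem~\ref{covering str characterization} then works with $U$ in place of $F$: play a single winning strategy inside $V^U_{avoid}$, attract to $U$ otherwise, and decompose there. Note that decomposed sub-strategies for disjoint agent sets combine, because \pdis must use one strategy, and each agent group wins against every \pdis strategy. For completeness, guess $U=F$ and use Theorem~\ref{covering str characterization} directly. The converse direction of the fork characterization follows from Lemma~\ref{fork implies decomposable}, applied after the first step from $u$: define $\beta_i$ as the set of objectives covered by the agents sent to $u_i$.

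The space analysis is the main obstacle. The recursion depth is at most $k-1\leq|\beta|$, since the agent count drops at each level. Each frame stores $v$, $k$, $\beta$, the guessed $U$, the current $u$, the guessed successors, the agent partition and the $\beta_i$'s. This is polynomial, because $l\leq k<|\beta|$ and only agent counts $|A_i|$ need storing. The frame also holds a loop index over $U$ and over $i$. One delicate point needs care: the subgames are rooted at successor vertices $u_i$, which may lie in $V_2$. The definitions, and Theorem~\ref{covering str characterization}, are stated for an arbitrary $v\in V$, so the recursion should be phrased for arbitrary initial vertices. Total space is polynomial depth times polynomial frame size, which gives NPSPACE$=$PSPACE.
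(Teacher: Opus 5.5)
Your proposal is correct and follows essentially the same route as the paper: guess a set $U$ of forks, verify each one recursively with strictly fewer agents, compute the region from which Disruptor can avoid $U$, solve the All-$\gamma$ game there, and conclude via NPSPACE$=$PSPACE. The differences are minor refinements. You certify a fork through distinct successors $u_1,\ldots,u_l$, which matches the definition, whereas the paper checks $(k,l)$-decomposability at $u$ itself, which still suffices for soundness. You also dispatch $k\geq|\beta|$ as a base case, which cleanly bounds the recursion depth by $|\beta|$.
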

\begin{proof}
We describe a non-deterministic Turing machine (NTM) $T$ that runs in polynomial space, such that $T$ accepts a CG $\G$ iff \pcov has a covering strategy in $\G$.

Given a CG $\G = \tuple{G,k,\beta}$, the NTM $T$ guesses a set of vertices $U\subseteq V$, and checks that they are forks. To check that a vertex $u\in U$ is a fork, the NTM guesses a partition $\beta_1,\ldots,\beta_l$ of $\beta$ and a partition $A_1,\ldots,A_l$ of $[k]$ to nonempty sets, for some $2\leq l\leq k$, and checks recursively whether \pcov has a covering strategy in $\tuple{G^u,|A_i|,\beta_i}$, for every $i\in[l]$. If one of the recursive checks fails, the NTM rejects. Otherwise, $T$ calculates the restriction $G'$ of $G$ to vertices from which \pdis can force the play to avoid $U$. Then, $T$ accepts if the initial vertex $v$ of $G$ is not in $G'$ or \pone wins from $v$ in the All-$\gamma$ game $\tuple{G',\beta}$. Otherwise, $T$ rejects.

We prove the correctness of the construction. That is, we prove that $T$ accepts a game $\G$ iff \pcov has a covering strategy in $\G$. 
By Theorem~\ref{covering str characterization}, \pcov has a covering strategy from a vertex $v$ iff $v\notin V_{avoid}$, or $v\in V_{avoid}$ and \pcov wins $\tuple{G_{avoid}^v,1,\beta}$. So, if \pcov has a covering strategy, the NTM can guess the set $F$ of forks for the current set $\beta$ of objectives and number $k$ of agents. Then, $G'$ coincides with $G_{avoid}$, and by Theorem~\ref{covering str characterization}, the NTM accepts. For the second direction, assume the NTM accepts. Then, the vertices in $U$ are forks by definition, and \pcov wins $\tuple{G',1,\beta}$. By Theorem~\ref{covering str characterization}, it is enough to show that if $v\in G_{avoid}$, then \pcov wins $\tuple{G_{avoid}^v,1,\beta}$. Since the vertices in $U$ are forks, $G_{avoid}$ is a sub-graph of $G'$. So, since \pdis can force the play to stay in $G_{avoid}$ if $v$ is in it, we also have that \pcov wins $\tuple{G_{avoid}^v,1,\beta}$.

To complete the proof, we show that the NTM runs in polynomial space. Note that only a polynomial number of recursive checks can be made before reaching an empty set of objectives, and each check requires a polynomial space for guessing $U$ and the appropriate partitions. 
Also, calculating $G'$ can be done in polynomial time, and checking whether \pone wins an All-$\gamma$ game can be done in polynomial time as well, for both $\gamma\in\set{\text{B,C}}$ \cite{VW86a}. 
Thus, the NTM runs in polynomial space. 
\end{proof}

Note that the algorithm can return a covering strategy (if one exists). Such a strategy can be described by the set $F$ of forks, the sub-game $G_{avoid}$, a covering strategy in $\tuple{G_{avoid},1,\beta}$, and the strategies for reaching forks. Then, for each fork, the strategy describes the partitions of $\beta$ and $[k]$, and the (recursive) description of the strategies in the decomposed game. Note that the description may require more than polynomial space.

We continue to the lower bounds.
\begin{thm}
\label{decide cover lower}
The coverage problem for \buchi or co-\buchi CGs is PSPACE-hard.
\end{thm}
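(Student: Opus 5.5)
We would prove PSPACE-hardness by a polynomial reduction from TQBF, handling \buchi objectives first. The co-\buchi case then follows by the trick used in Theorems~\ref{k < m win undetermined} and~\ref{a priori is hard}: every play in the constructed graph ends in a self-looped sink, so replacing each objective $\alpha$ by the union of the sinks outside $\alpha$ leaves the game unchanged. The upper bound of Theorem~\ref{decide cover buchi upper} shows where hardness must come from. A single agent, or agents that move together, only face an All-\buchi game, which is in PTIME. So the quantifier alternation has to be simulated by the nested forks of Theorem~\ref{covering str characterization}, in which the agents split and each group commits to a sub-task.

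Let $\Phi = Q_1x_1\cdots Q_nx_n\,\phi$ with $\phi = C_1\wedge\cdots\wedge C_m$ in 3CNF. We take $k=n+1$ agents and, for every literal $l \in X\cup\overline{X}$, a \buchi objective $\alpha_l$, so $\beta=\set{\alpha_l : l \in X\cup\overline{X}}$. The graph is a chain of variable gadgets followed by a clause gadget, and all tokens travel together until they are split.
\begin{itemize}
\item At gadget $i$, if $Q_i=\forall$, \pdis moves all tokens to $t_i$ or $f_i$. If $Q_i=\exists$, \pcov makes the same choice instead, and the agents must agree.
\item From the chosen vertex, which is owned by \pcov, each agent may go to a self-looped sink or continue to gadget $i+1$. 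The sink below $t_i$ lies only in $\alpha_{\overline{x_i}}$, and the sink below $f_i$ lies only in $\alpha_{x_i}$. Thus a peeled agent "pays" for the literal that the chosen value makes false.
\item After gadget $n$, \pdis picks a clause $C_j$ and \pcov picks a literal $l\in C_j$. The final sink $s_l$ belongs to every $\alpha_{l'}$ with $l'\neq\overline{l}$.
\end{itemize}

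For correctness, the key counting claim is that every covering strategy sends exactly one agent to a sink in each gadget. Each sink covers only one literal objective, and the final agent misses exactly one objective, $\alpha_{\overline{l}}$. So \pcov can afford at most $n$ peeled agents, one for each pair $\set{\alpha_{x_i},\alpha_{\overline{x_i}}}$. An agent peeled early cannot be reused, and peeling too few agents lets \pdis steer to a clause whose chosen literal is uncovered.

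If $\Phi$ is true, \pcov follows the Skolem functions for the existential choices, peels one agent per gadget, and at the end picks a literal $l\in C_j$ that is true under the play's assignment. Then $\overline{l}$ is false, so $\alpha_{\overline{l}}$ was paid by a peeled agent, and $s_l$ covers everything else.

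If $\Phi$ is false, \pdis plays the universal counter-strategy against the assignment determined by the tokens. Because the existential choices are made jointly by all tokens, they are consistent. She then picks a clause $C_j$ that is falsified. Whatever $l \in C_j$ \pcov picks, $l$ is false, so $\overline{l}$ is true and its objective $\alpha_{\overline{l}}$ was never paid. It is also missed by $s_l$.

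The main obstacle is the cheating analysis. \pcov might peel two agents in one gadget and none in another. She might also peel at a universal gadget in a way that depends on the future, or keep several agents running through the clause gadget, where they are split across different literals. The intended fix is to verify the counting invariant by induction on the gadgets: an agent reaching the clause gadget covers at most $2n-1$ literal objectives and must cover a set of the form $\beta\setminus\set{\alpha_{\overline{l}}}$. If needed, we would strengthen the gadgets by adding a per-variable "bookkeeping" objective that only the sinks of gadget $i$ satisfy. This forces exactly one agent out at gadget $i$ and makes the invariant immediate. The remaining checks are routine: the construction is polynomial, and the co-\buchi translation is as described above.
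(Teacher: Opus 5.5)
Your reduction as specified, with only the literal objectives $\alpha_l$ and $k=n+1$, is unsound. The ``key counting claim'' is false, so no induction over the gadgets can establish it. Nothing forces an agent out at any gadget. \pcov can keep all $n+1$ tokens together through every gadget, so they share one history and \pdis must pick the same clause $C_j$ for all of them. Two agents then choose distinct literals $l_1\neq l_2$ of $C_j$. Since $s_{l_1}$ misses only $\alpha_{\overline{l_1}}$ and $s_{l_2}$ misses only $\alpha_{\overline{l_2}}\neq\alpha_{\overline{l_1}}$, the set $\beta$ is covered whether or not $\Phi$ is true (for clauses with two distinct literals). Your count only bounds the number of peeled agents from above.

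The per-variable bookkeeping objective is therefore not an optional strengthening but the missing idea. Let $b_i$ consist of the two sinks of gadget $i$. Covering $b_1,\ldots,b_n$ requires $n$ agents resting in sinks of distinct gadgets. For $n\geq 2$ the last agent must then reach the clause gadget, since otherwise at most $n+1<2n$ literal objectives are covered. This yields exactly one peel per gadget and one clause agent, against every strategy of \pdis.

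Your argument for $\Phi=\bff$ also needs a repair. Nothing makes the agents agree at an existential gadget: they may split between $t_i$ and $f_i$ and then face different histories, so ``the existential choices are made jointly'' is unjustified. Instead, take the value of $x_i$ to be the one recorded by the sink of the unique agent peeled at gadget $i$. Let \pdis send every token present at a universal gadget to the side that the QBF counter-strategy prescribes for the values fixed so far, and then pick a clause falsified by the resulting assignment. The literal $l$ that the clause agent picks is then false, and $\alpha_{\overline{l}}$ is missed by every sink reached. With both repairs your reduction works.

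Your bookkeeping objectives are exactly the paper's $\alpha_x=\set{x,\overline{x}}$. The paper uses $k=n$ agents. An existential choice there is made only by the departing agent, which moves directly into a literal sink, so the continuing agents share one history and consistency is automatic. The paper then checks $\phi$ with clause objectives $\set{l_i^1,l_i^2,l_i^3}$ over these sinks. It therefore needs no extra agent, no clause gadget, and no literal objectives.
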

\begin{proof}
We describe reductions from QBF. We start with \buchi CGs. Consider a set of variables $X = \set{x_1,\ldots,x_n}$ and a QBF formula $\Phi = Q_1 x_1,\ldots,Q_n x_n \phi$ where $\phi = C_1\wedge \cdots \wedge C_m$, with $C_i = (l_i^1\vee l_i^2\vee l_i^3)$, for every $i\in [m]$. We construct a \buchi CG $\G = \tuple{G,|X|,\beta}$ such that $\Phi = \bft$ iff \pcov has a covering strategy in $\G$.

The game graph $G$ (see example in Figure~\ref{fig QBF to cover buchi}) lets the players choose an assignment to the variables in $X$, following the quantification order of $\Phi$ and starting from the outermost variable. 
Starting from the initial vertex, when the play reaches a vertex that corresponds to an existential variable $x_i$, \pcov chooses between choosing an assignment to $x_i$, which involves proceeding to a self-looped sink that corresponds to the literal $x_i$, or proceeding to a self-looped sink that corresponds to the literal $\overline{x_i}$, and proceeding to the next variables. When the play reaches a vertex that corresponds to a universal variable $x_i$, \pdis chooses an assignment to $x_i$, by proceeding to a vertex that corresponds to the literal $x_i$, and proceeding to a vertex that corresponds to the literal $\overline{x_i}$. In those vertices, \pcov chooses between staying in the current vertex, and proceeding to the next variables.

\begin{figure}[htp]
\centering
\includegraphics[width=7.6cm]{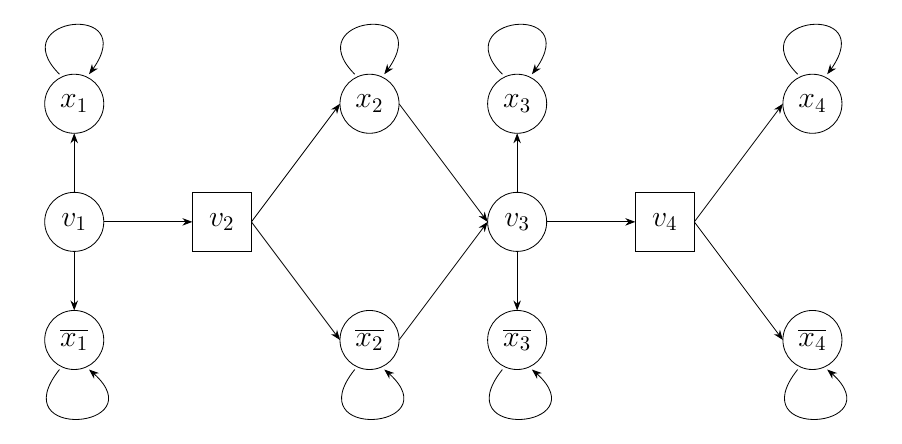}
\caption{The game graph $G$ for $\Phi = \exists x_1 \forall x_2 \exists x_3 \forall x_4 \phi$.
}
\label{fig QBF to cover buchi}
\end{figure}

Since the vertices that correspond to existential variables are self-looped sinks, the choices of \pcov are not reflected in the history of plays that reach the subsequent variables. Hence, when \pdis chooses an assignment to a universal variable, she is unaware of the assignments \pcov chose for preceding existential variables. The vertices that correspond to universal variables are not sinks, thus when choosing assignments to existential variables, \pcov is aware of the assignments \pdis chose for preceding universal variables. 

Then, for every variable $x\in X$, we define an objective $\alpha_x$ that forces \pcov to allocate an agent to stay in the vertices that correspond to the assignment chosen for $x$. Also, for every clause $C_i$, we define the objective $\alpha_{C_{i}}$ that consists of vertices that correspond to $C_i$'s literals. Thus, the joint assignment satisfies $\phi$ iff \pcov can cover all the objectives. 

Formally, $\G = \tuple{G,|X|,\beta}$, where the game graph $G = \tuple{V_1,V_2,v_1,E}$ has the following components.
\begin{enumerate}
\item $V_1 = \set{v_i: i\in [n] \text{ and }Q_i = \exists}\cup \set{l : l\in X\cup \overline{X}}$. The vertices in $\set{v_i: i\in [n] \text{ and }Q_i = \exists}$ are {\em existential variable vertices}, and the vertices in $\set{l : l\in X\cup \overline{X}}$ are {\em literal vertices}.
\item $V_2 = \set{v_i: i\in [n] \text{ and }Q_i = \forall}$. The vertices in $\set{v_i: i\in [n] \text{ and }Q_i = \forall}$ are {\em universal variable vertices}.
\item The set $E$ of edges includes the following edges.
\begin{enumerate}
\item $\tuple{v_i, x_i}$ and $\tuple{v_i,\overline{x_i}}$, for every $i\in [n]$. That is, from the variable vertex $v_i$, the owner of the vertex chooses an assignment to the variable $v_i$ by proceeding to one of the literal vertices $x_i$ and $\overline{x_i}$, which correspond to assigning $\bft$ and $\bff$, respectively. 
\item $\tuple{l,l}$, for every $l\in X\cup \overline{X}$. That is, \pcov can choose to stay in every literal vertex.
\item $\tuple{v_i,v_{i+1}}$, for every $i\in [n-1]$ such that $Q_i = \exists$. That is, from existential variable vertices, \pcov can proceed to the next variable vertex.
\item $\tuple{l, v_{i+1}}$, for every $i\in [n-1]$ such that $Q_i = \forall$ and $l\in \set{x_i,\overline{x_i}}$. That is, from literal vertices that correspond to universal variables, \pcov can choose between staying in the literal vertex, and proceeding to the next variable vertex.
\end{enumerate}
\end{enumerate}
The set $\beta$ contains the following objectives.
\begin{enumerate}
    \item $\alpha_x = \set{x,\overline{x}}$, for every $x\in X$.
    \item $\alpha_{C_i} = \set{l_i^1,l_i^2,l_i^3}$, for every $i\in [m]$.
\end{enumerate}

We prove the correctness of the construction. That is, we prove that $\Phi = \bft$ iff \pcov has a covering strategy in $\G$. Note that the objective $\alpha_x$, for every $x\in X$, guarantees that exactly one agent has to stay in a vertex that correspond to one of the literals $x$ and $\overline{x}$.

For the first direction, assume $\Phi =\bft$. Consider the strategy $F_1$ for \pcov that chooses an assignment to the existential variables, depending on the assignments chosen for preceding variables, in a way that ensures $\phi$ is satisfied. Then, for every literal vertex that corresponds to universal variables the play encounters, $F_1$ leaves one of the agents to traverse its self-loop of indefinitely. Thus, the objective $\alpha_x$ is satisfied in the play that stays in the literal vertex $x$ or $\overline{x}$, for every $x\in X$. And, since the chosen assignment satisfies $\phi$, and literal vertices are visited infinitely often iff they are evaluated to $\bft$ in the chosen assignment, every objective $\alpha_{C_i}$ is satisfied in at least one play. Therefore, $F_1$ is a covering strategy.  

For the second direction, assume there exists a covering strategy $F_1$ for \pcov in $\G$. Note that by the definition of the construction, $F_1$ allocates a single agent to stay indefinitely in every literal vertex the play reaches. Also note that for every strategy $f_2$ for \pdis, the set of literal vertices that the plays visit infinitely often induce an assignment that satisfy $\phi$. Indeed, otherwise there exists a strategy $f_2$ for \pdis that induces an assignment for which there exists a clause $C_i$ all whose literals are evaluated to $\bff$, thus $\alpha_{C_i}$ is not covered in $\outcome(\tuple{F_1,f_2})$, contradicting the fact that $F_1$ is a covering strategy. Thus, there exists an assignment to the existential variables, depending on assignments to preceding universal variables, that guarantee the sanctification of $\phi$. Accordingly, $\Phi = \bft$.

For co-\buchi CGs, note that, as in previous reductions, a \buchi objective $\alpha\subseteq (X\cup \overline{X})$ is satisfied in a play $\rho$ iff the co-\buchi objective $(X\cup \overline{X})\setminus \alpha$ is satisfied in $\rho$, and thus the same reduction with dual objectives applies. 
Hence, \pcov has a covering strategy in the \buchi CG $\G=\tuple{G,|X|,\beta}$ iff \pcov has a covering strategy in the co-\buchi CG $\G' = \tuple{G,|X|,\beta'}$, with $\beta' = \set{(X\cup \overline{X})\setminus \alpha: \alpha\in \beta}$. Therefore, $\Phi = \bft$ iff \pcov has a covering strategy in $\G'$.
\end{proof}

\section{The Complexity of the Disruption Problem}
In this section, we study the the complexity of the disruption problem for \buchi and co-\buchi CGs. Note that since CGs are undetermined, the coverage and disruption problems are not dual, and so the results in Section~\ref{sec comp cov} do not imply that the disruption problem is PSPACE-complete. In fact, we show here that the disruption problem is $\npconp$-complete.
Thus, at least in terms of its theoretical complexity class, it is easier than the coverage problem. In Sections~\ref{CGs with a fixed number of agents} and~\ref{CGs with a fixed number of agents}, we complete the picture with a parameterized-complexity analysis of the problem.

We start with upper bounds. Consider a CG $\G = \tuple{G, k,\beta}$, and consider a two-player game played on $G$. For every strategy $f_2$ for \ptwo, let $\Delta_{f_2}\subseteq 2^\beta$ be the set of maximal subsets $\delta\subseteq \beta$ such that \pone has a strategy that ensures the satisfaction of the All objective $\delta$ when \ptwo uses $f_2$. That is, $\Delta_{f_2}$ is the set of maximal sets $\delta\subseteq \beta$ for which there exists a strategy $f_1$ for \pone such that $\sat(\outcome(\tuple{f_1,f_2}),\beta) = \delta$. 

\begin{lem}
\label{Delta f2}
Consider a strategy $f_2$ for \ptwo. For every strategy $f_1$ for \pone, there exists $\delta\in \Delta_{f_2}$ such that $\outcome(\tuple{f_1,f_2})$ satisfies the All-$\tilde{\gamma}$ objective $(\beta \setminus \delta)$.
\end{lem}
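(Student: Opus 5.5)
Fix a strategy $f_1$ for \pone and let $\rho=\outcome(\tuple{f_1,f_2})$. The plan is to take the set $\sat(\rho,\beta)$ of objectives that $\rho$ satisfies and enlarge it to a maximal element of $\Delta_{f_2}$ that still leaves $\beta\setminus\sat(\rho,\beta)$ untouched. I will read the definition of $\Delta_{f_2}$ in its intended sense: the maximal (w.r.t.\ inclusion) sets among $\{\sat(\outcome(\tuple{f_1',f_2}),\beta) : f_1' \text{ a strategy for \pone}\}$. The formal phrasing with ``$=\delta$'' supports exactly this reading.

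The key step is to observe what the All-$\tilde{\gamma}$ objective $\beta\setminus\delta$ asks of $\rho$. Since $\tilde{\gamma}$ is the dual type, $\rho$ satisfies the $\tilde{\gamma}$ objective $\alpha$ iff it does not satisfy the $\gamma$ objective $\alpha$. This holds because \buchi and co-\buchi are complementary: $\textit{inf}(\rho)\cap\alpha\neq\emptyset$ versus $=\emptyset$. Hence $\rho$ satisfies the All-$\tilde{\gamma}$ objective $\beta\setminus\delta$ iff no objective of $\beta\setminus\delta$ is in $\sat(\rho,\beta)$, i.e., iff $\sat(\rho,\beta)\subseteq\delta$. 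So it suffices to find $\delta\in\Delta_{f_2}$ with $\sat(\rho,\beta)\subseteq\delta$.

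To find such a $\delta$, note that $\delta_0=\sat(\rho,\beta)$ belongs to the family $\mathcal{S}=\{\sat(\outcome(\tuple{f_1',f_2}),\beta)\}$, witnessed by $f_1'=f_1$. Since $\beta$ is finite, $\mathcal{S}$ is a finite nonempty family of subsets of $\beta$. Every element of $\mathcal{S}$ is therefore contained in some inclusion-maximal element of $\mathcal{S}$: take a longest strictly increasing chain in $\mathcal{S}$ starting at $\delta_0$. That maximal element $\delta$ lies in $\Delta_{f_2}$ and satisfies $\delta_0\subseteq\delta$, which by the previous paragraph gives the claim.

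The only delicate point is the interpretation of $\Delta_{f_2}$. If it were instead read as maximal $\delta$ such that some $f_1'$ merely satisfies all of $\delta$ (a superset condition), the same argument works verbatim: $\delta_0$ is such a set, extend it to a maximal one, and the inclusion $\delta_0\subseteq\delta$ still yields the All-$\tilde{\gamma}$ satisfaction of $\beta\setminus\delta$. No game-theoretic reasoning is needed, since $f_2$ is fixed and everything reduces to finite maximality.
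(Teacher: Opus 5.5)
Your proof is correct and takes essentially the same route as the paper: both extend $\sat(\rho,\beta)$ to a maximal element $\delta\in\Delta_{f_2}$ and conclude that $\rho$ satisfies none of the objectives in $\beta\setminus\delta$, hence satisfies each of them as a $\tilde{\gamma}$ objective. The only difference is that you spell out two steps the paper uses without comment, namely the finiteness argument for the existence of the maximal extension and the \buchi/co-\buchi duality.
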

\begin{proof}
Consider a strategy $f_2$ for \pdis, and a strategy $f_1$ for \pcov, and let $\rho = \outcome(\tuple{f_1,f_2})$. By the definition of $\Delta_{f_2}$, there exists $\delta\in\Delta_{f_2}$ such that $\sat(\rho, \beta)\subseteq \delta$. Accordingly, $(\beta \setminus \delta) \cap \sat(\rho, \beta)=\emptyset$, and so $\rho$ satisfies all the $\tilde{\gamma}$ objectives in $\beta\setminus \delta$.
\end{proof}

Back to CGs, the intuition behind the definition of $\Delta_{f_2}$ is that $\delta \in \Delta_{f_2}$ iff whenever \pdis follows $f_2$, \pcov can allocate one agent that covers exactly all the objectives in $\delta$. Formally, we have the following, which follows directly from the definitions.
\begin{lem}
\label{disrupting strategy Delta f2}
Consider a CG $\G = \tuple{G,k,\beta}$. A strategy $f_2$ for \pdis is a disrupting strategy in $\G$ iff for every $k$ sets $\delta_1,\ldots,\delta_k \in \Delta_{f_2}$, we have that $\bigcup_{i\in[k]} \delta_i \neq \beta$.
\end{lem}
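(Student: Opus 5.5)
The plan is to prove both directions by contraposition. The key fact is that once $f_2$ is fixed, the $k$ agents interact with the same deterministic opponent but choose their strategies independently. Hence the set of achievable outcome tuples is exactly a product of $k$ copies of the set of plays achievable by a single \pone strategy against $f_2$. Throughout we use that every set $\sat(\outcome(\tuple{f_1,f_2}),\beta)$ is contained in some maximal element of $\Delta_{f_2}$. This holds because $\beta$ is finite, so the family of achievable satisfied-sets has maximal elements above each member; it is the same fact used in Lemma~\ref{Delta f2}.

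For the ``only if'' direction, assume there are $\delta_1,\ldots,\delta_k\in\Delta_{f_2}$ with $\bigcup_{i\in[k]}\delta_i=\beta$. By the definition of $\Delta_{f_2}$, for each $i$ there is a strategy $f_1^i$ for \pone with $\sat(\outcome(\tuple{f_1^i,f_2}),\beta)=\delta_i$. Let $F_1=\tuple{f_1^1,\ldots,f_1^k}$. The definition of the outcome of a profile in a CG gives $\outcome(\tuple{F_1,f_2})=\tuple{\outcome(\tuple{f_1^1,f_2}),\ldots,\outcome(\tuple{f_1^k,f_2})}$. The union of the satisfied sets is therefore $\bigcup_i\delta_i=\beta$, so $\tuple{F_1,f_2}$ covers $\beta$ and $f_2$ is not disrupting.

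For the ``if'' direction, assume $f_2$ is not disrupting. Then there is a strategy $F_1=\tuple{f_1^1,\ldots,f_1^k}$ such that $\tuple{F_1,f_2}$ covers $\beta$. Let $\rho_i=\outcome(\tuple{f_1^i,f_2})$ and $S_i=\sat(\rho_i,\beta)$, so $\bigcup_i S_i=\beta$. Each $S_i$ is achievable by a single strategy against $f_2$, so there is $\delta_i\in\Delta_{f_2}$ with $S_i\subseteq\delta_i$. Since each $\delta_i\subseteq\beta$, we get $\beta=\bigcup_i S_i\subseteq\bigcup_i\delta_i\subseteq\beta$. Thus the $\delta_i$ form a union equal to $\beta$, contradicting the hypothesis.

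The only delicate point is the definition of $\Delta_{f_2}$. It is phrased as maximal sets $\delta$ equal to some satisfied set, not merely contained in one. The argument needs two things: maximal sets are realized exactly, which the definition gives and which is used in the first direction; and every realized set lies below a maximal one, which finiteness of $\beta$ gives and which is used in the second direction. Both are immediate, so no real obstacle arises. The proof is essentially an unwinding of the definitions, using that \pdis plays the same $f_2$ against all agents.
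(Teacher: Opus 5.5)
Your proof is correct and follows the same route as the paper: both unwind the definitions of a disrupting strategy and of $\Delta_{f_2}$, using that all agents face the same $f_2$ and choose their strategies independently. You make explicit two facts that the paper's one-line argument leaves implicit: each maximal set is realized exactly by some agent strategy, and, since $\beta$ is finite, every realized set $\sat(\rho,\beta)$ lies below some element of $\Delta_{f_2}$.
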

\begin{proof}
A strategy $f_2$ is a disrupting strategy iff for every strategy $F_1$ for \pcov, there exists an objective $\alpha_i\in\beta$ such that $\alpha_i$ is not satisfied in $\rho$, for every play $\rho\in\outcome(\tuple{F_1,f_2})$. Since the sets in $\Delta_{f_2}$ are the maximal sets of objectives that \pcov can satisfy while \pdis uses $f_2$, the above claim follows.
\end{proof}

We use the observations about $\Delta_{f_2}$ in order to restrict the search for disrupting strategies. We start with \buchi CGs.  

\begin{thm}
\label{P2 memoryless}
\pdis has a disrupting strategy in a \buchi CG $\G$ iff she has a memoryless disrupting strategy in $\G$.
\end{thm}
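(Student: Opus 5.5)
The plan is to turn an arbitrary disrupting strategy $f_2$ into a memoryless one $f_2'$ by solving an auxiliary two-player game in which \ptwo has a Rabin objective, and then to use positional determinacy of Rabin games. The direction from memoryless to arbitrary is trivial, since a memoryless disrupting strategy is in particular a disrupting strategy. So assume $f_2$ is a disrupting strategy in the \buchi CG $\G = \tuple{G,k,\beta}$.

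\emph{Step 1 (the auxiliary game).} Consider the set $\Delta_{f_2}$ from above. Define the objective $\psi$ for \ptwo on $G$: a play $\rho$ satisfies $\psi$ iff $\sat(\rho,\beta) \subseteq \delta$ for some $\delta \in \Delta_{f_2}$. Since $\beta$ consists of \buchi objectives, $\sat(\rho,\beta)\subseteq \delta$ iff $\rho$ visits $\bigcup(\beta\setminus\delta)$ only finitely often. Hence $\psi$ is the disjunction, over $\delta\in\Delta_{f_2}$, of the co-\buchi objectives $\bigcup(\beta\setminus\delta)$. Each disjunct is a Rabin pair $(E,F)$ with $E=\bigcup(\beta\setminus\delta)$ and $F=V$, so $\psi$ is a Rabin objective for \ptwo.

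\emph{Step 2 ($f_2$ wins this game).} By Lemma~\ref{Delta f2}, for every strategy $f_1$ of \pone, the play $\outcome(\tuple{f_1,f_2})$ satisfies the All-co-\buchi objective $\beta\setminus\delta$ for some $\delta\in\Delta_{f_2}$. Hence it satisfies $\psi$, and so $f_2$ is winning for \ptwo from $v_0$.

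\emph{Step 3 (positional strategy).} Invoke the classical result that in games with Rabin objectives, the Rabin player has memoryless winning strategies, uniform on her winning region (Emerson--Jutla / Klarlund). It yields a memoryless strategy $f_2'$ such that every play consistent with $f_2'$ from $v_0$ satisfies $\psi$. This is the step I expect to be the main obstacle. It is the only genuinely nontrivial ingredient, and one must check that the objective really is Rabin; with \buchi $\beta$, the negation is co-\buchi, and this is exactly where \buchi matters. I would cite the theorem rather than reprove it.

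\emph{Step 4 ($f_2'$ is disrupting).} Take any $F_1=\tuple{f_1^1,\ldots,f_1^k}$, and let $\rho_i=\outcome(\tuple{f_1^i,f_2'})$. By Step~3, there are $\delta_i\in\Delta_{f_2}$ with $\sat(\rho_i,\beta)\subseteq\delta_i$. Then
\[
\bigcup_{i\in[k]}\sat(\rho_i,\beta)\subseteq\bigcup_{i\in[k]}\delta_i\neq\beta,
\]
where the inequality is Lemma~\ref{disrupting strategy Delta f2} applied to the disrupting strategy $f_2$. Thus $\tuple{F_1,f_2'}$ does not cover $\beta$. As $F_1$ was arbitrary, $f_2'$ is a memoryless disrupting strategy.
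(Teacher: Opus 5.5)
Your proof is correct and follows essentially the same route as the paper. The paper likewise shows that $f_2$ wins the ExistsC objective $\set{\cup(\beta\setminus\delta)}_{\delta\in\Delta_{f_2}}$, takes a memoryless winning strategy for it, and concludes via Lemma~\ref{disrupting strategy Delta f2}. Casting this objective as a Rabin condition with trivial $F$-components simply makes explicit the positional-determinacy result the paper invokes.
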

\begin{proof}
Consider a \buchi CG $\G = \tuple{G,k,\beta}$. 
Assume \pdis has a disrupting strategy $f_2$ in $\G$. 
By Lemma~\ref{Delta f2}, for every strategy $f_1$ for an agent of \pcov, there exists $\delta\in\Delta_{f_2}$ such that $\outcome(\tuple{f_1,f_2})$ satisfies the AllC objective $\beta\setminus \delta$, which is equivalent to the co-\buchi objective $\cup(\beta\setminus \delta)$. Accordingly, $f_2$ is a winning strategy for \ptwo in a game played on $G$ in which her objective is the ExistsC objective $\set{\cup(\beta \setminus \delta)}_{\delta\in \Delta_{f_2}}$. Let $f'_2$ be a memoryless winning strategy for \ptwo in this game. Since ExistsC objectives require memoryless strategies, such a strategy 
$f'_2$ 
exists.
We prove that $f'_2$ is a disrupting strategy 
for \pdis 
in $\G$.

Consider a strategy $F_1 = \tuple{f_1^1,\ldots,f_1^k}$ for \pcov. For every $i\in[k]$, let $\rho_i = \outcome(\tuple{f_1^i,f'_2})$. Since $f'_2$ is a winning strategy for the ExistsC objective $\set{\cup(\beta\setminus \delta)}_{\delta\in \Delta_{f_2}}$, for every $i \in [k]$ there exist $\delta_i \in \Delta_{f_2}$ such that $\rho_i$ satisfies the co-\buchi objective $\cup(\beta\setminus \delta_i)$, and so the set of \buchi objectives that $\rho_i$ satisfies is a subset of $\delta_i$. Since $\delta_1,\ldots,\delta_k \in \Delta_{f_2}$ and $f_2$ is a disrupting strategy in $\G$, then by Lemma~\ref{disrupting strategy Delta f2}, we have that $\bigcup_{i\in[k]}\delta_i \neq \beta$. Thus, there exists an objective $\alpha_j\in\beta$ such that for every $i\in[k]$, $\alpha_j$ is not satisfied in $\rho_i$. Accordingly, we have that $f'_2$ is a disrupting strategy for \pdis in $\G$.
\end{proof}

The fact we can restrict attention to memoryless disrupting strategies leads to an upper bound for the disruption problem, which we show to be tight. 

\begin{thm}
\label{decide disrupt buchi upper}
The disruption problem for \buchi CGs is in $\npconp$.
\end{thm}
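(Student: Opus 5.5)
By Theorem~\ref{P2 memoryless}, it suffices to decide whether \pdis has a \emph{memoryless} disrupting strategy. The $\npconp$ algorithm therefore guesses a memoryless strategy $f_2: V_2 \rightarrow V$, which has polynomial size, and then uses a coNP check to verify that $f_2$ is disrupting. Fixing $f_2$ turns $G$ into a one-player graph $G_{f_2}$ in which all vertices are effectively owned by \pcov: every $V_2$-vertex keeps only its single $f_2$-edge. Since the agents play independently against the fixed $f_2$, the profile $\tuple{F_1,f_2}$ covers $\beta$ iff there are $k$ paths in $G_{f_2}$ whose satisfied \buchi objectives together cover $\beta$. This is exactly the condition of Lemma~\ref{disrupting strategy Delta f2}, where $\Delta_{f_2}$ is the family of maximal sets satisfiable by a single path in $G_{f_2}$.

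Hence $f_2$ is \emph{not} disrupting iff the one-player \buchi coverage instance $\tuple{G_{f_2},k,\beta}$ has a covering strategy. By Theorem~\ref{V2 = emptyset}, this is in NP: a witness is $k$ lasso-shaped paths of length at most $|V|\cdot|\beta|$, each verified in polynomial time together with the check that every $\alpha\in\beta$ is visited infinitely often by some lasso. (For \buchi one can equally guess $k$ reachable non-trivial SCCs of $G_{f_2}$ and check that every $\alpha$ meets one of them, as in Theorem~\ref{V2 = emptyset and fixed k B}.) Consequently, checking that $f_2$ is disrupting is in coNP.

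Combining the two steps gives the $\npconp$ procedure: there exists a memoryless $f_2$ such that, for all $k$-tuples of polynomial-size lassos (or SCC choices) in $G_{f_2}$, some objective $\alpha\in\beta$ is left uncovered. The main point to argue carefully is the equivalence between ``$\tuple{F_1,f_2}$ covers $\beta$ for some $F_1$'' and the existence of $k$ covering lassos in $G_{f_2}$. Because $f_2$ is memoryless, every agent strategy yields a path in $G_{f_2}$, and conversely every path in $G_{f_2}$ is realizable by some agent strategy. The lasso bound then follows from the argument in the proof of Theorem~\ref{V2 = emptyset}, so the equivalence holds and the stated upper bound follows.
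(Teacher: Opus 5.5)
Your proposal is correct and follows the paper's proof essentially verbatim. You use Theorem~\ref{P2 memoryless} to guess a memoryless $f_2$, then the NP bound of Theorem~\ref{V2 = emptyset} on the one-player graph $G_{f_2}$ for the coNP oracle check. Your extra justification that agent strategies against a memoryless $f_2$ correspond exactly to paths in $G_{f_2}$ is a detail the paper leaves implicit.
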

\begin{proof}
Consider a \buchi CG $\G = \tuple{G,k,\beta}$.
An NP algorithm that uses a co-NP oracle guesses a memoryless strategy $f_2$ for \pdis, and then checks that \pcov does not have a covering strategy in the one-player game defined over the sub-graph $G_{f_2}$ induced by $f_2$. That is, the sub-graph of $G$ in which the edges from vertices in $V_2$ agree with $f_2$. Since \pcov essentially owns all the vertices in $G_{f_2}$, then by Theorem~\ref{V2 = emptyset}, deciding the existence of covering strategies in $G_{f_2}$ can be done in NP, implying the $\npconp$-upper bound.
\end{proof}

We continue to co-\buchi CGs. Here, disrupting strategies need not be memoryless, and in fact need not be polynomial. Accordingly, the proof is more complicated and is based on a symbolic description of disrupting strategies. 

A {\em superset objective} is a set $\F\subseteq 2^{2^{V}}$ of AllB objectives, and it is satisfied in a play $\rho$ iff at least one AllB objective in $\F$ is satisfied in $\rho$ \cite{HD05}. By \cite{HD05}, every superset objective has an equivalent AllB objective. Accordingly, in two-player superset games, \ptwo has an ExistsC objective, thus \ptwo wins iff she has a memoryless winning strategy. Since a memoryless strategy for \ptwo can be checked in polynomial time and the game is determined, the problem of deciding whether there exists a winning strategy for a superset objective is co-NP-complete \cite{HD05}.

We start with an easy characterization of winning strategies for AllB objectives. 
\begin{lem}
\label{AllB prop}
Consider an AllB game $\G = \tuple{G,\delta}$, with $\delta = \set{\alpha_1,\ldots,\alpha_m}$. \pone wins $\G$ iff there exist a set $W\subseteq V$ of vertices and a set $\set{g_1^i: i\in [m]}$ of memoryless strategies for \pone, such that \pone can force the play to reach $W$, and for every vertex $v\in W$ and $i\in[m]$, the strategy $g_1^i$ forces each play from $v$ to reach $\alpha_i \cap W$, while staying in $W$.
\end{lem}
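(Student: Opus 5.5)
We prove the two directions separately, relying on the standard theory of generalized \buchi games: their winning region is computed by a nested fixed point, and within it \pone wins by cycling through the targets.

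\emph{($\Leftarrow$)} Assume $W$ and $g_1^1,\ldots,g_1^m$ are given. We build a (finite-memory) winning strategy $f_1$ for \pone in three phases.
\begin{itemize}
\item First, $f_1$ follows an attractor strategy that forces the play into $W$.
\item Once in $W$, it keeps a counter $i\in[m]$, initialized to $1$, and follows $g_1^i$.
\item When the play visits a vertex of $\alpha_i\cap W$, the counter advances to $(i \bmod m)+1$ and $f_1$ switches to the next memoryless strategy.
\end{itemize}
Since each $g_1^i$ forces a visit to $\alpha_i\cap W$ from every vertex of $W$ while staying in $W$, every phase terminates after finitely many steps regardless of \ptwo's moves. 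The play therefore never leaves $W$ and completes infinitely many rounds, so every $\alpha_i$ is visited infinitely often and the All objective $\delta$ is satisfied. The only care needed is that ``staying in $W$'' is meant as a guarantee against all \ptwo moves, so that the next phase again starts inside $W$.

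\emph{($\Rightarrow$)} Assume \pone wins $\G$. Let $W$ be the greatest set $Z\subseteq V$ such that, for every $i\in[m]$, \pone can force from every vertex of $Z$ a visit to $\alpha_i\cap Z$ while staying in $Z$. This is the classical fixed point
\[
W=\nu Z.\bigcap_{i\in[m]} \mathit{CPre}\bigl(\mathit{Attr}_1^{Z}(\alpha_i\cap Z)\bigr),
\]
where the attractor is computed in the subgame restricted to $Z$. In a subgame, reachability is won by memoryless attractor strategies, and we take these as the $g_1^i$. For the cycling argument we also need the attractor statement to apply from every $v\in W$ itself, including the first step; this is arranged by taking the fixed point as the set of vertices from which \pone forces a nonempty visit to $\alpha_i\cap Z$ within $Z$.

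It remains to show that \pone can force the play into $W$. This is the main obstacle. We argue by contradiction using determinacy of the reachability game for target $W$. If \pone cannot force reaching $W$ from $v_0$, then \ptwo has a strategy keeping the play in $V\setminus\mathit{Attr}_1(W)$, a \ptwo-trap. Inside that trap, the fixed-point characterization of the generalized \buchi winning region shows that \ptwo can prevent some $\alpha_i$ from being visited infinitely often. The reason is that in the iteration from $V$ down to $W$, every removed vertex is one from which \ptwo can avoid some $\alpha_i$ within the current set or escape to an already-removed layer, and composing these layered strategies gives a \ptwo winning strategy. This contradicts the assumption that \pone wins. Alternatively, we may simply cite the standard result \cite{VW86a,Zie98} that the winning region of a generalized \buchi game is exactly this nested fixed point, which makes the ``reach $W$'' step immediate.
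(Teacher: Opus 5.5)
The paper gives no proof of this lemma. It introduces it as ``an easy characterization'' and leaves it to the standard theory of generalized \buchi games. Your proposal supplies the missing argument along the expected lines, and it is correct. Your ($\Leftarrow$) strategy is exactly the round-robin strategy $f^*_1$ that the paper itself uses in the proof of Lemma~\ref{inf successors}. Your ($\Rightarrow$) construction via the greatest fixed point $\nu Z.\bigcap_i \mathit{CPre}(\mathit{Attr}_1^{Z}(\alpha_i\cap Z))$ is the standard one.

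Two remarks.

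First, your insistence on \emph{nonempty} visits is not pedantry; for memoryless $g_1^i$ this is equivalent to requiring that plays stay in $W$ forever. Suppose reaching $\alpha_i\cap W$ at step $0$ were allowed, and ``staying in $W$'' were only required up to that point. Then take a single vertex $w\in\bigcap_i\alpha_i$ whose only successor leads out of $W$ to a sink outside all $\alpha_i$. With $W=\{w\}$ the hypothesis holds, yet \pone loses. So in ($\Leftarrow$), where the $g_1^i$ are given and you cannot ``arrange'' anything, you should say explicitly that you read the hypothesis in this strong form. It is also the reading the paper needs later, when it uses $g_1^i(v)\in W$ to obtain a $W$-trap strategy.

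Second, the step you call the ``main obstacle'' is smaller than you make it. With your definition, $W$ is exactly \pone's winning region, so the reach-$W$ clause just says $v_0\in W$. Your layered \ptwo strategy already gives this directly. It avoids the relevant $\alpha_i$ inside $Z_j$ and switches layers when the play escapes to a lower one; there are only finitely many switches because the layer index strictly decreases. That strategy is defined on the whole arena and wins from \emph{every} vertex outside $W$. The detour through the trap $V\setminus\mathit{Attr}_1(W)$ and determinacy of reachability is therefore unnecessary. The phrase ``inside that trap'' is also misleading: recomputing the fixed point within that subarena would need an extra argument that the result is empty.
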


We say that $W$ is a {\em winning cage} for $\delta$, and that $\set{g_1^i: i\in [m]}$ are its {\em witnesses}.

Lemma~\ref{inf successors} below suggests a way to describe winning strategies for AllB objectives by the sets of successors a winning strategy chooses infinitely often from every vertex.

For a set of vertices $U\subseteq V$, a strategy $f_1$ for \pone is a {\em $U$-trap} if it forces the play to reach and stay in $U$. That is, if $\textit{inf}(\outcome(\tuple{f_1,f_2}))\subseteq U$, for every strategy $f_2$ for \ptwo.

A {\em fairness requirement} for \pone is a function $g: V_1 \rightarrow 2^V$ that maps each vertex $v\in V_1$ to a subset of its successors. We say that a strategy $f_1$ for \pone$ ${\em respects $g$} if for every vertex $v\in V_1$, the strategy $f_1$ only chooses successors for $v$ from $g(v)$, and when $v$ is visited infinitely often, then $f_1$ proceeds to each of the successors of $v$ in $g(v)$ infinitely often. That is, for every vertex $v\in V_1$ and a prefix of a play $h\in V^*$ we have that $f_1(h\cdot v)\in g(v)$, and for every strategy $f_2$ for \ptwo and a vertex $v\in V_1$, if $v\in \textit{inf}(\outcome(\tuple{f_1,f_2}))$, then $g(v)\subseteq \textit{inf}(\outcome(\tuple{f_1,f_2}))$.

\begin{lem}
\label{inf successors}
Consider an AllB game $\G = \tuple{G, \delta}$, with $\delta =\set{\alpha_1,\ldots,\alpha_m}$. \pone wins $\G$ iff there exists a set of vertices $U\subseteq V$ and a fairness requirement $g$ for \pone such that every $U$-trap strategy for \pone that respects $g$ is a winning strategy for \pone in $\G$, and such a strategy exists.
\end{lem}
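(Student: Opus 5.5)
The direction from right to left is immediate. If there are a set $U$ and a fairness requirement $g$ such that some $U$-trap strategy respecting $g$ exists and every such strategy wins, then that strategy is itself a winning strategy for \pone in $\G$. So the work lies in the left-to-right direction, and I would build $U$ and $g$ from the winning cage of Lemma~\ref{AllB prop}.

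Assume \pone wins $\G$. Lemma~\ref{AllB prop} gives a winning cage $W$ with witnesses $g_1^1,\ldots,g_1^m$. I would take $U=W$. For every $v\in V_1\cap W$, I would define $g(v)=\set{g_1^i(v): i\in[m]}$, which is a subset of the successors of $v$ lying in $W$. For $v\in V_1\setminus W$, I would let $g(v)$ be the singleton containing the choice of a fixed memoryless attractor strategy to $W$; for vertices outside the attractor, any single successor will do.

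Existence of a $U$-trap strategy respecting $g$ comes from a round-robin construction. First play the attractor strategy until $W$ is reached. Inside $W$, keep a counter per vertex and, at the $j$-th visit to $v$, move to the $(j \bmod |g(v)|)$-th element of $g(v)$. Every chosen successor is in $W$, since the witnesses keep plays inside $W$, so the play stays in $W$. Every vertex visited infinitely often has each element of $g(v)$ chosen infinitely often. This strategy is therefore a $U$-trap and respects $g$.

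The main obstacle is showing that \emph{every} $U$-trap strategy respecting $g$ wins. Let $\rho$ be such an outcome, and set $I=\textit{inf}(\rho)\subseteq W$. The key property to establish is that $I$ is closed under the relevant moves. It is closed under $g$ because the strategy respects $g$. It is closed under all \ptwo edges that stay inside $W$, since the \ptwo successors of $I$-vertices reached infinitely often are in $I$ by definition.

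I would then fix $i\in[m]$ and $v\in I$ and argue that $I\cap\alpha_i\neq\emptyset$. I would consider the subgraph on $I$ in which \pone vertices use only the edge $g_1^i(u)$, which lies in $I$ because $g_1^i(u)\in g(u)$ and $I$ is $g$-closed, and \ptwo vertices use their edges that lead into $I$. Every \ptwo vertex in $I$ has some successor in $I$, because $\rho$ leaves it infinitely often.

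Here lies the subtlety: \ptwo edges from $I$ may also leave $I$ while still staying in $W$. I would handle this by arguing within $W$ instead. Since $g_1^i$ forces a visit to $\alpha_i\cap W$ from $v$ against all \ptwo choices, the restricted play in the subgraph above, which is a path inside $I$ consistent with $g_1^i$, must reach $\alpha_i$. That path exists because the subgraph is total on $I$, so it follows that $\alpha_i\cap I\neq\emptyset$.

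Applying this for every $i$ shows that $\rho$ satisfies all of $\delta$. I expect this closure-and-path argument to be the delicate step: it needs careful checking that every $g_1^i$-edge from $I$ stays in $I$, and that a valid \ptwo continuation always exists inside $I$.
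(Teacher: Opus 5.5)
Your proof is correct and follows the paper's overall plan. You take $U=W$ for a winning cage from Lemma~\ref{AllB prop} and let $g(v)=\set{g_1^i(v): i\in[m]}$. You then use that, for an outcome $\rho$ of a $g$-respecting $W$-trap strategy, the set $I=\textit{inf}(\rho)$ is closed under $g$-moves, and every \ptwo vertex of $I$ has a successor in $I$.

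The key step differs from the paper's. The paper takes a \emph{minimal} cage and builds the combined round-robin strategy $f^*_1$, which follows $g_1^i$ until $\alpha_i$ is hit and then switches. It argues that if $I\cap\alpha_i=\emptyset$, then \ptwo can confine the play of $f^*_1$ to $I$, contradicting that $f^*_1$ wins. This needs \ptwo to first steer the play of $f^*_1$ from the initial vertex into $I$, which the paper justifies only briefly by appealing to the minimality of $W$. You instead apply the single memoryless witness $g_1^i$ from a vertex of $I$, using that the cage guarantees reaching $\alpha_i$ from \emph{every} vertex of $W$. The resulting path stays in $I$, so it must meet $\alpha_i$ inside $I$. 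Working per objective this way removes the need for both minimality of $W$ and the reachability-into-$I$ step, which is the least transparent part of the paper's argument. You also make explicit two things the paper leaves implicit: the value of $g$ outside $W$, and the existence of a $W$-trap strategy respecting $g$.

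One small fix: the sentence saying $I$ is ``closed under all \ptwo edges that stay inside $W$'' is false as written, since a \ptwo vertex of $I$ may have successors in $W\setminus I$. Drop it. Your argument only uses, and correctly proves, that each \ptwo vertex of $I$ has at least one successor in $I$.
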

\begin{proof}
First, if there is a $U$-trap strategy for \pone that respects a fairness requirement $g$ such that every such strategy is a winning strategy in $\G$, then \pone has a winning strategy in $\G$. 

For the other direction, assume \pone wins $\G$. Consider a minimal winning cage $W \subseteq V$ and witnesses $\set{g_1^i: i\in [m]}$ for $W$. Note that by definition, \pone has a $W$-trap strategy that respects a fairness requirement $g$ with $g(v) = \set{g_1^i(v): i\in [m]}$ for every $v\in V_1\cap W$. We show that every such strategy is a winning strategy for \pone in $\G$.

Consider a strategy $f^*_1$ for \pone in $\G$ that forces the play to reach $W$, and, starting with $i=1$, follows $g_1^i$ until the play reaches $\alpha_i$, where it switches to follow $g_1^{(i+1)\mod m}$. It is easy to see that $f^*_1$ is a winning strategy for \pone in $\G$.

Now, consider a strategy $f_1$ for \pone that is a $W$-trap, and respects $g$. Note that such a strategy exists, since $W$ is a winning cage for $\delta$ and for every vertex $v\in V_1\cap W$, we have that $g(v)\subseteq W$. We show that $f_1$ is a winning strategy for \pone in $\G$.  

Assume by contradiction otherwise. Thus, there exists a strategy $f_2$ for \ptwo such that $U = \textit{inf}(\outcome(\tuple{f_1,f_2}))$ does not intersect with all the sets in $\delta$. That is, there exists $i\in [m]$ such that $U \cap \alpha_i = \emptyset$. We show that there exists a strategy $f'_2$ for \ptwo with $U' = \textit{inf}(\outcome(\tuple{f^*_1,f'_2}))$ such that $U'\subseteq U$, thus $U' \cap \alpha_i = \emptyset$, which contradicts the fact that $f^*_1$ is a winning strategy for \pone in $\G$.

Let $f'_2$ be a strategy for \ptwo with which $\outcome(\tuple{f^*_1,f'_2})$ reaches $U$, and chooses for every vertex $v\in U\cap V_2$ a successor in $U$. Note that \ptwo can make the play reach $U$ when \pone uses $f^*_1$, as otherwise $f^*_1$ is a strategy that forces the play to avoid $U$, which contradicts the fact $W$ is minimal. Note that $U'\subseteq U$, since both strategies choose successors in $U$ for vertices in $U$. Indeed, by the definition of $f^*_1$, for every vertex $v\in U' \cap V_1$ the set of successors $f^*_1$ chooses for $v$ is a subset of $g(v) =\set{g_1^i(v): i\in [m]}$. To conclude, note that by the definition of $f_1$, we have that $g(v)\subseteq U$. Indeed, every vertex $v\in U\cap V_1$ is visited infinitely often in $\outcome(\tuple{f_1,f_2})$, thus all the vertices in $g(v)$ are visited infinitely often.
\end{proof}

By Lemma~\ref{inf successors}, we do not need to guess an explicit strategy for \pone in order to determine whether she wins an AllB or a superset game. Instead, we can guess a set of vertices that might appear infinitely often, and the set of successors we choose infinitely often from each of those vertices.

Back to CGs, we show that disrupting strategies can be described symbolically using a fairness requirement.

We say that a set of subsets of objectives $\Delta\subseteq 2^\beta$ is {\em $k$-wise intersecting} if the intersection of every $k$ sets in $\Delta$ is not empty. That is, $\bigcap_{i\in[k]} \delta_i \neq \emptyset$, for every $\delta_1,\ldots,\delta_k\in\Delta$. 

\begin{lem}
\label{disrupting strategy iff k-wise intersecting}
A strategy $f_2$ for \pdis is a disrupting strategy iff the set $\set{(\beta\setminus \delta):\delta\in\Delta_{f_2}}$ is $k$-wise intersecting.
\end{lem}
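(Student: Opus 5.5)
This is essentially De Morgan applied to Lemma~\ref{disrupting strategy Delta f2}. I will chain equivalences through that lemma.

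\textbf{Step 1: invoke Lemma~\ref{disrupting strategy Delta f2}.} By that lemma, $f_2$ is a disrupting strategy iff for every choice of $k$ sets $\delta_1,\ldots,\delta_k\in\Delta_{f_2}$, repetitions allowed, we have $\bigcup_{i\in[k]}\delta_i\neq\beta$.

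\textbf{Step 2: complement within $\beta$.} Since each $\delta_i\subseteq\beta$, the condition $\bigcup_{i\in[k]}\delta_i\neq\beta$ is equivalent to $\beta\setminus\bigcup_{i\in[k]}\delta_i\neq\emptyset$. By De Morgan, this equals $\bigcap_{i\in[k]}(\beta\setminus\delta_i)\neq\emptyset$.

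\textbf{Step 3: reindex over the complements.} The map $\delta\mapsto\beta\setminus\delta$ is a bijection from $\Delta_{f_2}$ onto $\set{(\beta\setminus\delta):\delta\in\Delta_{f_2}}$. So ranging over all $k$-tuples of elements of $\Delta_{f_2}$ is the same as ranging over all $k$-tuples of elements of the complemented family. The condition of Step~1 therefore says exactly that every $k$ members of $\set{(\beta\setminus\delta):\delta\in\Delta_{f_2}}$ have nonempty intersection. This is the definition of $k$-wise intersecting.

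\textbf{Points to check.}
\begin{itemize}
\item The definition of $k$-wise intersecting must allow repeated sets, to match the arbitrary choice of $\delta_1,\ldots,\delta_k$ in Lemma~\ref{disrupting strategy Delta f2}. I read it that way, since it quantifies over $\delta_1,\ldots,\delta_k\in\Delta$.
\item $\Delta_{f_2}$ is nonempty, because $\emptyset$ is always achievable by \pone and so is contained in some maximal set. Hence neither side holds vacuously in a mismatched way.
\end{itemize}

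The only real subtlety is the reliance on Lemma~\ref{disrupting strategy Delta f2}. That lemma uses the maximality of the sets in $\Delta_{f_2}$: the objective set satisfied by any agent's play is contained in some $\delta\in\Delta_{f_2}$, and each maximal $\delta$ is realizable by some agent. I take it as given.
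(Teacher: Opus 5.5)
Your proof is correct and follows the paper's route: both reduce the statement to Lemma~\ref{disrupting strategy Delta f2} and then complement within $\beta$ via De Morgan. The paper handles the converse directly from the maximality of the sets in $\Delta_{f_2}$ rather than invoking that lemma again, but the reasoning is the same, and your nonemptiness remark is unnecessary since both conditions range over exactly the same $k$-tuples.
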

\begin{proof}
Assume first that $f_2$ is a disrupting strategy, and consider $k$ sets $\delta_1,\ldots,\delta_k \in \Delta_{f_2}$. By the definition of $\Delta_{f_2}$, there exist strategies $f_1^1,\ldots,f_1^k$ for the agents of \pcov with $\sat(\outcome(\tuple{f_1^i,f_2}),\beta) = \delta_i$, for every $i\in[k]$. 
Since $f_2$ is a disrupting strategy, then, by Lemma~\ref{disrupting strategy Delta f2}, we have that $\bigcup_{i\in[k]} \delta_i \neq \beta$. Therefore, $\bigcap_{i\in[k]}(\beta\setminus \delta_i) \neq \emptyset$, and so $\set{(\beta\setminus \delta):\delta\in\Delta_{f_2}}$ is $k$-wise intersecting.

For the second direction, consider a strategy $f_2$ for \pdis such that $\set{(\beta\setminus \delta):\delta\in\Delta_{f_2}}$ is $k$-wise intersecting. Then, for every strategy $F_1 = \tuple{f_1^1,\ldots,f_1^k}$ for \pcov and sets $\delta_1,\ldots,\delta_k\in \Delta_{f_2}$ for which $\sat(\outcome(f_1^i,f_2))\subseteq \delta_i$, for every $i\in[k]$, there exists $\alpha_j\in \bigcap_{i\in[k]}(\beta\setminus \delta_i)$. Thus, $\alpha_j$ is not covered in $\outcome(\tuple{F_1,f_2})$. Accordingly, $f_2$ is a disrupting strategy.
\end{proof}

\begin{lem}
\label{disrupting iff fair}
If \pdis has a disrupting strategy in a co-\buchi CG, then there exist a set of vertices $U$ and a fairness requirement $g$ such that every $U$-trap strategy for \pdis that respects $g$ is a disrupting strategy, and such a strategy exists.
\end{lem}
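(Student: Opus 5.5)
Given a disrupting strategy $f_2$ in a co-\buchi CG $\G=\tuple{G,k,\beta}$, I will turn it into a winning strategy for \pdis in an ordinary two-player game with an AllB-type objective, and then apply Lemma~\ref{inf successors} with the roles of the players swapped.

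The first step mirrors the proof of Theorem~\ref{P2 memoryless}. By Lemma~\ref{Delta f2}, for every strategy $f_1$ of a single agent there is $\delta\in\Delta_{f_2}$ such that $\outcome(\tuple{f_1,f_2})$ satisfies the All-\buchi objective $\beta\setminus\delta$. Since $\gamma=\text{C}$, the dual $\tilde\gamma$ is \buchi. Hence $f_2$ is a winning strategy for \ptwo in the two-player game on $G$ whose objective for \ptwo is the superset objective $\F_{f_2}=\set{\beta\setminus\delta : \delta\in\Delta_{f_2}}$, viewed as a set of AllB objectives. By \cite{HD05}, $\F_{f_2}$ is equivalent to a single AllB objective $\delta^*$ over a modified (finite-memory) arena. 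Alternatively, I would re-prove Lemma~\ref{inf successors} directly for superset objectives, which avoids changing the arena. I plan to take the second route: I will restate Lemma~\ref{inf successors} for \ptwo and a superset objective $\F$. The statement is that if \ptwo wins, then there exist $U$ and a fairness requirement $g$ for \ptwo such that every $U$-trap strategy of \ptwo respecting $g$ wins $\F$, and such a strategy exists. The proof follows Lemma~\ref{inf successors}: take a minimal winning cage $W$ for the superset objective, fix witnesses for the relevant AllB sets, and let $g(v)$ collect the witness successors.

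The second step transfers the property back to disruption. Let $U$ and $g$ be obtained for $\F_{f_2}$, and let $f'_2$ be any $U$-trap strategy respecting $g$. Then $f'_2$ wins $\F_{f_2}$. So for every agent strategy $f_1$ there is $\delta\in\Delta_{f_2}$ such that $\outcome(\tuple{f_1,f'_2})$ satisfies every \buchi set in $\beta\setminus\delta$, which means it violates all the co-\buchi objectives in $\beta\setminus\delta$. Thus $\sat(\outcome(\tuple{f_1,f'_2}),\beta)\subseteq\delta$. Now let $F_1=\tuple{f_1^1,\ldots,f_1^k}$ be any strategy for \pcov, and let $\delta_1,\ldots,\delta_k$ be the corresponding sets. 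By Lemma~\ref{disrupting strategy iff k-wise intersecting} applied to $f_2$, the sets $\beta\setminus\delta_i$ have a common objective $\alpha_j$. That $\alpha_j$ is not satisfied by any of the $k$ plays, so $f'_2$ is disrupting. This holds for every such $f'_2$, and at least one exists, which gives the lemma.

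The main obstacle is the first step, namely the version of Lemma~\ref{inf successors} for superset objectives. The difficulty is that a superset objective is a disjunction of AllB conditions, so a single minimal cage with fixed witnesses may not suffice: which AllB disjunct is targeted can depend on history. I will first try the reduction route. Using \cite{HD05}, the superset objective becomes a genuine AllB objective on a product arena, where Lemma~\ref{inf successors} applies verbatim. The resulting $U$ and $g$ live on the product, so I then need to project them to $G$. If the projection does not preserve the property that every respecting trap strategy wins, I will fall back on a direct argument. There, $U$ is the set of vertices visited infinitely often by a winning play pattern, and $g(v)$ is the set of successors taken infinitely often. The key claim is that any strategy with the same infinity set of edges realizes a superset of that visitation pattern, and hence satisfies the same AllB disjunct, since AllB conditions are upward closed in $\textit{inf}$.
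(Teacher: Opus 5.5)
Your second step is fine and is essentially the paper's. Once every $U$-trap strategy $f'_2$ respecting $g$ wins the superset objective $\F_{f_2}=\set{\beta\setminus\delta:\delta\in\Delta_{f_2}}$, each agent's play satisfies $\sat(\rho,\beta)\subseteq\delta$ for some $\delta\in\Delta_{f_2}$. The $k$-wise intersection property of $f_2$ (Lemma~\ref{disrupting strategy iff k-wise intersecting}) then finishes the argument. The paper phrases the same thing as ``$\set{\beta\setminus\delta:\delta\in\Delta_{f'_2}}$ is $k$-wise intersecting''. The gap is in the first step, which you correctly identify as the crux but do not carry out.

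The missing idea is that no product arena is needed: a superset objective is equivalent, on $G$ itself, to an AllB objective. Write $B(\alpha)$ for the \buchi condition of $\alpha$. By distributivity and $\bigvee_j B(\alpha_j)=B(\bigcup_j\alpha_j)$,
\[
\bigvee_{\delta\in\Delta_{f_2}}\ \bigwedge_{\alpha\in\beta\setminus\delta}B(\alpha)\;\equiv\;\bigwedge_{c}\,B\Big(\bigcup_{\delta\in\Delta_{f_2}}c(\delta)\Big),
\]
where $c$ ranges over the functions choosing $c(\delta)\in\beta\setminus\delta$ for every $\delta\in\Delta_{f_2}$. This is exactly the equivalent AllB objective the paper invokes from \cite{HD05}; its exponential size is irrelevant for an existence statement. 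Lemma~\ref{inf successors}, with the players' roles swapped, then gives $U$ and $g$ directly on $G$.

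Neither of your routes reaches this point. In the product route you leave the projection open, and there is no reason to expect it to work. A fairness requirement on the product may prescribe different successor sets for different memory states of the same vertex of $G$. In the direct route, the ``key claim'' is not well formed. A strategy of \pdis does not determine a single ``winning play pattern'', because $\textit{inf}$ of the outcome depends on the agent's strategy. Also, ``fixing witnesses for the relevant AllB sets'' presupposes committing to one disjunct of $\F_{f_2}$, and, as you note yourself, that choice may depend on the history. The sets that need a cage and witnesses are the unions $\bigcup_{\delta}c(\delta)$, not the sets of a single disjunct. With those sets, your direct route becomes the proof of Lemma~\ref{inf successors} for the converted objective.
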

\begin{proof}
Consider a co-\buchi CG $\G = \tuple{G,k,\beta}$, and assume there exists a disrupting strategy $f_2$ for \pdis. Let $\delta_{f_2}$ be the AllB objective equivalent to the superset objective $\set{(\beta\setminus\delta): \delta\in \Delta_{f_2}}$.
Since $f_2$ is a winning strategy for the superset objective $\set{(\beta\setminus\delta): \delta\in \Delta_{f_2}}$, and thus for its equivalent AllB objective $\delta_{f_2}$, then, by Lemma~\ref{inf successors}, there exist a set of vertices $U$ and a fairness requirement $g$ such that every $U$-trap strategy $f'_2$ for \pdis that respects $g$ is a winning strategy for $\delta_{f_2}$, and hence also a winning strategy for the superset objective $\set{(\beta\setminus\delta): \delta\in \Delta_{f_2}}$. Also, such a strategy exists. By Lemma~\ref{disrupting strategy iff k-wise intersecting}, in order to show that every such strategy $f_2'$ is a disrupting strategy, it is enough to show that $\set{(\beta\setminus\delta): \delta\in \Delta_{f'_2}}$ is $k$-wise intersecting. 

Consider a $U$-trap strategy $f'_2$ that respects $g$. Note that since $f'_2$ is a winning strategy for the superset objective $\set{(\beta\setminus\delta): \delta\in \Delta_{f_2}}$, for every set $\delta\in \Delta_{f'_2}$, there exists a set $\delta'\in \Delta_{f_2}$ such that $(\beta\setminus \delta') \subseteq (\beta \setminus \delta)$.
Now, consider $k$ sets $\delta_1,\ldots,\delta_k\in \Delta_{f'_2}$ of objectives, and for every $i\in [k]$, let $\delta'_i \in \Delta_{f_2}$ be a set of objectives such that $(\beta\setminus \delta'_i)\subseteq (\beta\setminus \delta_i)$. Since $\bigcap_{i\in [k]}(\beta\setminus \delta'_i)\subseteq \bigcap_{i\in [k]}(\beta\setminus \delta_i)$ and $\set{(\beta\setminus\delta): \delta\in \Delta_{f_2}}$ is $k$-wise intersecting, we have that $\bigcap_{i\in [k]}(\beta\setminus \delta_i) \neq \emptyset$. Accordingly, $\set{(\beta\setminus\delta): \delta\in \Delta_{f'_2}}$ is $k$-wise intersecting.
\end{proof}

The fact we can restrict attention to disrupting strategies that are given by fairness requirements leads to an upper bound for the disruption problem.

\begin{thm}
\label{decide disrupt CGC upper}
The disruption problem for co-\buchi CGs is in $\npconp$.
\end{thm}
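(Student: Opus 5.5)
The plan is an NP algorithm with a co-NP oracle that guesses a disrupting strategy for \pdis in the symbolic form given by Lemma~\ref{disrupting iff fair}. Given a co-\buchi CG $\G = \tuple{G,k,\beta}$, the algorithm guesses a set $U\subseteq V$ of vertices and a fairness requirement $g$ for \pdis. It then checks two things. First, that some $U$-trap strategy for \pdis respecting $g$ exists. Second, that every such strategy is disrupting. Both guesses have polynomial size. By Lemma~\ref{disrupting iff fair}, if \pdis has a disrupting strategy then a suitable pair $(U,g)$ exists, so the algorithm has an accepting guess. Conversely, if the checks succeed, any witnessing $U$-trap strategy respecting $g$ is disrupting, so soundness is immediate.

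The existence check is polynomial. I would first compute the set of vertices from which \pdis can force reaching a region inside $U$ where she can keep the play. Inside that region she must play a strategy that, from each $v\in V_2$, chooses only successors in $g(v)$. It must also cycle through all of $g(v)$ whenever $v$ recurs. Concretely, I require $g(v)\subseteq U$ for $v\in U\cap V_2$ and $g(v)\neq\emptyset$. For \pcov's vertices in $U$, all successors must lie in $U$ for \pdis to trap the play, or else the trap must be reached through an attractor-style computation. A round-robin strategy over $g(v)$ then realizes the requirement. This is a standard attractor/safety computation.

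The main step is the co-NP check that all $U$-trap strategies respecting $g$ are disrupting. I would reformulate it as follows, to avoid quantifying over strategies of \pdis. Any play consistent with \pdis following some $U$-trap $g$-respecting strategy visits infinitely often a set $S\subseteq U$. For every $v\in S\cap V_2$ we have $g(v)\subseteq S$, and $S$ is strongly connected via edges allowed by $g$ on \pdis's vertices and arbitrary edges on \pcov's vertices. Against the round-robin fair strategy, \pcov can realize any such reachable set $S$ as an infinity set for one agent. By Lemma~\ref{disrupting strategy iff k-wise intersecting}, the fair strategies all fail to be disrupting iff there are $k$ such ``fair'' sets $S_1,\ldots,S_k$, reachable in the $g$-restricted graph, with $\bigcup_i \{\alpha\in\beta : S_i\cap\alpha=\emptyset\}=\beta$. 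A co-NP refutation guesses these $k$ sets (or, more compactly, at most $|\beta|$ of them, one per objective, since only the covering matters) and verifies in polynomial time that each is reachable, strongly connected, and closed under $g$ on \pdis vertices.

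The hardest part will be justifying that the fairness-induced infinity sets are exactly what matters uniformly over \emph{all} respecting strategies. I need the failure of one respecting strategy to be witnessed by polynomially checkable sets, and conversely such sets must defeat every respecting strategy. My first attempt: fairness forces every play's infinity set to be $g$-closed, which gives one direction. For the other, \pcov can steer any respecting strategy into $S$ and cycle within it, because \pdis's choices from $S$ stay in $g(v)\subseteq S$. Since $\Delta$ only grows with more \pcov options, this gives a uniform characterization and places the problem in $\npconp$.
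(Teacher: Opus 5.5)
Your algorithm is the paper's: guess $U$ and $g$, check in polynomial time that some $U$-trap strategy respecting $g$ exists, and let the co-NP oracle refute by guessing $k$ strongly connected subsets of $U$ that are closed under $g$ on \pdis's vertices and together cover $\beta$. Soundness is fine, because the infinity set of any play against a $U$-trap $g$-respecting strategy passes your test. This is the only direction the paper makes explicit.

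The gap is the converse, which you need for completeness. Your claimed characterization (some respecting strategy fails to be disrupting iff a covering tuple of such sets exists) is false, and so is ``\pcov can steer any respecting strategy into $S$''. Respecting $g$ only restricts \pdis to choices in $g(v)$ and imposes nothing at vertices visited finitely often. Also, a single $f_2$ must serve all agents, whose histories coincide until \pcov separates them. Concretely, let $k=2$, let the initial vertex $u$ belong to \pdis with two self-looped sinks $s_1,s_2$ as successors, and let $\beta=\set{\set{s_1},\set{s_2}}$ (co-B\"uchi), $U=\set{u,s_1,s_2}$ and $g(u)=\set{s_1,s_2}$. Every respecting strategy sends all tokens to the same sink and is disrupting. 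Yet $\set{s_1}$ and $\set{s_2}$ are reachable, strongly connected, $g$-closed, and together cover $\beta$. So for a fixed $(U,g)$, ``every respecting strategy is disrupting'' does not imply that your oracle accepts. Here the guess $g(u)=\set{s_1}$ would pass, but your argument gives no reason why a passing guess always exists.

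The missing idea is to argue completeness only for the specific pair built in Lemma~\ref{disrupting iff fair} via Lemma~\ref{inf successors}. For a disrupting $f_2$, that $U$ is a winning cage for the AllB objective $\delta_{f_2}$ equivalent to the superset objective $\set{\beta\setminus\delta : \delta\in\Delta_{f_2}}$. The set $g(v)$ collects the values $g^B(v)$ of the memoryless witnesses, one witness for each B\"uchi set $B$ of $\delta_{f_2}$.

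For this pair, every nonempty strongly connected $S\subseteq U$ with $g(v)\subseteq S$ for all $v\in V_2\cap S$ meets every $B$, whether or not $S$ is realizable. To see this, take $v\in S$ minimizing the number of steps in which $g^B$ forces a visit to $B$ while staying in $U$. Suppose this rank were positive. If $v\in V_2$, then $g^B(v)\in g(v)\subseteq S$ has smaller rank. If $v\in V_1$, all its successors have smaller rank, and one of them lies in $S$ by strong connectivity. Either way we contradict minimality, so the rank is $0$ and $v\in B$.

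Hence $\set{\alpha\in\beta : S\cap\alpha\neq\emptyset}\supseteq\beta\setminus\delta$ for some $\delta\in\Delta_{f_2}$. Since $f_2$ is disrupting, Lemma~\ref{disrupting strategy iff k-wise intersecting} then shows that no $k$ such sets cover $\beta$. Thus the oracle accepts this guess, with or without your reachability filter, which is why the paper's test can ignore reachability.
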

\begin{proof}
An NP algorithm that uses a co-NP oracle guesses a set of vertices $U\subseteq V$ and a fairness requirement $g: (V_2\cap U) \rightarrow (2^U \setminus \set{\emptyset})$, checks that there exists a $U$-trap strategy for \pdis that respects $g$, and then the co-NP oracle checks that every $U$-trap strategy for \pdis that respects $g$ is a disrupting strategy. By Lemma~\ref{disrupting iff fair}, the search can be restricted to trap strategies that respect a given fairness requirement. 

We describe how the above two steps can indeed be done in NP and co-NP.
First, for the NP part, the algorithm checks that there exists a $U$-trap strategy for \pdis that respects $g$. For that, the algorithm checks that \pdis can force the play to reach $U$, checks that \pcov cannot force the play to leave $U$ by verifying that for every vertex $v\in V_1\cap U$, we have that $\succesor(v)\subseteq U$, and then checks that a $U$-trap strategy can respect $g$ by verifying that for every vertex $v\in V_2\cap U$, we have that $g(v) \subseteq \succesor(v)$. 

Then, for the co-NP oracle, the algorithm checks that every $U$-trap strategy for \pdis that respects $g$ is a disrupting strategy. For that, the algorithm checks that for every $U$-trap strategy $f_2$ for \pdis that respects $g$ and every strategy $F_1 = \tuple{f_1^1,\ldots,f_1^k}$ for \pcov, we have that $\beta$ is not covered in $\outcome(\tuple{F_1,f_2})$. That is, $\bigcap_{i\in[k]} \set{\alpha_j\in \beta: \textit{inf}(\outcome(\tuple{f_1^i,f_2}))\cap \alpha_j \neq \emptyset}\neq \emptyset$. For that, the algorithm checks that for every $k$ sets $U_1,\ldots,U_k\subseteq U$ of vertices such that for every $i\in[k]$ we have that $U_i = \textit{inf}(\outcome(\tuple{f_1,f_2}))$ for some strategy $f_1$ for an agent of \pcov and a $U$-trap strategy $f_2$ for \pdis that respects $g$, we have that 
$\bigcap_{i\in [k]} \set{\alpha_j\in\beta: U_i\cap \alpha_j \neq \emptyset}\neq \emptyset$. Note that for every set of vertices $U\subseteq V$ and a fairness requirement $g$ for \pdis, for every $U$-trap strategy $f_2$ for \pdis that respects $g$ and a strategy $f_1$ for an agent of \pcov, we have that $U' = \textit{inf}(\outcome(\tuple{f_1,f_2}))$ is a strongly connected subset of $U$ such that for every vertex $v\in V_2\cap U'$ we have that $g(v)\subseteq U'$, and the vertices in $U' \setminus \set{g(v): v\in V_2\cap U'}$ are visited infinitely often following choices made by $f_1$, and thus for every vertex $u\in U' \setminus \set{g(v): v\in V_2\cap U'}$ we have that $u\in \set{\succesor(v): v\in V_1 \cap U'}$. Thus, the co-NP oracle 
checks that for a guessed set of $k$ sets $U_1,\ldots,U_k\subseteq U$ of vertices, for every $i\in[k]$, we have that $U_i$ is strongly connected, $\set{g(v): v\in V_2\cap U_i}\subseteq U_i$, and $U_i \setminus \set{g(v): v\in V_2\cap U_i} \subseteq \set{\succesor(v): v\in V_1\cap U_i}$.
\end{proof}

We continue to lower bounds. Note that the game in the reduction in the proof of Theorem~\ref{decide cover lower} is determined when $\Phi$ is of the form $\forall X \exists Y \phi$. Indeed, in this case the assignment to the universal variables is independent of the assignment to the existential variables, and thus if $\Phi = \bff$, which holds iff the 2QBF formula $\exists X \forall Y \overline{\phi}$ is valid, \pdis has a disrupting strategy. Since 2QBF is $\npconp$-hard, we have the following.

\begin{thm}
\label{decide disrupt lower}
The disruption problem for \buchi or co-\buchi CGs is $\npconp$-hard.
\end{thm}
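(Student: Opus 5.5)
We reduce from 2QBF, following the hint preceding the statement. We take a formula $\Psi = \exists X \forall Y \psi$ with $\psi$ in 3DNF, which is $\npconp$-hard to decide. Then $\Phi = \forall X \exists Y \phi$ with $\phi = \overline{\psi}$ in 3CNF satisfies $\Phi = \bff$ iff $\Psi = \bft$. We order the variables so that all variables in $X$ (universal in $\Phi$) precede all variables in $Y$ (existential in $\Phi$). We then apply the construction from the proof of Theorem~\ref{decide cover lower} to $\Phi$, obtaining the \buchi CG $\G = \tuple{G,|X\cup Y|,\beta}$, and its dual $\G'$ with complemented objectives for co-\buchi. Since the construction is polynomial, it suffices to show that \pdis has a disrupting strategy in $\G$ iff $\Phi = \bff$.

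\emph{If $\Phi = \bft$.} By the proof of Theorem~\ref{decide cover lower}, \pcov has a covering strategy. Since covering and disrupting strategies cannot coexist, \pdis has no disrupting strategy.

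\emph{If $\Phi = \bff$.} There is an assignment $\xi_X$ to $X$ such that for every assignment to $Y$ some clause of $\phi$ is falsified. Let $f_2$ be the memoryless strategy in which \pdis, at each universal variable vertex $v_i$ with $x_i \in X$, moves to the literal vertex that agrees with $\xi_X$. This choice does not depend on history, and in particular not on anything \pcov does. We must show that for every $F_1$, some objective is uncovered. Consider the plays against $f_2$. The universal literal vertices visited are exactly those agreeing with $\xi_X$. Since the $Y$ variables come after $X$, every play that reaches an existential vertex $v_j$ has the same history (the path dictated by $\xi_X$). Consequently, for each $y_j$, the set of literals of $y_j$ that are visited infinitely often by some agent is well defined.

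\emph{Counting step.} Suppose all objectives $\alpha_x$ for $x \in X\cup Y$ are covered. Each play is eventually trapped in a self-loop at a single literal vertex, so it satisfies at most one $\alpha_x$. There are exactly $|X\cup Y|$ agents and $|X\cup Y|$ such objectives, so each agent settles on exactly one literal, one per variable. For $x\in X$ this literal must agree with $\xi_X$, since only those vertices are reachable under $f_2$. Hence the set of literals visited infinitely often is a full consistent assignment extending $\xi_X$. By the choice of $\xi_X$, some clause $C_i$ has all its literals false under this assignment, so $\alpha_{C_i}$ is uncovered. Therefore $f_2$ is disrupting. The co-\buchi case follows identically, since all plays end in self-looped sinks and the dual objectives $(X\cup\overline{X})\setminus\alpha$ behave exactly as the \buchi ones.

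\emph{Main obstacle.} The delicate point is the counting argument: ruling out agents that never settle in a sink, or agents that visit two literal vertices infinitely often. Literal vertices either are sinks or exit only forward to the next variable, and the graph has no cycles other than self-loops, so every play ends in a single self-loop. This needs to be checked carefully against the edge list of the construction.
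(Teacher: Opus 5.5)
Your proposal is correct and follows essentially the same route as the paper: the reduction from 2QBF through the construction of Theorem~\ref{decide cover lower} applied to $\forall X \exists Y \overline{\psi}$, a memoryless \pdis strategy that fixes the witness assignment to $X$, and the covering strategy of Theorem~\ref{decide cover lower} for the converse direction. Your counting argument (one agent per variable objective, each play eventually trapped in a single literal self-loop) makes explicit the restriction to \pcov strategies that leave exactly one agent per variable, which the paper leaves implicit. Your aside that all plays reaching an existential vertex share the same history is not accurate, since agents may loop different numbers of times at universal literal vertices, but nothing in your argument depends on it.
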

\begin{proof}
We describe reductions from $2$QBF.
Consider a 2QBF formula $\Phi = \exists X \forall Y\phi$ with $\phi = C_1\vee\cdots\vee C_m$ and $C_i = (l_i^1\wedge l_i^2\wedge l_i^3)$, for every $i\in [m]$. For both $\gamma\in\set{\text{B,C}}$, we construct from $\Phi$ a $\gamma$-CG $\G$ such that $\Phi = \bft$ iff \pdis has a disrupting strategy in $\G$.

We define the $\gamma$-CG $\G$ as in the proof of Theorem~\ref{decide cover lower}, for the QBF formula $\overline{\Phi} = \forall X \exists Y \overline{\phi}$. Note that since $\phi$ is in 3DNF, the formula $\overline{\phi}$ is in 3CNF. We prove the correctness of the construction. That is, we prove that $\Phi = \bft$ iff \pdis has a disrupting strategy in $\G$.

For the first direction, assume $\Phi = \bft$. Let $\xi: X\rightarrow\set{\bft,\bff}$ be an assignment to the variables in $X$ such that for every assignment $\zeta:Y\rightarrow\set{\bft,\bff}$ to the variables in $Y$, we have that $\xi$ and $\zeta$ satisfy $\phi$. Consider a strategy $f_2$ for \pdis that chooses an assignment to the variables in $X$ according to $\xi$. That is, for every variable $x\in X$, the strategy $f_2$ proceeds to the literal vertex $x$ if $\xi(x) = \bft$, and otherwise proceeds to the literal vertex $\overline{x}$. It is not hard to see that $f_2$ is a disrupting strategy for \pdis in $\G$. Indeed, for every strategy $F_1$ for \pcov that allocates one agent to stay in the literal vertices of each variable, there exists a clause $\overline{C_{i}}$ of $\overline{\phi}$ all whose literals are evaluated to $\bff$ in the corresponding assignment, thus the objective $\alpha_{\overline{C_{i}}}$ is not covered in $\outcome(\tuple{F_1,f_2})$.

For the second direction, assume $\Phi = \bff$. Then, $\overline{\Phi} = \bft$, and so \pcov has a covering strategy in $\G$, as shown in the proof of Theorem~\ref{decide cover lower}. Consequentially, \pdis does not have a disrupting strategy in $\G$.
\end{proof}

\section{Parameterized Complexity of Coverage Games}
\label{pc cg}
Consider a CG $\G = \tuple{G, k, \beta}$. 
In typical applications, the game graph $G$ is much bigger than $k$ and $\beta$. It is thus interesting to examine the complexity of CGs in terms of the size of $G$, assuming the other parameters are fixed. In this section we study the parameterized complexity of deciding \buchi and co-\buchi CGs. Note that beyond the fact that $G$ is much bigger than $k$ and $\beta$, fixing $G$ also fixes the size of $\beta$, and that fixing $\beta$ also fixes the interesting cases of the number of agents. Accordingly, in Section~\ref{CGs with a fixed number of agents} we study 
CGs with a fixed number of agents, and in Section~\ref{CGs with a fixed number of objectives} we study CGs with a fixed number of objectives. 

\subsection{CGs with a fixed number of agents}
\label{CGs with a fixed number of agents}
In this section we study the complexity of deciding \buchi and co-\buchi CGs when the number of agents is fixed.

We start with the coverage problem.
\begin{thm}
\label{2agent cover buchi np hard}
The coverage problem for \buchi or co-\buchi CGs with a fixed number of agents is NP-complete. Hardness in NP holds already for CGs with two agents.  
\end{thm}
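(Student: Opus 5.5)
The plan has two parts: an NP upper bound obtained by bounding the recursion of the algorithm of Theorem~\ref{decide cover buchi upper}, and NP-hardness with two agents, which is immediate for co-\buchi and requires a new reduction for \buchi.

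\emph{Upper bound.} I would reuse the recursive algorithm from the proof of Theorem~\ref{decide cover buchi upper}, which rests on the characterization in Theorem~\ref{covering str characterization}. The key observation is that every recursive call made at a fork $u$ uses a partition $A_1,\ldots,A_l$ of the agents with $l\geq 2$ and all $A_i$ nonempty. Hence each child call has strictly fewer agents than its parent, and the recursion depth is at most $k$.

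I would make all nondeterministic guesses up front, as a certificate. The certificate is a tree in which every node is labeled by a triple $\tuple{v,k',\beta'}$ together with a guessed set $U$ of forks. For every $u\in U$, the node also carries a guessed partition of $\beta'$ and of the $k'$ agents, and has one child for each part.

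A node with $k'$ agents has at most $|V|\cdot k'$ children, each with at most $k'-1$ agents. The tree therefore has $O((|V|k)^k)$ nodes, which is polynomial when $k$ is fixed. Each node is then verified in polynomial time:
\begin{itemize}
\item compute the sub-graph $G'$ of vertices from which \pdis can force the play to avoid $U$, by an attractor computation;
\item solve the All-$\gamma$ game on $G'$ (PTIME by \cite{VW86a});
\item at nodes with $k'=1$, solve the All-$\gamma$ game directly.
\end{itemize}
Correctness is exactly as in the proof of Theorem~\ref{decide cover buchi upper}. This gives membership in NP for both $\gamma\in\set{\text{B,C}}$.

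\emph{Lower bound, co-\buchi.} Here hardness is already available. The reduction in the proof of Theorem~\ref{V2 = emptyset and fixed k C} produces a co-\buchi CG with two agents and $V=V_1$. In such a game \pdis has no choices, so a covering strategy in the one-player game is a covering strategy in the general sense, and NP-hardness of the two-agent coverage problem follows.

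\emph{Lower bound, \buchi.} By Theorem~\ref{V2 = emptyset and fixed k B}, the one-player \buchi case with fixed $k$ is in NLOGSPACE, so the reduction must exploit \pdis. I would reduce from 3SAT. In the intended game, one agent commits to an assignment and, after a split, the other agent witnesses a true literal in a clause that \pdis selects. The difficulty is that \buchi objectives can only demand that something is visited, not that it is avoided, so consistency of the assignment must instead be enforced through \pdis's choices. The construction I would try first has three components:
\begin{itemize}
\item Both tokens travel together through a \pdis-controlled chain. At each variable $x$, \pcov moves through $x$ or $\overline{x}$; since the tokens are still together, this fixes a single assignment.
\item \pdis then either loops back or selects a clause $C_j$ (or a variable, to check consistency). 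At that point \pcov splits the agents into a \emph{witness} agent and a \emph{bookkeeping} agent.
\item The witness agent goes to a self-looped sink for a literal of the selected clause.
\end{itemize}
The objectives are $\alpha_1$, requiring the bookkeeping agent's region, $\alpha_2$, requiring the witness region, and per-literal objectives $\alpha_l$ that pair the witness sink $l$ with the bookkeeping agent's loop through $\overline{l}$. The intent is that every literal objective is covered exactly when no chosen witness contradicts the committed assignment; this is analogous to the structure of Example~\ref{example}, where the decomposition depends on \pdis's choice.

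The main obstacle is precisely this consistency step: making sure that the bookkeeping agent cannot profit from visiting both $x$ and $\overline{x}$ infinitely often. I would address it by making the assignment choices along \pdis's chain into self-looped sinks that \pdis can leave, in the spirit of the proof of Theorem~\ref{decide cover lower}. Once a fork has been taken, each agent is then confined to a single literal.
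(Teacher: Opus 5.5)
Your upper bound is correct, and it is the paper's argument. Your count of the certificate tree (depth at most $k$, branching at most $|V|\cdot k$) is more careful than the paper's one-line remark that the recursion depth is fixed, since it also bounds the branching. Your co-\buchi lower bound via Theorem~\ref{V2 = emptyset and fixed k C} is exactly the paper's.

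\textbf{The gap is the \buchi lower bound.} What you give is a design sketch with an acknowledged open obstacle, not a reduction, and the sketch does not work as stated:
\begin{itemize}
\item Nothing forces the tokens to ``travel together''. At each \pcov-owned choice between $x$ and $\overline{x}$ the agents may split, so no single assignment is fixed.
\item While the tokens do share a history, \pdis treats them identically. If she always loops back, the witness region is never reached and $\alpha_2$ is uncovered whether or not $\phi$ is satisfiable. The same happens if she traps the shared play in one of your self-looped literal vertices.
\item The objectives have the wrong polarity for \buchi. With $\alpha_l$ consisting of the witness sink $l$ and the bookkeeping agent's $\overline{l}$-vertex, the witness covers at most one $\alpha_l$. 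Every other $\alpha_l$ must be covered by the bookkeeping agent visiting $\overline{l}$ infinitely often. So that agent must visit both polarities of essentially every variable. Your ``main obstacle'' is therefore a contradiction in the objective design, not a side issue.
\end{itemize}
Your proposed fix (self-looped literal vertices that \pdis can leave) is the right device. But it is never combined with re-specified objectives, and it conflicts with the shared-chain phase.

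\textbf{The missing idea is to let \pdis act as the consistency checker, rather than enforcing consistency through a shared history.} The paper's construction works as follows:
\begin{itemize}
\item The agents split immediately at $v_0$. The objectives $\alpha_1=X\cup\overline{X}$ and $\alpha_2$ (all clause-literal vertices) force one agent into each sub-graph.
\item The assignment agent passes each variable once along an acyclic chain, picking $x_i$ or $\overline{x_i}$.
\item The clause agent picks one vertex $l_i^j$ per clause along a second chain.
\item All literal and clause-literal vertices are owned by \pdis, who may stay in their self-loop forever or proceed. Hence $\textit{inf}$ of each play is a single vertex of her choosing.
\end{itemize}
Then $\alpha_l=((X\cup\overline{X})\setminus\set{l})\cup\set{l_i^j : l_i^j\neq\overline{l}}$ turns the avoidance constraint ``not trapped at $l$ and at $\overline{l}$ simultaneously'' into a visiting constraint. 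So $\beta$ is covered iff the two trap vertices $l$ and $l_i^j$ satisfy $l_i^j\neq\overline{l}$.

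A memoryless \pdis strategy that traps at two chosen vertices and proceeds immediately elsewhere makes each agent's choices canonical. It can therefore realize any pair of vertices from the two committed sequences. Consequently \pcov wins iff some assignment admits a consistent choice of one literal per clause, i.e., iff $\phi$ is satisfiable. No agent ever visits two literals infinitely often, so the obstacle you identified never arises.
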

\begin{proof}
We start with the upper bound. When the number of agents is fixed, the depth of the recursion performed by the NTM described in the proof of Theorem~\ref{decide cover buchi upper} is fixed, and so the NTM runs in polynomial time. Hence, deciding whether \pcov has a covering strategy \buchi or co-\buchi CG with a fixed number of agents can be done in NP. 

We continue to the lower bounds. 
In Theorem~\ref{V2 = emptyset and fixed k C}, we show that deciding whether \pcov has a covering strategy in a co-\buchi CG with $k=2$ is NP-hard already for the case $V=V_1$.

For \buchi CGs, we describe a reduction from $3$SAT. Consider a set of variables $X = \set{x_1,\ldots,x_n}$, and let $\phi = C_1 \wedge \cdots \wedge C_m$ with $C_i = (l_i^1\vee l_i^2\vee l_i^3)$ and $l_i^1,l_i^2,l_i^3\in X\cup \overline{X}$, for every $i\in [m]$. We construct from $\phi$ a CG $\G = \tuple{G,2,\beta}$ such that \pcov has a covering strategy in $\G$ iff $\phi$ is satisfiable.

\begin{figure}[htp]
\centering
\includegraphics[width=10.8cm]{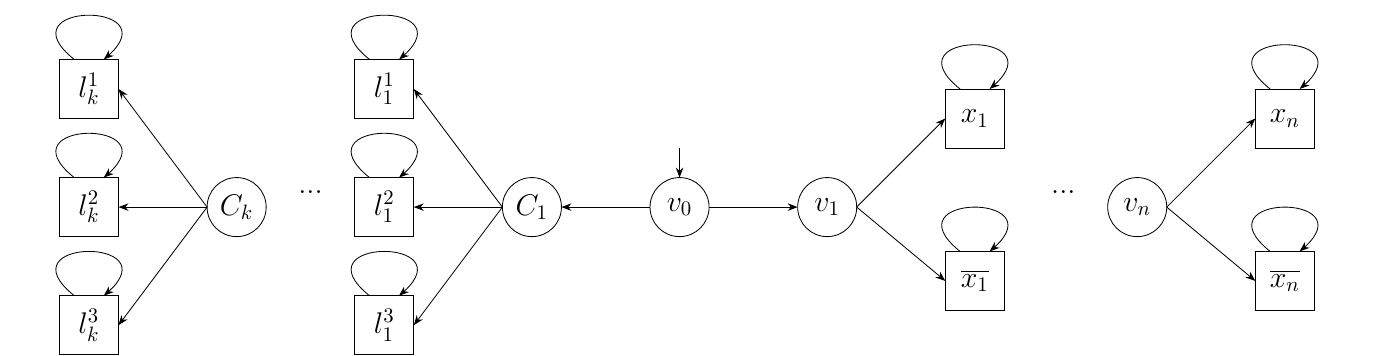}
\caption{\pcov has a covering strategy in $\G$ iff $\phi$ is satisfiable.}
\label{fig sat 2 agents}
\end{figure}

The game graph $G$ is as follows (see Fig.~\ref{fig sat 2 agents}). From the initial vertex, \pcov chooses between proceeding (right) to an {\em assignment sub-graph}, and (left) to a {\em clause sub-graph}. In the assignment sub-graph, \pcov chooses an assignment to the variables in $X$ by visiting assignment-literal vertices. The assignment-literal vertices are owned by \pdis, who chooses between staying in the assignment-literal vertex and proceeding to the next variable. In the clause sub-graph, \pcov chooses a literal for every clause of $\phi$. She does so by visiting clause-literal vertices. The clause-literal vertices are owned by \pdis, who chooses between staying in the clause-literal vertex and proceeding to the next clause. Note that plays in $G$ do not return to $v_0$, thus each play eventually gets stuck in some literal vertex. 

We define the objectives in $\G$ as follows. 
The first two objectives $\alpha_1$ and $\alpha_2$ are the sets of vertices of the assignment and clause sub-graphs, ensuring that \pcov allocates one agent to each sub-graph. We call them the {\em assignment agent} and the {\em clause agent}, respectively. In addition, for every literal $l\in X\cup \overline{X}$, we define the objective $\alpha_l$ as the union of the set of assignment-literal vertices that correspond to literals $l'\neq l$ with the set of clause-literal vertices that correspond to literals $l'\neq \overline{l}$. Note that the objective $\alpha_l$ is satisfied iff the play in the assignment sub-graph stays in an assignment-literal vertex that does not correspond to $l$ or the play in the clause sub-graph stays in a clause-literal vertex that does not correspond to $\overline{l}$. Accordingly, if the assignment agent chooses a satisfying assignment, and the clause agent chooses literals that are evaluated to $\bft$ according to chosen assignment, then every objective $\alpha_l$ is satisfied, no matter where \pdis chooses the plays to stay in the different sub-graphs. On the other hand, if the assignment agent chooses a literal $l$ and the clause agent chooses the literal $\overline{l}$, then \pdis can choose to stay in those vertices in the respective sub-graphs, in which case $\alpha_l$ is not satisfied. 

Formally, $\G = \tuple{G, 2, \beta}$ is defined as follows.
First, the game graph $G = \tuple{V_1,V_2,v_0,E}$ contains the following components.
    \begin{enumerate}
        \item $V_1 = \set{v_0} \cup \set{v_i: i\in [n]} \cup \set{C_i : i\in [m]}$. The vertices in $\set{v_i: i\in [n]}$ are {\em variable vertices}, and the vertices in $\cup \set{C_i : i\in [m]}$ are {\em clause vertices}.
        \item $V_2 = X\cup \overline{X} \cup \set{l_i^1,l_i^2,l_i^3 : i\in [m]}$. The vertices in $X\cup \overline{X}$ are {\em literal vertices}, and the vertices in $\set{l_i^1,l_i^2,l_i^3 : i\in [m]}$ are {\em clause-literal vertices}.
        \item The set of edges $E$ contains the following edges.
        \begin{enumerate}
            \item $\tuple{v_0,v_1}$.
            \item $\tuple{v_i, x_i}$ and $\tuple{v_i,\overline{x_i}}$, for every $i\in [n]$.
            \item $\tuple{l,l}$, for every $l\in X\cup \overline{X}$.
            \item $\tuple{x_i,v_{i+1}}$ and $\tuple{\overline{x_i},v_{i+1}}$, for every $i < n$.
            \item $\tuple{v_0,C_1}$.
            \item $\tuple{C_i,l_i^j}$, for every $i\in [m]$ and $j\in \set{1,2,3}$.
            \item $\tuple{l_i^j,l_i^j}$, for every $i\in [m]$ and $j\in \set{1,2,3}$.
            \item $\tuple{l_i^j,C_{i+1}}$, for every $i < m$ and $j\in \set{1,2,3}$.            
        \end{enumerate}
    \end{enumerate}
The set of objectives is $\beta = \set{\alpha_1,\alpha_2}\cup \set{\alpha_l : l\in X\cup \overline{X}}$, where $\alpha_1 = X\cup \overline{X}$, $\alpha_2 = \set{l_i^j : i\in [m], j\in \set{1,2,3}}$, and $\alpha_l = ((X\cup \overline{X})\setminus \set{l})\cup \set{l_i^j : i\in [m], j\in \set{1,2,3}, \text{ and }l_i^j\neq \overline{l}}$, for every $l\in X\cup \overline{X}$.

We prove the correctness of the construction. Note that for two plays $\rho_1$ and $\rho_2$ in the assignment and clause sub-graphs, respectively, there are literal and clause literal vertices $l$ and $l_i^j$ such that $\textit{inf}(\rho_1) = l$ and $\textit{inf}(\rho_2) = l_i^j$. Then, $\beta$ is covered in $\set{\rho_1,\rho_2}$ iff $l_i^j \neq \overline{l}$. Indeed, $\alpha_{l'}$ is satisfied in $\rho_1$ for every $l'\in (X\cup \overline{X})\setminus \set{l}$, and $\alpha_l$ is satisfied in $\rho_2$ iff $l_i^j \neq \overline{l}$.

For the first direction, assume $\phi$ is satisfiable. Let $\xi: X\rightarrow \set{\bft,\bff}$ be a satisfying assignment, and let $F_1 = \tuple{f_1^1,f_1^2}$ be the strategy for \pcov such that $f_1^1$ is a strategy for the assignment agent that chooses from every variable vertex $v_i$ the literal vertex $x_i$ iff $\xi(x_i) = \bft$, and the literal vertex $\overline{x_i}$ iff $\xi(x_i) = \bff$; The strategy $f_1^2$ for the clause agent chooses from every clause vertex $C_i$ a clause-literal vertex $l_i^j$ such that $l_i^j$ is evaluated to $\bft$ in $\xi$. It is easy to see that $F_1$ is a covering strategy. Indeed, for every strategy for \pdis, there does not exist a literal $l$ such that the play in the assignment sub-graph visits the literal vertex $l$, and the play in the clause sub-graph visits a clause-literal vertex that corresponds to $\overline{l}$.

For the second direction, assume \pcov has a covering strategy $F_1 = \tuple{f_1^1,f_1^2}$. Assume WLOG that $f_1^1$ is a strategy for the assignment agent and $f_1^2$ is a strategy for the clause agent. Let $\xi$ be the assignment to the variables in $X$ that agrees with $f_1^1$. That is, a literal $l$ is evaluated to $\bft$ in $\xi$ iff $f_1^1$ proceeds from the appropriate variable vertex to the literal vertex $l$. It is easy to see that $\xi$ satisfies $\phi$. Indeed, otherwise there exists a clause $C_i$ all whose literals are evaluated to $\bff$ in $\xi$, thus \pdis can make the play in the clause sub-graph to stay in the clause-literal vertex $l_i^j$ the clause agent chooses from $C_i$, and make the play in the assignment sub-graph to stay in the literal vertex $l$ such that $l_i^j = \overline{l}$. When \pdis uses such a strategy, we have that $\alpha_l$ is not covered, contradicting the fact $F_1$ is a covering strategy. 
\end{proof}

We continue to the disruption problem.  

\begin{thm}
\label{2agent disrupt buchi np-c}
The disruption problem for \buchi CGs with a fixed number of agents is NP-complete. Hardness in NP holds already for two agents.
\end{thm}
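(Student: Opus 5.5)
Membership in NP follows from Theorem~\ref{P2 memoryless}: it suffices to guess a memoryless strategy $f_2$ for \pdis and check that \pcov has no covering strategy in the one-player \buchi CG on the sub-graph $G_{f_2}$. When $k$ is fixed, Theorem~\ref{V2 = emptyset and fixed k B} shows that this check is NLOGSPACE, and hence polynomial. Concretely, \pcov covers $\beta$ in $G_{f_2}$ iff there are $k$ reachable non-trivial SCCs $C_1,\ldots,C_k$ whose hit-sets together cover $\beta$. For fixed $k$ there are only polynomially many such $k$-tuples of SCCs, so we enumerate them and verify that none covers $\beta$. The whole procedure is therefore a guess followed by a polynomial-time check, which places the problem in NP.

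For hardness with two agents, I would reduce from 3SAT, letting \pdis encode the assignment through a memoryless choice. The game graph starts at $v_0\in V_1$, where \pcov either moves into an \emph{assignment gadget} or into a \emph{clause gadget}. In the assignment gadget, \pdis owns a vertex $u_i$ for each variable, and $u_i$ has successors $x_i$ and $\overline{x_i}$. \pcov owns the literal vertices and loops through them back to $u_1$, so that both agents meet the same \pdis vertices. Since by Theorem~\ref{P2 memoryless} we may take $f_2$ memoryless, \pdis fixes a single literal per variable. In the clause gadget, \pcov chooses, for each clause $C_i$, a sink or a cycle over one of its literals. To realize the literal $l$ there, the cycle must pass through \pdis's choice at $u_{var(l)}$, and the agent is trapped in a losing sink if \pdis's choice disagrees with $l$.

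The objectives are designed so that two agents can cover $\beta$ iff some clause is falsified. Equivalently, \pdis disrupts iff her assignment satisfies every clause. I would use objectives $\alpha_1$ and $\alpha_2$ forcing one agent into each gadget, as in the proof of Theorem~\ref{2agent cover buchi np hard}. In addition, each clause $C_i$ gets an objective $\alpha_{C_i}$ that is satisfied only by a ``witness'' play for $C_i$. The goal is that one agent can satisfy all objectives other than a single $\alpha_{C_i}$ together, while satisfying $\alpha_{C_i}$ as well requires that the chosen literal of $C_i$ be false under \pdis's assignment. An alternative formulation reads the \buchi condition on the agent whose play visits infinitely often exactly the literal vertices that \pdis selected. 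Here the first agent records the assignment, while the second agent tries to exhibit a falsified clause, realized as an objective that is hit only when all three literals of some clause were answered negatively.

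The main obstacle is making this gadget correct. \pdis's single strategy must serve as the ``assignment'', yet a \pcov agent must be unable to gain anything from history-dependent probing; the memorylessness provided by Theorem~\ref{P2 memoryless} is exactly what pins the assignment down. At the same time, the \buchi objectives must be \emph{existential} in the clause literals, so that a single visit pattern certifies falsification. The first design I would try: for each clause $C_i$, a \pcov cycle passes successively through $u_{var(l_i^1)}, u_{var(l_i^2)}, u_{var(l_i^3)}$, and the edge leaves to a trap whenever \pdis picks the literal value that makes $l_i^j$ true. Surviving the whole cycle forever then witnesses a falsified clause, and $\alpha_{\mathit{clause}}$ is the union of these cycles. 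The remaining step is to check that the two directions go through, using Lemma~\ref{disrupting strategy Delta f2}.
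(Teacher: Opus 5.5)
Your NP membership argument is correct and is the paper's: by Theorem~\ref{P2 memoryless} it suffices to guess a memoryless $f_2$, and for fixed $k$ you can check in polynomial time that no $k$ reachable non-trivial SCCs of $G_{f_2}$ cover $\beta$.

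The hardness part has a genuine gap. You never fix a construction, and the design you commit to cannot work. Suppose the clause cycles pass through the \emph{same} \pdis vertices $u_i$ as the assignment loop. Then after \pdis moves, the token is on the shared vertex $x_i$ or $\overline{x_i}$, which must carry both the assignment-loop edge and the continuation edges of the clause cycles. If \pcov owns that vertex, nothing forces an agent into the ``trap'' when the tested literal is true, since she simply takes another edge. The agent can then hop between clause cycles and the assignment loop, and a \buchi set built from the cycles cannot tell a faithful traversal of one clause cycle from such hopping. If \pdis owns it, her memoryless choice there is a single edge, so it cannot both continue the loop and trap the clause agent. Suppose instead the clause gadget uses private copies of the variable vertices. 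Then \pdis may answer differently there (for example, always making the tested literal true), and none of your objectives relate those answers to the assignment gadget. Your fallback, ``an objective that is hit only when all three literals of some clause were answered negatively'', also cannot be a single \buchi set, because a \buchi objective is disjunctive over its vertices.

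The missing idea is that consistency between the gadgets is enforced by the objectives, which are covered \emph{jointly} by the two agents' plays, and not by the graph. The paper reuses the graph of Theorem~\ref{2agent cover buchi np hard} with ownership dualized. \pdis owns the variable vertices $v_i$ and the clause vertices $C_i$, so she picks an assignment and a witness literal for each clause. \pcov owns the self-looped literal and clause-literal vertices and only decides where each agent gets stuck. The objectives are $\alpha_1=X\cup\overline{X}$, $\alpha_2=\set{l_i^j : i\in[m], j\in\set{1,2,3}}$, and $\alpha_l=((X\cup\overline{X})\setminus\set{l})\cup\set{l_i^j : l_i^j=\overline{l}}$. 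Then $\beta$ is covered iff the assignment agent is stuck in some $l$ and the clause agent in some $l_i^j=\overline{l}$. Hence \pdis disrupts iff she can pick an assignment and a literal per clause with no such conflict, i.e., iff $\phi$ is satisfiable. In this construction all existential choices belong to \pdis, and \pcov only performs the universal check. In your plan, by contrast, a \pcov agent must read \pdis's assignment inside the clause gadget, and that is the source of the trouble.

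Your ``alternative formulation'' can be repaired in the same spirit:
\begin{itemize}
\item let the clause agent just move to a self-looped sink $s_j$;
\item keep your assignment loop with single-successor literal vertices;
\item split the clause test into three objectives $\set{\overline{l_j^p}}\cup\set{s_{j'} : j'\neq j}$, for $p\in\set{1,2,3}$.
\end{itemize}
Then $\beta$ is covered iff the assignment agent visits the complements of all three literals of the chosen clause infinitely often. For memoryless \pdis, this means her assignment falsifies that clause.
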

\begin{proof}
We start with the upper bound. 
Consider a \buchi CG $\G=\zug{G,k,\beta}$. 
An NP algorithm guesses a memoryless strategy $f_2$ for \pdis, and then checks that \pcov does not have a covering strategy in the one-player game defined over the sub-graph $G_{f_2}$ induced by $f_2$. By Theorem~\ref{P2 memoryless}, it is sufficient to consider memoryless strategies for \pdis. By Theorem~\ref{V2 = emptyset and fixed k B}, verifying the strategy against a fixed number of agents can be done in NLOGSPACE, and hence also in polynomial time.

For the lower bound, we describe a reduction from $3$SAT. Consider a set of variables $X = \set{x_1,\ldots,x_n}$, and consider $\phi = C_1 \wedge \cdots \wedge C_m$ with $C_i = (l_i^1\vee l_i^2\vee l_i^3)$ and $l_i^1,l_i^2,l_i^3\in X\cup \overline{X}$, for every $i\in [m]$. We construct from $\phi$ a CG $\G=\zug{G',2,\beta}$ such that \pdis has a disrupting strategy in $\G$ iff $\phi$ is satisfiable.

Recall the game graph $G$ from the proof of Theorem~\ref{2agent cover buchi np hard}. The game graph $G'$ is similar to $G$, except that the roles of the players in the assignment and clause sub-graphs are dualized. That is, \pdis owns the variable and clause vertices, and \pcov owns the literal vertices. The vertex $v_0$ is still owned by \pcov.

We define the objectives in $\G$ as follows. Similar to the CG in the proof of Theorem~\ref{2agent cover buchi np hard}, we define objectives $\alpha_1$ and $\alpha_2$ that 
force \pcov to allocate one agent to each sub-graph. In addition, for every literal $l \in X \cup\overline{X}$, we define the objective $\alpha_l$ to be the union of the set of assignment-literal vertices that correspond to literals $l'\neq l$ with the set of clause-literal vertices that correspond to $\overline{l}$. Note that the objective $\alpha_l$ is satisfied iff the play in the assignment sub-graph stays in an assignment literal vertex that does not correspond to $l$, or the play in the clause sub-graph stays in a clause-literal vertex that corresponds to $\overline{l}$. Accordingly, if \pdis chooses a satisfying assignment in the assignment sub-graph, and chooses literals in the clause sub-graph that are evaluated to $\bft$ according to chosen assignment, then there exists an objective $\alpha_l$ that is not satisfied, no matter where \pcov chooses the plays to get trapped in the different sub-graphs. Also, if \pdis chooses a literal $l$ in the assignment sub-graph and the literal $\overline{l}$ in the clause sub-graph, \pcov can choose to stay in those vertices in the respective sub-graphs, in which case $\alpha_l$ is satisfied. 
Thus, $\phi$ is satisfiable iff \pdis has a disrupting strategy.

Formally, the set of objectives is $\beta = \set{\alpha_1,\alpha_2}\cup \set{\alpha_l : l\in X\cup \overline{X}}$, where $\alpha_1 = X\cup \overline{X}$, $\alpha_2 = \set{l_i^j : i\in [m], j\in \set{1,2,3}}$, and $\alpha_l = ((X\cup \overline{X})\setminus \set{l})\cup \set{l_i^j : i\in [m], j\in \set{1,2,3}, \text{ and }l_i^j= \overline{l}}$, for every $l\in X\cup \overline{X}$.

We prove the correctness of the construction. Note that for two plays $\rho_1$ and $\rho_2$ in the assignment and clause sub-graphs, respectively, there are assignment-literal and clause-literal vertices $l$ and $l_i^j$ such that $\textit{inf}(\rho_1) = l$ and $\textit{inf}(\rho_2) = l_i^j$. Then, $\beta$ is covered in $\set{\rho_1,\rho_2}$ iff $l_i^j = \overline{l}$. Indeed, $\alpha_{l'}$ is satisfied in $\rho_1$ for every $l'\in (X\cup \overline{X})\setminus \set{l}$, and $\alpha_l$ is satisfied in $\rho_2$ iff $l_i^j = \overline{l}$.

It is easy to see that $\phi$ is satisfiable iff \pdis has a disrupting strategy. Indeed, when \pdis only chooses literal vertices that correspond to literals evaluated to $\bft$ by a satisfying assignment, we have that $l_i^j \neq \overline{l}$, for every literal vertices $l$ and $l_i^j$ in the two sub-graphs in which the plays get trapped, thus $\alpha_l$ is not satisfied. For the second direction, every disrupting strategy for \pdis chooses an assignment in the assignment sub-graph and clause-literal vertices in the clause sub-graph such that every clause-literal \pdis chooses corresponds to a literal that is evaluated to $\bft$ by the chosen assignment, as then $l_i^j\neq \overline{l}$, for every literal vertex $l$ and clause-literal vertex $l_i^j$ \pcov chooses to stay indefinitely in. 
\end{proof}

Although fixing the number of agents significantly reduces the complexity of the disruption problem for $\gamma = B$, the complexity remains surprisingly unchanged for $\gamma = C$, even when there are only two agents. Formally, we have the following.

\begin{thm}
\label{disrupt co-buchi fixed k}
The disruption problem for co-\buchi CGs with a fixed number of agents is $\npconp$-complete. Hardness in $\npconp$ holds already for two agents.
\end{thm}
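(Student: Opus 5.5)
The upper bound is immediate from Theorem~\ref{decide disrupt CGC upper}, since fixing $k$ only restricts the input. The work is the lower bound: I will show $\npconp$-hardness with $k=2$ by a reduction from 2QBF. Take $\Phi=\exists X\forall Y\phi$ with $\phi$ in 3DNF. Equivalently, $\Phi$ is true iff there is an assignment $\xi$ to $X$ such that $\overline{\phi}[\xi]$, a 3CNF over $Y$, is unsatisfiable. The idea is to let \pdis choose $\xi$ in a preliminary part of the graph. The remaining game should then be the two-agent one-player co-\buchi construction of Theorem~\ref{V2 = emptyset and fixed k C} for $\overline{\phi}[\xi]$, which \pcov wins iff that CNF is satisfiable.

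\textbf{Construction.} \pdis owns a chain of vertices $u_1,\ldots,u_{|X|}$. From each $u_i$ she moves to $x_i$ or $\overline{x_i}$, and then on to $u_{i+1}$; after $u_{|X|}$ the play enters $v_0$ of the gadget of Theorem~\ref{V2 = emptyset and fixed k C}. That gadget is built for the CNF $\overline{\phi}$ over $X\cup Y$, and \pcov owns all of its vertices. Because \pdis uses a single strategy and the prefix consists only of her vertices, both tokens see the same $\xi$. This is exactly where the single-strategy assumption on \pdis is used.

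The assignment loop of the gadget ranges only over $Y$. A clause-literal vertex whose literal is over $X$ must be penalized when that literal is false under $\xi$. However, $\xi$ is chosen in a finite prefix, so co-\buchi objectives cannot see it directly. I will therefore make the choice \emph{persistent}: the gadget is duplicated, one copy per assignment $\xi$, but that is exponential. The better option is to have \pdis's choices carried by the plays. Concretely, I will make the clause sub-graph pass through a \pdis-owned check vertex for $x$ each time it visits a clause literal on $x$. \pdis answers at that check vertex, and the answer must be consistent with her earlier choice. Consistency is automatic if she uses a memoryless or uniform strategy, but she need not, which is the crux.

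\textbf{Alternative plan (easier to verify).} Put the $X$-choices inside the assignment loop of the gadget. At each round the assignment agent reaches $v_i$ for $x_i\in X$, which \pdis owns, and \pdis picks $x_i$ or $\overline{x_i}$; the clause agent picks literals as before. The literal objectives $\alpha_l$ for $l$ over $X$ are defined exactly as in Theorem~\ref{V2 = emptyset and fixed k C}. If \pdis alternates between values infinitely often, both literal vertices $x$ and $\overline{x}$ are visited infinitely often. Then $\alpha_x$ and $\alpha_{\overline{x}}$ can only be satisfied by the clause agent, who must avoid both $\overline{x}$ and $x$. This only helps \pdis, so wavering is not a problem for soundness of the disruptor's win.

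\textbf{Correctness.} The two directions are as follows.
\begin{itemize}
\item If $\Phi$ is true, \pdis plays $\xi$ memorylessly. Any two \pcov plays induce a $Y$-assignment and a literal choice per clause; unsatisfiability of $\overline{\phi}[\xi]$ yields a clause whose chosen literal $l$ is false, so $\alpha_{\overline{l}}$ fails in both plays (as in the second direction of Theorem~\ref{V2 = emptyset and fixed k C}).
\item If $\Phi$ is false, fix any $f_2$. The main obstacle is this direction: \pcov's agents must counter an arbitrary, possibly history-dependent $f_2$ without communication, since the clause agent does not see which $X$-values \pdis gives the assignment agent.
\end{itemize}
To handle the second direction, I would route the clause agent's play through the same \pdis-owned $x_i$ choice vertices. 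Since $f_2$ is one strategy, it answers along each play's own history, though not necessarily identically on the two plays. So I will ensure the histories coincide at those points: both agents pass through an identical preamble before every $X$-query, arranged by letting \pcov's own moves be invisible through self-loop sinks, as in Theorem~\ref{decide cover lower}. Alternatively, I will argue via Lemma~\ref{disrupting iff fair} that if the eventual $X$-values are inconsistent across the two plays then some literal objective is trivially coverable. I expect this synchronization argument to be the technically heaviest step.
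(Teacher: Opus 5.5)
Your upper bound is fine, but the lower bound, which is the real content of the theorem, is not established. You leave your first construction open yourself: a \pdis-owned check vertex need not answer consistently with the prefix. Your ``alternative plan'' fails in the direction ``$\Phi$ false $\Rightarrow$ \pdis has no disrupting strategy''. \pdis is asked about $x\in X$ in every round of the assignment loop, so she is never committed to an assignment. She can waver, and wavering makes \emph{both} $\alpha_x$ and $\alpha_{\overline{x}}$ unsatisfiable for the assignment agent. You note that this ``only helps \pdis'', but that is exactly what breaks the reduction.

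Concretely, take $\phi=(x\wedge y)\vee(\overline{x}\wedge\overline{y})$. Then $\overline{\phi}=(\overline{x}\vee\overline{y})\wedge(x\vee y)$, and $\exists x\forall y\,\phi$ is false. Let \pdis alternate between $x$ and $\overline{x}$.
\begin{itemize}
\item The clause agent must meet both $\alpha_x$ and $\alpha_{\overline{x}}$. So from some point on she picks $\overline{y}$ in the first clause and $y$ in the second.
\item Hence the assignment agent must meet both $\alpha_y$ and $\alpha_{\overline{y}}$. This is impossible, since she visits $y$ or $\overline{y}$ infinitely often.
\end{itemize}
So \pdis wins on a false instance. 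A non-wavering \pdis also breaks the plan, because your loop shows her the agent's earlier $Y$-choices. For example, setting $x$ in round $r$ to the value of $y$ from round $r-1$ is disrupting in the same example, which inverts the quantifier order.

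Your diagnosis of the obstacle as lack of communication is off. In this direction $F_1$ is chosen after $f_2$ is fixed, so the agents' strategies can be tailored to $f_2$ jointly. The real problem is that repeated queries give \pdis more power than choosing a single $\xi$. Your final-paragraph remedies do not fix this: wavering happens within one play and makes objectives harder, not easier, to cover. Self-looped sinks only support a one-shot query.

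The missing idea, which the paper uses, is to query \pdis about $X$ \emph{once} per play, at a vertex $v_i$ entered directly from $v_0$. The history $v_0v_i$ is identical for all agents, so $f_2$ induces one well-defined $\xi$, with no wavering and no adaptivity. \pcov's assignment agent chooses which variable to query, which is legitimate because $F_1$ may depend on $f_2$. The $\forall Y$ part is checked in a separate refute loop:
\begin{itemize}
\item The refute agent proposes $Y$-values, \pdis names a DNF term she claims true, and the agent names one of its literals that she claims false.
\item An $X$-literal claim ends in a self-looped sink labelled by the complementary literal $l$. Then $\alpha_l$ can be covered only if the single-query agent asked about $l$'s variable and \pdis answered $l$.
\item A $Y$-literal claim returns to the loop. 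The objectives $\alpha_l$ for $Y$-literals contain every vertex labelled $l$ in both sub-graphs. If \pdis always names a true term, a refute agent that never reaches an $X$-sink visits $l$ and $\overline{l}$ infinitely often for some $Y$-literal $l$.
\item The other agent runs its own $Y$-loop after its $X$-query, so it can cover only one of $\alpha_l$ and $\alpha_{\overline{l}}$.
\end{itemize}
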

\begin{proof}
The $\npconp$ upper bound follows from Theorem~\ref{decide disrupt CGC upper}.
For the lower bound, we describe a reduction from 2QBF. Consider sets of variables $X = \set{x_1,\ldots,x_n}$ and $Y = \set{y_1,\ldots,y_m}$, and let $\overline{X} = \set{\overline{x_1},\ldots,\overline{x_n}}$ and $\overline{Y} = \set{\overline{y_1},\ldots,\overline{y_m}}$. Let $Z = X\cup Y$ and $\overline{Z} = \overline{X}\cup \overline{Y}$. 
Consider a 2QBF formula $\Phi = \exists X\forall Y \phi$ with $\phi = C_1 \vee \cdots\vee C_k$ and $C_i = (l_i^1\wedge l_i^2\wedge l_i^3)$ and $l_i^1, l_i^2, l_i^3\in Z\cup \overline{Z}$, for every $i\in [k]$. We construct from $\Phi$ a co-\buchi CG $\G = \tuple{G,2,\beta}$ such that $\Phi = \bft$ iff \pdis has a disrupting strategy in $\G$.

The game graph $G$ (see Fig.~\ref{fig co buchi disrupt 2 agents}) contains an {\em assignment} sub-graph, and a {\em refute} sub-graph. In the assignment sub-graph, \pcov chooses a variable $x_i\in X$, \pdis chooses an assignment to $x_i$, and then \pcov repeatedly visits the literal vertex that corresponds to the assignment and chooses an assignment to the variables in $Y$. Note that while in a given play, \pdis chooses an assignment to a single variable in $X$, a strategy for \pdis in the assignment sub-graph induces an assignment to all variables in $X$. In the refute sub-graph, \pcov tries to refute the assignment \pdis chooses for the variables in $X$. For that, \pcov chooses an assignment to the variables in $Y$, then \pdis chooses a clause $C_i$ of $\phi$, essentially claiming that all its literals are evaluated to $\bft$ in the joint assignment, and then \pcov chooses a literal $l_i^j$ of $C_i$, essentially claiming that $l_i^j$ is evaluated to $\bff$ in the joint assignment. If $l_i^j$ is an $X$-literal, the game proceeds to a self-looped sink that corresponds to $\overline{l_i^j}$. If $l_i^j$ is a $Y$-literal, the game proceeds to a vertex that corresponds to $\overline{l_i^j}$, and returns to the initial vertex in the refute sub-graph.

\begin{figure}[htp]
\centering
\includegraphics[width=9.5cm]{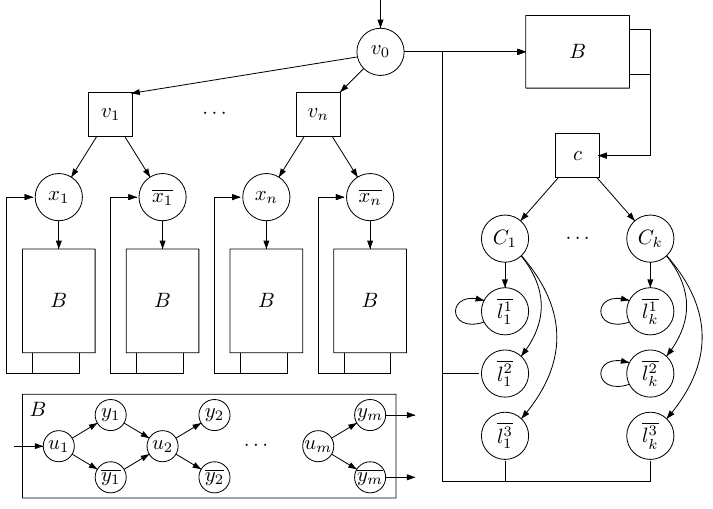}
\caption{The game graph $G$, when $l_1^1, l_k^1,l_k^2\in X\cup \overline{X}$. The sub-graph $B$ appears on the bottom (left).
}
\label{fig co buchi disrupt 2 agents}
\end{figure}

The objectives are defined so that \pcov is forced to allocate one agent to the assignment sub-graph (the assignment agent) and one agent to the refute sub-graph (the refute agent).  
In addition, for every literal $l\in Y\cup \overline{Y}$, we define an objective $\alpha_l$ that is satisfied in plays that visit finitely often vertices that correspond to $l$ in both sub-graphs. Finally, for every literal $l\in X\cup \overline{X}$, we define an objective $\alpha_l$ that is satisfied in the refute sub-graph if the play there visits finitely often vertices that correspond to $l$, and is satisfied in the assignment sub-graph if the play there visits finitely often $X$-literal vertices that do not correspond to $l$.
Accordingly, if \pdis chooses an assignment $\xi$ to the variables in $X$ such that $\phi$ is satisfied for every assignment to the variables in $Y$, \pdis is able to always choose a clause all whose literals are evaluated to $\bft$ in the joint assignment, and thus force the refute agent to proceed to a vertex that corresponds to a literal $l$ evaluated to $\bff$ in the joint assignment. If $l$ is an $X$-literal, the objective $\alpha_l$ cannot be covered, because \pdis chooses $\overline{l}$ in the assignment sub-graph. Otherwise, there exists a $Y$-literal $l$ such that the refute agent chooses both vertices that correspond to $l$ and $\overline{l}$ infinitely often. 
Note that then, one of the objectives $\alpha_l$ and $\alpha_{\overline{l}}$ cannot be covered, as the assignment agent is forced to visit vertices that correspond to $l$ or $\overline{l}$ infinitely often. 

On the other hand, if \pdis chooses an assignment to the variables in $X$ for which there exists an assignment to the variables in $Y$ such that the joint assignment does not satisfy $\phi$, then the refute agent can choose this assignment, and thus she can choose for each clause a literal evaluated to $\bff$ in the joint assignment, in which case the refute agent only visits vertices that correspond to literals that are evaluated to $\bft$. In this case, \pcov has a strategy that, together with the strategy for \pdis, covers all the objectives.

Formally, we define $\G = \tuple{G,2,\beta}$ as follows. First, the game graph $G = \tuple{V_1,V_2,v_0,E}$ contains the following components.
    \begin{enumerate}
        \item $V_1 = \set{v_0} \cup (X\cup \overline{X}) \cup \set{u_i^l, y_i^l, \overline{y_i}^l: l\in X\cup \overline{X}, i\in[m]} \cup \set{u_i:i\in[m]}\cup (Y\cup \overline{Y})\cup \set{C_i: i\in[k]}\cup \set{\overline{l_i^j}: i\in[k], j\in\set{1,2,3}}$. The vertices in $(X\cup \overline{X})$ are {\em $X$-literal vertices}, the vertices in $\set{u_i^l:i\in[m]}$ are {\em the $Y$-variable vertices of $l$} and the vertices in $\set{y_i^l,\overline{y_i}^l:i\in[m]}$ are {\em the $Y$-literal vertices of $l$}, for every literal $l\in(X\cup \overline{X})$, the vertices in $\set{u_i,i:\in[m]}$ are {\em $Y$-variable vertices}, the vertices in $(Y\cup \overline{Y})$ are {\em $Y$-literal vertices}, the vertices in $ \set{C_i: i\in[k]}$ are {\em clause vertices}, and the vertices in $\set{\overline{l_i^j}: i\in[k], j\in\set{1,2,3}}$ are {\em refute-literal vertices}.
        \item $V_2 = \set{v_i:i\in[n]} \cup \set{c}$. The vertices in $\set{v_i:i\in[n]}$ are {\em $X$-variable vertices}, and $c$ is a {\em clause-choosing} vertex.
        \item The set of edges $E$ consists of the following edges.
        \begin{enumerate}
            \item $\tuple{v_0,v_i}$, for every $i\in[n]$, and $\tuple{v_0,u_1}$. That is, from the initial vertex, \pcov chooses an $X$-variable, or proceed to $u_1$ which is the initial vertex of the refute sub-graph.
            \item $\tuple{v_i,x_i}$ and $\tuple{v_i,\overline{x_i}}$, for every $i\in[n]$. That is, from the $X$-variable vertex $v_i$, \pdis assigns $\bft$ to $x_i$ by proceeding to the $X$-literal vertex $x_i$, and assigns $\bft$ to $x_i$ by proceeding to the $X$-literal vertex $\overline{x_i}$.
            \item $\tuple{l,u_1^l}$, for every $l\in X\cup \overline{X}$.
            \item $\tuple{u_i^l,y_i^l}$ and $\tuple{u_i^l,\overline{y_i}^l}$, for every $l\in X\cup \overline{X}$ and $i\in[m]$.
            \item $\tuple{y_i^l,u_{i+1}^l}$ and $\tuple{\overline{y_i}^l,u_{i+1}^l}$, for every $l\in X\cup \overline{X}$ and $i\leq m-1$.
            \item $\tuple{y_m^l,l}$ and $\tuple{\overline{y_m}^l,l}$, for every $l\in X\cup \overline{X}$.
            \item $\tuple{u_i,y_i}$ and $\tuple{u_i,\overline{y_i}}$, for every $i\in[m]$.
            \item $\tuple{y_i,u_{i+1}}$ and $\tuple{\overline{y_i},u_{i+1}}$, for every $i\leq m-1$.
            \item $\tuple{y_m,c}$ and $\tuple{\overline{y_m},c}$.
            \item $\tuple{c,C_i}$, for every $i\in[k]$.
            \item $\tuple{C_i,\overline{l_i^j}}$, for every $i\in[k]$ and $j\in\set{1,2,3}$.
            \item $\tuple{\overline{l_i^j},\overline{l_i^j}}$, for every $i\in[k]$ and $j\in\set{1,2,3}$ such that $l_i^j\in X\cup \overline{X}$.
            \item $\tuple{\overline{l_i^j},u_1}$, for every $i\in[k]$ and $j\in\set{1,2,3}$ such that $l_i^j\in Y\cup \overline{Y}$.
        \end{enumerate}
    \end{enumerate}
The set of objectives is $\beta = \set{\alpha_1,\alpha_2}\cup \set{\alpha_l:l\in X\cup \overline{X}\cup Y\cup \overline{Y}}$, where 
\begin{enumerate}
\item
$\alpha_1 = X\cup \overline{X}$, 
\item
$\alpha_2 = \set{\overline{l_i^j}:i\in[k],j\in\set{1,2,3}}$. 
\item
For every literal $l\in Y\cup \overline{Y}$, we have that $\alpha_l = \set{l}\cup \set{{l^{l'}}:l'\in X\cup \overline{X}} \cup \set{\overline{l_i^j}:i\in[k], j\in \set{1,2,3},\text{ and } \overline{l_i^j}=l}$. That is, $\alpha_l$ consist of all the vertices in both sub-graphs that correspond to the literal $l$.
\item
For every literal $l\in X\cup \overline{X}$, we have that $\alpha_l = ((X\cup \overline{X})\setminus\set{l}) \cup \set{\overline{l_i^j}:i\in[k], j\in \set{1,2,3},\text{ and } \overline{l_i^j}=l}$. That is, $\alpha_l$ consists of all of the vertices in the refute sub-graph that correspond to $l$, and all the $X$-literal vertices in the assignment sub-graph that do not correspond to $l$.
\end{enumerate}

We prove the correctness of the construction. That is, we prove that $\Phi = \bft$ iff \pdis has a disrupting strategy in $\G$. Note that the objectives $\alpha_1$ and $\alpha_2$ force \pcov to allocate one agent to each sub-graph, so we only refer to strategies for \pcov that do so, and we call the respective agents the assignment agent, and the refute agent.

For the first direction, assume $\Phi = \bft$. Consider a witness assignment $\xi$ to the variables in $X$, and let $f_2$ be a strategy for \pdis that chooses $X$-literal vertices that correspond to literals evaluated to $\bft$ in $\xi$ in the assignment sub-graph, and in the refute sub-graph chooses a clause $C_i$ all whose literals are evaluated to $\bft$ in $\xi$ and the latest assignment the refute agent chose to the variables in $Y$. We show that $f_2$ is a disrupting strategy. Consider a strategy $F_1 = \tuple{f_1^1,f_1^2}$ for \pcov, assume WLOG that $f_1^1$ is a strategy for the refute agent and $f_1^2$ is a strategy for the assignment agent. Also, let $\rho_1 = \outcome(\tuple{f_1^1,f_2})$ and $\rho_2 = \outcome(\tuple{f_1^2,f_2})$. Note that since \pdis only chooses clauses all whose literals are evaluated to $\bft$ in $\xi$ and the latest assignment to the variables in $Y$, the refute agent chooses a refute-literal vertex $\overline{l_i^j}$ such that $\overline{l_i^j}$ is evaluated to $\bff$ in this assignment.

If the play $\rho_1$ reaches a refute-literal vertex $\overline{l_i^j}$ such that $\overline{l_i^j} = l$ for some literal $l\in X\cup \overline{X}$, then $\alpha_l$ is not covered in $\outcome(\tuple{F_1,f_2})$. Indeed, since $\overline{l_i^j}\in \alpha_l$, the objective $\alpha_l$ is not satisfied in the play $\rho_1$. Also, the play in the assignment sub-graph satisfied $\alpha_l$ iff it visits finitely often $X$-literal vertices that correspond to literals $l'\neq l$, and thus visits the $X$-literal vertex $l$ infinitely often. However, $\rho_2$ visits infinitely often an $X$-literal vertex that corresponds to a literal that is evaluated to $\bft$ in $\xi$, and $l$ is evaluated to $\bff$ in $\xi$.

Otherwise, the play $\rho_1$ visits finitely often refute-literal vertices that correspond to literals in $X\cup \overline{X}$. After each time the refute agent chooses an assignment to the variables in $Y$, she is forced to visit a refute-literal vertex $\overline{l_i^j}$ such that $\rho_1$ previously visits the $Y$-literal vertex that corresponds to $l_i^j$. Thus, there exists a literal $l\in Y\cup \overline{Y}$ such that both objectives $\alpha_l$ and $\alpha_{\overline{l}}$ are not satisfied in $\rho_1$. Since $\rho_2$ satisfies at most one objective between $\alpha_l$ and $\alpha_{\overline{l}}$, we have that one of them is not covered in $\outcome(\tuple{F_1,f_2})$. Therefore, $f_2$ is a disrupting strategy.

For the second direction, assume $\Phi = \bff$. Consider a strategy $f_2$ for \pdis. Let $\xi$ be the assignment that \pdis chooses to the variables in $X$ in the assignment sub-graph, and let $\zeta$ be an assignment to the variables in $Y$ such that $\phi$ is not satisfied by $\xi$ and $\zeta$. 
We define a strategy $F_1 = \tuple{f_1^1,f_1^2}$ for \pcov such that $\beta$ is covered in $\outcome(\tuple{F_1,f_2})$, implying that $f_2$ is not a disrupting strategy.

We define $F_1$ as follows. First, the strategy $f_1^1$ for the refute agent only chooses $Y$-literal vertices that correspond to literals that are evaluated to $\bft$ in $\zeta$, and for each clause $C_i$, chooses a refute-literal vertex $\overline{l_i^j}$ such that $l_i^j$ is evaluated to $\bff$ in $\xi$ and $\zeta$, and thus $\overline{l_i^j}$ is evaluated to $\bft$ in $\xi$ and $\zeta$. 

Now, the strategy $f_1^2$ for the assignment agent is defined as follows. We distinguish between two cases. 
If $\outcome(\tuple{f_1^1,f_2})$ reaches a refute-literal vertex $\overline{l_i^j}$ such that $\overline{l_i^j}=l$ for some $l\in X\cup\overline{X}$, then we define $f_1^2$ so that it  chooses the $X$-variable vertex $x_i$ such that $l\in \set{x_i,\overline{x_i}}$. Note that since $l$ is evaluated to $\bft$ in $\xi$, the strategy $f_2$ proceeds from the $X$-variable vertex $x_i$ to the $X$-literal vertex $l$. It is not hard to see that $\beta$ is covered in $\outcome(\tuple{F_1,f_2})$. Indeed, $l\in X \cup \overline{X}$, and so for every literal $l'\in Y\cup \overline{Y} \cup ((X\cup \overline{X})\setminus \set{l})$, we have that $\alpha_{l'}$ is satisfied in the play in the refute sub-graph, and since the play in the assignment sub-graph visits finitely often $X$-literal vertices that correspond to literals in $((X\cup \overline{X})\setminus \set{l}$, we have that $\alpha_l$ is satisfied in the play in the assignment sub-graph. 

Otherwise, namely if $\outcome(\tuple{f_1^1,f_2})$ avoids refute-literal vertices that correspond to literals in $X\cup \overline{X}$, then we define $f_1^2$ so that it chooses some $X$-variable vertex, and then only chooses $Y$-literal vertices of the appropriate literal that correspond to literals evaluated to $\bff$ in $\zeta$. Note that then the play in the refute sub-graph visits finitely often vertices that correspond to literals in $Y\cup \overline{Y}$ that are evaluated to $\bff$ in $\zeta$, and the play in the assignment sub-graph visits finitely often vertices that correspond to literals in $Y\cup \overline{Y}$ that are evaluated to $\bft$ in $\zeta$. Also, for every literal $l\in X\cup \overline{X}$ we have that $\alpha_l$ is satisfied in the play in the refute sub-graph, as it visits finitely often refute-literal vertices that correspond to $l$. Thus, $\beta$ is covered in $\outcome(\tuple{F_1,f_2})$, and we are done.  
\end{proof}

\subsection{CGs with a fixed number of objectives}
\label{CGs with a fixed number of objectives}
In this section we continue the parameterized-complexity analysis of  \buchi and co-\buchi CGs and analyze the complexity of CGs when the number of underlying objectives is fixed. We show that both problems are PTIME-complete, regardless of whether the number of agents is fixed too. 

Consider a CG $\G = \tuple{G, k, \beta}$. As detailed in Section~\ref{k easy}, when $k \geq |\beta|$, both the coverage and disruption in $\G$ correspond to $|\beta|$ two-player games with a single objective (see Lemma~\ref{P1 k geq m wins iff she m wins}). Since two-player \buchi and co-\buchi games are PTIME-complete \cite{VW86a,Imm81}, we have the following:

\begin{thm}
\label{fixed beta unfixed k coverage}
The coverage and disruption problems for \buchi or co-\buchi CGs with a fixed $|\beta|$ and $k \geq |\beta|$ are PTIME-complete.
\end{thm}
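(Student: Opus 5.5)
The plan is to obtain both bounds directly from Lemma~\ref{P1 k geq m wins iff she m wins} and the classical results on single-objective two-player games.

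\emph{Upper bound.} Given a $\gamma$-CG $\G=\tuple{G,k,\beta}$ with $k\geq|\beta|$, the lemma gives two equivalences:
\begin{itemize}
\item \pcov has a covering strategy iff, for every $\alpha_i\in\beta$, \pone wins the two-player $\gamma$-game $\tuple{G,\alpha_i}$;
\item \pdis has a disrupting strategy iff there is some $\alpha_i\in\beta$ for which \ptwo wins $\tuple{G,\alpha_i}$.
\end{itemize}
The algorithm therefore solves $|\beta|$ \buchi or co-\buchi games, each in polynomial time~\cite{VW86a}, and combines the answers with a conjunction or a disjunction. Since $|\beta|$ is fixed (and in any case at most the input size), this is polynomial. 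Note that $k$ plays no role beyond the hypothesis $k\geq|\beta|$, so the bound holds whether or not $k$ is fixed.

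\emph{Lower bound.} I would reduce from deciding two-player \buchi (respectively co-\buchi) games with a single objective, which is PTIME-hard~\cite{Imm81}; for instance, alternating reachability embeds by making targets self-looped sinks. Given $\tuple{G,\alpha}$, output the CG $\tuple{G,1,\set{\alpha}}$. Here $|\beta|=1$ is fixed and $k=1\geq|\beta|$. By the lemma, \pcov covers iff \pone wins $\tuple{G,\alpha}$, and \pdis disrupts iff \ptwo wins $\tuple{G,\alpha}$. Since two-player games are determined, the latter equals the complement of \pone winning, and PTIME is closed under complement. Hence both problems are PTIME-hard.

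The reduction is trivially logspace. If one insists on a specific fixed value of $|\beta|$ greater than one, pad $\beta$ with trivially satisfiable objectives and set $k=|\beta|$. For \buchi, a trivially satisfiable objective is $V$; for co-\buchi it is $\emptyset$. These extra objectives are won by both players' strategies irrelevantly, so the equivalences still go through.

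The only delicate point I expect is justifying the PTIME-hardness citation for \buchi and co-\buchi games under logspace reductions. I would handle it via the standard alternating-reachability (AGAP) hardness, turning targets into self-looped sinks, with the \buchi set being the targets and the co-\buchi set being the non-targets.
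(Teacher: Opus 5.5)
Your proposal is correct and takes essentially the paper's route: the upper bound reduces via Lemma~\ref{P1 k geq m wins iff she m wins} to $|\beta|$ single-objective \buchi or co-\buchi games, and the lower bound is the special case $k=|\beta|=1$ combined with PTIME-hardness of such games, with your determinacy-and-complement step for disruption making explicit what the paper leaves implicit. The only slip is in your optional padding remark: since $\beta$ is a set, $V$ (resp.\ $\emptyset$) can be added only once, so larger fixed values of $|\beta|$ would need distinct trivially satisfiable objectives (e.g., layers of edge-subdivision vertices for \buchi, fresh unreachable self-looped vertices for co-\buchi), but the $|\beta|=1$ instance already suffices.
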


 Accordingly, for the rest of the section we assume that $k < |\beta|$, and so fixing $|\beta|$ also fixes $k$. We start with deciding whether \pcov has a covering strategy.

\begin{thm}
\label{fixed beta coverage}
The coverage problems for \buchi or co-\buchi CGs with fixed numbers of agents and underlying objectives are PTIME-complete.
\end{thm}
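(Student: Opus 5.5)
We prove membership in PTIME by making the nondeterministic procedure in the proof of Theorem~\ref{decide cover buchi upper} deterministic, and hardness by reduction from ordinary two-player games. For hardness, take $k=1$ and $|\beta|=1$. The CG $\tuple{G,1,\set{\alpha}}$ is then exactly the two-player $\gamma$-game $\tuple{G,\alpha}$, and deciding \buchi or co-\buchi games is PTIME-hard \cite{Imm81,VW86a}. The reduction is trivially logspace.

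For the upper bound, we rely on Theorem~\ref{covering str characterization}, which reduces the question to computing the set $F$ of forks. We compute, by dynamic programming over pairs $(k',\beta')$ with $k'\leq k$ and $\beta'\subseteq\beta$, the set $W(k',\beta')$ of vertices $v$ such that \pcov wins $\tuple{G^v,k',\beta'}$. Since $|\beta|$ and $k$ are fixed, there are only constantly many such pairs, and we process them in increasing order of $k'$.

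For $k'=1$, $W(1,\beta')$ is the winning region of \pone in the All-$\gamma$ game $\tuple{G,\beta'}$, computable in polynomial time \cite{VW86a}. For $k'\geq 2$, we first determine the forks for $(k',\beta')$. By Lemma~\ref{fork implies decomposable}, a fork is $(k',l)$-decomposable for some $l\geq 2$. Conversely, decomposability at $u$ with $l\geq 2$ can be realized from $u$ by sending the groups to possibly the same successor, so it yields a covering strategy from $u$.

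There is a subtle point here. Theorem~\ref{covering str characterization} and the algorithm of Theorem~\ref{decide cover buchi upper} only use the property that from a guessed vertex $\beta'$ is decomposable into at least two parts. So we define $F(k',\beta')$ as the set of vertices $u\in V_1$ for which there exist a partition $\beta'_1,\ldots,\beta'_l$ of $\beta'$ and a partition $A_1,\ldots,A_l$ of $[k']$ into nonempty sets, with $l\geq 2$, such that $u\in W(|A_i|,\beta'_i)$ for all $i\in[l]$. Each $|A_i|<k'$, so these regions have already been computed. The number of partition pairs is bounded by a constant depending only on $k$ and $|\beta|$.

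The correctness argument of Theorem~\ref{decide cover buchi upper} shows that using this set $U=F(k',\beta')$ yields the right answer. Vertices in $U$ are winning, and guessing exactly these vertices witnesses acceptance whenever \pcov wins. We then compute, in polynomial time, the attractor-complement $G'$ of vertices from which \pdis can avoid $U$. We set $W(k',\beta')$ to be the vertices outside $G'$ together with the vertices of $G'$ from which \pone wins the All-$\gamma$ game $\tuple{G',\beta'}$.

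The main obstacle is justifying that taking the maximal candidate set $U$, rather than the exact set of forks, is sound. The approach is to show monotonicity: enlarging $U$ by winning vertices only shrinks $G'$, and by Theorem~\ref{covering str characterization} the answer is then unchanged. We would argue this exactly as in the second direction of the correctness proof of Theorem~\ref{decide cover buchi upper}. Every vertex of $U$ admits a covering strategy, so leaving $G'$ toward $U$ is harmless, while $G_{avoid}$ is a sub-graph of $G'$. The overall running time is a constant number of polynomial-time game solutions, which gives PTIME.
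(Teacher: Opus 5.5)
Your plan is the paper's: hardness from the case $k=|\beta|=1$, and membership by determinizing the NTM of Theorem~\ref{decide cover buchi upper}. Your dynamic program over the constantly many pairs $(k',\beta')$ fixes $U$ canonically as the set of all vertices decomposable with $l\geq 2$. This is a more explicit version of the paper's one-line ``replace the guesses by enumeration''. The guess of $U\subseteq V$ itself cannot be enumerated, so a canonical choice is exactly what is needed. You are also right that this $U$ may be strictly larger than the set $F$ of forks.

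However, your justification of this step, which you correctly single out as the crux, has the inclusion backwards. Your $U$ satisfies $F\subseteq U\subseteq W(k',\beta')$. The first inclusion holds by Lemma~\ref{fork implies decomposable}, the second because a decomposition yields a covering strategy. Since avoiding $U$ implies avoiding $F$, you get $G'\subseteq G_{avoid}$. The inclusion $G_{avoid}\subseteq G'$ fails in general. Take a cycle $u\to w\to u$ of \pcov vertices, each with a single successor, $k=2$, and \buchi objectives $\beta=\set{\set{u},\set{w}}$. No vertex is a fork, so $G_{avoid}=G$. Yet $u$ and $w$ are decomposable (one agent per objective), so $U=\set{u,w}$ and $G'$ is empty.

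The argument you cite from Theorem~\ref{decide cover buchi upper} uses $G_{avoid}\subseteq G'$, which is valid only when $U\subseteq F$. It also addresses only ``accept implies win''; the converse is merely asserted in your proposal. For that converse, your remark that the characterization ``only uses'' decomposability is not accurate, since Theorem~\ref{covering str characterization} is about genuine forks.

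The repair uses one inclusion per direction.

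For ``accept implies win'', only $U\subseteq W(k',\beta')$ is needed, and no comparison with $G_{avoid}$. All agents move together. Either they force a visit to $U$ and then play the decomposition there. Or they play the single-agent All-$\gamma$ strategy in $G'$ and switch to the former as soon as \pdis leaves $G'$; prefix-independence of the objectives makes the switch harmless.

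For ``win implies accept'', use $F\subseteq U$. A winning $v\in G'$ lies in $V_{avoid}$, so by Theorem~\ref{covering str characterization} one agent wins the All-$\gamma$ game in $G_{avoid}$ from $v$. The set $G'$ is a trap for \pcov: her vertices in $G'$ have all their successors in $G'$, and restricting to $G'$ only removes moves of \pdis. Hence the same strategy wins the All-$\gamma$ game in $G'$.

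With this, each $W(k',\beta')$ is computed correctly. The rest of your argument then gives PTIME: there are constantly many pairs and partitions, and each stage is a polynomial-time attractor or All-$\gamma$ computation.
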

\begin{proof}
Consider a CG $\G = \tuple{G,k,\beta}$. Recall the upper bound proof of Theorem~\ref{decide cover buchi upper}, which describes an NTM that runs in polynomial space. The nondeterminism is used to guess forks, where each guess involves guessing a partition of $\beta$ among the $k$ agents and then recursively checking that \pcov has covering strategies in the resulting CGs. When both $\beta$ and $k$ are fixed, the number of possible partitions is constant, and the nondeterministic guesses can be replaced by a deterministic enumeration of all options. This yields a DTM that also runs in polynomial space.

Furthermore, as shown in Theorem~\ref{2agent cover buchi np hard}, fixing $k$ bounds the recursion depth by a constant. Thus, when both $k$ and $|\beta|$ are fixed, the entire procedure runs in polynomial time, and the overall complexity is PTIME.

Hardness in PTIME follows from the PTIME-hardness of \buchi and co-\buchi games \cite{Imm81}, which correspond to the special case with $k=|\beta|=1$. 
\end{proof}

We continue to the disruption problem.

\begin{thm}
\label{fixed beta disruption}
The disruption problems for \buchi or co-\buchi CGs with fixed numbers of agents and  underlying objectives are PTIME-complete.
\end{thm}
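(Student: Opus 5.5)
We need PTIME for the disruption problem when $k$ and $|\beta|$ are fixed (with $k<|\beta|$); hardness again comes from $k=|\beta|=1$, i.e., plain \buchi and co-\buchi games \cite{Imm81}. For the upper bound, the plan is to reduce disruption to finitely many standard two-player games whose objectives are Boolean combinations of a fixed number of \buchi conditions, hence solvable in polynomial time.

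The starting point is Lemma~\ref{disrupting strategy iff k-wise intersecting}. A strategy $f_2$ is disrupting iff $\set{\beta\setminus\delta:\delta\in\Delta_{f_2}}$ is $k$-wise intersecting. Since $|\beta|$ is fixed, there are only constantly many candidate families $\Delta\subseteq 2^\beta$; the number is at most $2^{2^{|\beta|}}$. For each such $\Delta$ whose complements $\set{\beta\setminus\delta:\delta\in\Delta}$ are $k$-wise intersecting (a constant-time check), we ask whether \pdis has a strategy $f_2$ such that every $f_1$ yields a play $\rho$ with $\sat(\rho,\beta)\subseteq\delta$ for some $\delta\in\Delta$. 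This is the two-player game in which \ptwo's objective is the finite disjunction over $\delta\in\Delta$ of ``$\rho$ satisfies none of $\beta\setminus\delta$''.

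We then claim that \pdis has a disrupting strategy iff some $k$-wise intersecting $\Delta$ admits such a winning strategy.

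\emph{Necessity.} Take $\Delta=\Delta_{f_2}$ and apply Lemma~\ref{Delta f2}.

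\emph{Sufficiency.} Let $f_2$ win for $\Delta$. Every $\delta'\in\Delta_{f_2}$ is realised by some $f_1$, so $\delta'\subseteq\delta$ for some $\delta\in\Delta$. Hence $\beta\setminus\delta'\supseteq\beta\setminus\delta$, and $k$-wise intersection transfers to $\Delta_{f_2}$, exactly as in the proof of Lemma~\ref{disrupting iff fair}. Lemma~\ref{disrupting strategy iff k-wise intersecting} then concludes.

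The remaining step, which I expect to be the main technical obstacle, is solving each such game in polynomial time. The objective is a fixed-size Boolean combination of \buchi/co-\buchi conditions over the sets in $\beta$, i.e., a Muller/Emerson--Lei condition whose number of ``colors'' is $|\beta|$, a constant.

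One route is to convert it to a parity (or Rabin/Streett) game via a LAR-style memory structure over the $|\beta|$ colors. The memory size and number of priorities depend only on $|\beta|$, so the product has size $O(|G|)$ times a constant and a constant number of priorities. Parity games with a constant number of priorities are solvable in polynomial time.

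Alternatively, write the \ptwo objective as a disjunction over $\delta\in\Delta$ of AllC conditions. For \buchi $\beta$, each disjunct ``avoid $\cup(\beta\setminus\delta)$'' is a single co-\buchi condition, giving an ExistsC game, dual to AllB and hence PTIME. For co-\buchi $\beta$, each disjunct is an AllB condition, and the union of a constant number of AllB conditions is a Streett-like condition with constantly many pairs, again PTIME.

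Either route gives a constant number of polynomial-time game solutions, yielding PTIME overall.
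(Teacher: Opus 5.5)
Your proposal is correct and takes essentially the paper's approach: enumerate the constantly many families $\Delta\subseteq 2^\beta$ that satisfy the $k$-wise condition, and for each one solve the induced ExistsC game (B\"uchi case) or superset game (co-B\"uchi case); you also spell out the necessity and sufficiency argument via Lemmas~\ref{Delta f2} and~\ref{disrupting strategy iff k-wise intersecting}, which the paper leaves implicit. The only difference is cosmetic: where you call the disjunction of AllB conditions ``Streett-like'' or fall back on an LAR reduction, the paper cites \cite{HD05} to translate the superset objective into an equivalent AllB objective, and that AllB objective is exactly what you get by distributing the disjunction over the constantly many conjunctions.
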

\begin{proof}
Consider a CG $\G = \tuple{G,k,\beta}$. Recall that for a strategy $f_2$ for \pdis in $G$, the set $\Delta_{f_2} \subseteq 2^{\beta}$ consists of the maximal subsets $\beta' \subseteq \beta$ such that there exists a strategy $f_1$ for \pcov with which $\sat(\outcome(\tuple{f_1,f_2}), \beta) = \beta'$. Also recall that, by Lemma~\ref{disrupting strategy Delta f2}, a strategy $f_2$ for \pdis is a disrupting strategy in $\G$ iff for every $k$ sets $\delta_1,\ldots,\delta_k \in \Delta_{f_2}$, we have $\bigcup_{i\in[k]} \delta_i \neq \beta$.

Based on this characterization, we can construct polynomial-time algorithms for the disruption problems as follows. The algorithms iterate over sets $\Delta \subseteq 2^\beta$ such that for all $\delta_1, \ldots, \delta_k \in \Delta$, the union $\bigcup_{i \in [k]} \delta_i$ is a strict subset of $\beta$, and checks whether there exists a strategy $f_2$ for \pdis such that $\Delta_{f_2} \subseteq \Delta$.

When the underlying objectives are \buchi, each such set $\Delta$ induces the ExistsC objective $\set{\cup(\beta \setminus \delta) : \delta \in \Delta}$, and the algorithm checks  whether \ptwo has a winning strategy for this objective. In the case of co-\buchi objectives, the set $\Delta$ induces the superset objective $\set{(\beta \setminus \delta) : \delta \in \Delta}$, which the algorithm translates to an equivalent AllB objective and checks whether \ptwo has a winning strategy for it.

Since $\beta$ is fixed, the number of candidate sets $\Delta$ and the size of the induced objectives are fixed, and so all steps can be performed in polynomial time.
\end{proof}

\section{Discussion}
\label{disc}
We introduced and studied coverage games, extending multi-agent planning to adversarial factors that go beyond those studied in existing frameworks. 

Below we discuss variants of CGs. Clearly, all classical extensions to the underlying two-player game (e.g., concurrency, probability, partial visibility, etc.) and to the objectives (e.g., richer winning conditions, weighted objectives, mixing objectives of different types, etc.), as well as classical extensions from planning (e.g., dynamic objectives, optimality of agents and their resources, etc.) are interesting also in the setting of CGs. For example, handling {\em reachability\/} objectives, it is interesting to search for strategies of \pcov that reach the objectives via short paths, and one may optimize the longest or the average path. 

Although the extension to richer objectives is an obvious direction for future work, we note that the study of \buchi and co-\buchi better highlights the differences between CGs and standard games, as the complexity is not dominated by challenges that have to do with the objective. For example, by Footnote~2, the extension of the general M\"{u}ller objective to CGs involves no cost, whereas for the \buchi and co-\buchi objectives, we exhibit a variety of complexity classes, obtained for the different variants of CGs.

As our results show, the transition from multiple objectives and a single agent to CGs adds the challenge of decomposing the set $\beta$ of objectives among the agents. For example, while multiple co-\buchi objectives can be merged, making them easier than multiple \buchi objectives, 
our results show that 
co-\buchi CGs are not easier 
(in fact, for the case of a fixed number of agents, even harder) 
than \buchi CGs. It is interesting to study how the need to decompose the objectives affects the problem in richer settings.  

More interesting in our context are extensions that have to do with the operation of several agents: communication among the agents and the ability of \pdis to also use different agents. Recall that in the CGs studied here, the strategy of each agent may depend on the history of her interaction with \pdis, but is independent of the interaction of the other agents.
Settings in which the agents communicate with each other corresponds to CGs in which their strategies depend on the history of all interactions. 

Clearly, communication can help \pcov. For example, if two drones are tasked with patrolling an area and Drone 1 is drifting northward at a certain location, knowing this could prompt Drone 2 to avoid that location or to fly southward. 

Also, beyond full information about the other agents, various applications induce interesting special cases, such as visibility of a subset of the agents, of agents in some radius, of their current location only, and so on. 
 
 As for a richer disruption, recall that in the CGs studied here, all the agents of \pcov interact with the same strategy of \pdis. It is interesting to consider settings in which \pdis also operates a number $l$ of agents. Then, a strategy for \pdis is a vector $F_2=\zug{f_2^1,\ldots,f_2^l}$ of strategies, and \pcov should cover all the objectives in $\beta$ for every $F_2$ and for every possible pairing of each of her agents with those of \pdis. 

This extension corresponds to settings in which different agents may face different responses, even after the exact same interaction. Clearly, this can help \pdis. For example, 
if a system aims to ensure that some road remains uncongested, it may limit the number of cars in each zone by directing different cars to different directions. 
From a technical point of view, the setting involves additional challenges. In particular, decomposition now has an additional parameter. For example, it can be shown that if $l=k$, then \pcov can cover $\beta$ iff there is an a-priori decomposition of $\beta$ among the agents.

\bibliographystyle{alphaurl}
\bibliography{ok}
\end{document}